\newcommand{\longversion}[1]{#1}
\newcommand{\shortversion}[1]{}
 \renewenvironment{proof}{\noindent{\sc Proof:}}{ \hfill $\square$\\ }
\newdimen\prevdp
\def\leftlabel#1{\noalign{\prevdp=\prevdepth
   \kern-\prevdp\nointerlineskip\vbox to0pt{\vss\hbox{\ensuremath{#1}}}\kern\prevdp}}
\newcommand{\NPC}{\ensuremath{\mathsf{NP}}\text{-complete}\xspace}
\newcommand{\NPH}{\ensuremath{\mathsf{NP}}\text{-hard}\xspace}
\newcommand{\suc}{\ensuremath{\succ}\xspace}
\newcommand{\WTH}{\ensuremath{\mathsf{W[2]}}-hard\xspace}
\newcommand{\FPT}{\ensuremath{\mathsf{FPT}}\xspace}
\newcommand{\SBON}{{\sc Shift Bribery over Social Network}\xspace}
\newcommand{\DSP}{{\sc Dominating set}\xspace}
\newcommand{\ktDSP}{{($k$,$t$)-\sc Dominating set}\xspace}
\let\oldlambda\lambda
\renewcommand{\lambda}{\ensuremath{\oldlambda}\xspace}
\let\oldalpha\alpha
\renewcommand{\alpha}{\ensuremath{\oldalpha}\xspace}
\let\oldDelta\Delta
\renewcommand{\Delta}{\ensuremath{\oldDelta}\xspace}
\newcommand{\tw}{\text{tw}\xspace}
\newcommand{\yes}{{\tt yes}\xspace}
\renewcommand{\AA}{\ensuremath{\mathcal A}\xspace}
\newcommand{\CC}{\ensuremath{\mathcal C}\xspace}
\newcommand{\FF}{\ensuremath{\mathcal F}\xspace}
\newcommand{\GG}{\ensuremath{\mathcal G}\xspace}
\newcommand{\LL}{\ensuremath{\mathcal L}\xspace}
\newcommand{\OO}{\ensuremath{\mathcal O}\xspace}
\newcommand{\PP}{\ensuremath{\mathcal P}\xspace}
\newcommand{\QQ}{\ensuremath{\mathcal Q}\xspace}
\renewcommand{\SS}{\ensuremath{\mathcal S}\xspace}
\newcommand{\UU}{\ensuremath{\mathcal U}\xspace}
\newcommand{\VV}{\ensuremath{\mathcal V}\xspace}
\newcommand{\XX}{\ensuremath{\mathcal X}\xspace}
\newcommand{\ppp}{\ensuremath{\mathfrak p}\xspace}
\newcommand{\sss}{\ensuremath{\mathfrak s}\xspace}
\newcommand{\RB}{\ensuremath{\mathbb R^+}\xspace}
\newtheorem{observation}{\sc Observation}[section]
\newtheorem{theorem}{\sc Theorem}[section]
\newtheorem{lemma}{\sc Lemma}[section]
\newtheorem{corollary}{\sc Corollary}[section]
\newtheorem{definition}{\sc Definition}[section]
\newtheorem{claim}{\sc Claim}[section]
\newtheorem{probdefinition}{\sc Problem Definition}[section]
\newcommand{\eps}{\ensuremath{\varepsilon}\xspace}
\renewcommand{\epsilon}{\eps}
\newcommand{\ignore}[1]{}
\newcommand{\pr}{\ensuremath{\prime}}
\renewcommand{\leq}{\leqslant}
\renewcommand{\geq}{\geqslant}
\renewcommand{\ge}{\geqslant}
\renewcommand{\le}{\leqslant}
\crefname{theorem}{Theorem}{\bf Theorems}
\crefname{observation}{Observation}{\bf Observations}
\crefname{lemma}{Lemma}{\bf Lemmata}
\crefname{corollary}{Corollary}{\bf Corollaries}
\crefname{proposition}{Proposition}{\bf Propositions}
\crefname{definition}{Definition}{\bf Definitions}
\crefname{claim}{Claim}{\bf Claims}
\crefname{reductionrule}{Reduction rule}{\bf Reduction rules}
\crefname{probdefinition}{Problem Definition}{Problem Definitions}
\newenvironment{proofsketch}{%
  \par
  \pushQED{\qed}%
  \normalfont \topsep6\p@\@plus6\p@\relax
  \trivlist
  \item[\@proofindent\hskip\labelsep
        {\scshape Proof Sketch\@addpunct{.}}]\ignorespaces
}{%
  \popQED\endtrivlist\@endpefalse
}
\title[AAMAS-2026 Formatting Instructions]{Shift Bribery over Social Networks}
\author{Ashlesha Hota}
\affiliation{
  \institution{IIT Kharagpur}
  \city{Kharagpur}
  \country{India}}
\email{ashleshahota@gmail.com}
\author{Susobhan Bandopadhyay}
\affiliation{
  \institution{TIFR Mumbai}
  \city{Mumbai}
  \country{India}}
\email{susobhanbandopadhyay@gmail.com}
\author{Palash Dey}
\affiliation{
  \institution{IIT Kharagpur}
  \city{Kharagpur}
  \country{India}}
\email{palash.dey@cse.iitkgp.ac.in}
\begin{abstract}
In shift bribery, a briber seeks to promote his preferred candidate by paying voters to raise their ranking. Classical models of shift bribery assume voters act independently, overlooking the role of social influence. However, in reality, individuals are social beings and are often represented as part of a social network, where bribed voters may influence their neighbors, thereby amplifying the effect of persuasion. We study \SBON, where voters are modeled as nodes in a directed weighted graph, and arcs represent social influence between them. In this setting, bribery is not confined to directly targeted voters—its effects can propagate through the network, influencing neighbors and amplifying persuasion. Given a budget and individual cost functions for shifting each voter's preference toward a designated candidate, the goal is to determine whether a shift strategy exists—within budget—that ensures the preferred candidate wins after both direct and network-propagated influence takes effect.

We show that the problem is \NPC even with two candidates and unit costs, and \WTH when parameterized by budget or maximum degree. On the positive side, we design polynomial-time algorithms for complete graphs under plurality and majority rules and path graphs for uniform edge weights, linear-time algorithms for transitive tournaments for two candidates, linear cost functions and uniform arc weights, and pseudo-polynomial algorithms for cluster graphs. We further prove the existence of fixed-parameter tractable (\FPT) algorithms with treewidth as parameter for two candidates, linear cost functions and uniform arc weights and pseudo \FPT with cluster vertex deletion number for two candidates and uniform arc weights. 
Together, these results give a detailed complexity landscape for shift bribery in social networks.

\end{abstract}
\keywords{Bribery, Social Network, \FPT, Graph Class}
\newcommand{\BibTeX}{\rm B\kern-.05em{\sc i\kern-.025em b}\kern-.08em\TeX}
\begin{document}


\pagestyle{fancy}
\fancyhead{}


\maketitle 

\section{Introduction}

Elections are a fundamental mechanism for aggregating the preferences of a group of individuals into a collective decision. They are used in contexts ranging from political elections and committee decisions to multi-agent systems and recommender platforms. 
The design and analysis of elections has long been a central theme in social choice theory, 
and more recently in computational social choice, where one studies the algorithmic aspects 
of voting and the vulnerabilities of election systems to strategic behavior.  

One of the most studied forms of strategic behavior is bribery. In bribery problems, an external agent sometimes called a briber attempts to change the election outcome by paying selected voters to alter their preferences. The cost of a bribery depends on the voting rule and the way in which preferences are changed. A particularly natural and well-studied variant is Shift Bribery. Here the briber seeks to promote a distinguished candidate $c$ by moving them higher in some voters' preference orders. The cost of bribing a voter is proportional to the distance that the preferred candidate is shifted. This model is appealing as it abstracts realistic campaign strategies: a candidate cannot completely rewrite voters' rankings, but may be able to persuade them to view her more favorably and rank her slightly higher. Shift bribery has been shown to be computationally challenging in many settings, but also admits tractable algorithms under certain parameterizations.  

However, the classical model of shift bribery treats voters as independent agents. Each bribed voter changes their ranking in isolation, and the effect of a bribery stops there. This assumption is unrealistic in many real-world settings. In practice, individuals are socially embedded: they interact with colleagues, friends, and neighbors, and these interactions shape their opinions and choices. A voter who changes her preference may influence her peers to reconsider their own preferences, even if those peers were not directly targeted by the briber. For example, in political campaigns, convincing a few well-connected individuals may have cascading effects in their communities. In marketing, targeting influential customers may lead to broader adoption of a product through word-of-mouth.  

To capture such phenomena, it is natural to study bribery \emph{over social networks}. 
In this setting, the set of voters is represented by the vertices of a graph, with edges modeling social ties and possibly weighted by their strength. When a voter is bribed to shift the preferred candidate forward, this shift can partially propagate through her neighbors in the network. Thus, bribing a small number of carefully chosen individuals can have an impact far beyond those voters themselves. This network-aware perspective combines ideas from classical bribery models with those from influence maximization and diffusion processes in networks.  

Studying bribery over networks is important both from a theoretical and a practical 
perspective. It also offers a richer and more realistic abstraction of how persuasion and campaigning work in social contexts. Our work initiates a systematic exploration of this problem, which we call \SBON, and provides the foundations for understanding how social influence interacts with strategic interference in elections.

\subsection{Contributions}\label{sec:contributions}

We provide a comprehensive algorithmic and complexity-theoretic study of the \SBON problem, positioning it within the broader landscape of combinatorial optimization problems on graphs. Our main contributions can be summarized as follows.
\begin{enumerate}
    \item \textbf{Complexity Analysis for General Graphs.} We show that \SBON is \NPC for two candidates and identity cost functions on general undirected graphs. Moreover, we prove \WTH hardness when parameterized by either the bribery budget or the maximum degree of the graph, even for connected graphs with two candidates, unit costs, and uniform edge weights (\Cref{w2hbudget}, \Cref{w2hdegree}). We further extend the hardness to bipartite and directed acyclic graphs via a parameterized reduction from \textsc{Set Cover}.

    \item \textbf{Polynomial-Time Algorithms for Special Graph Classes.} For complete graphs, we show that \SBON admits polynomial-time algorithms under majority and plurality rules with linear cost functions and uniform edge weights. We show a pseudo poly.  time algorithm for uni-directed paths. On transitive tournaments, we design a linear-time algorithm for the two-candidate case with uniform costs and unit edge weights.
    For cluster graphs we design pseudo-polynomial time algorithm.

    
    \item \textbf{Fixed-Parameter Tractability.} We prove that \SBON is FPT when parameterized by the number of additional supporters required to reach a majority, via a reduction to the classical \ktDSP problem. We also show pseudo-\FPT algorithms when parameterized by the cluster vertex deletion number, demonstrating tractability on graph classes close to cluster graphs. We further develop a dynamic programming algorithm over a nice tree decomposition of the underlying graph with bounded treewidth, yielding an \FPT algorithm for \SBON for two candidates, linear cost functions and uniform edge weights.

\end{enumerate}

Overall, our work provides a comprehensive landscape of the computational complexity of \SBON across graph topologies and parameters, identifying both hardness boundaries and tractable cases that can inform practical algorithms in networked election settings. \Cref{tab:parameterized-hardness} and \Cref{tab:complexity-analysis} summarize our results.

\begin{table*}[t]
\centering
\caption{Parameterized Complexity of \SBON with two candidates, uniform edge weights and unit cost.}
\begin{tabular}{|>{\centering\arraybackslash}p{0.15\linewidth}|>{\centering\arraybackslash}p{0.1\linewidth}|>{\centering\arraybackslash}p{0.15\linewidth}|>{\centering\arraybackslash}p{0.15\linewidth}|>{\centering\arraybackslash}p{0.1\linewidth}|>{\centering\arraybackslash}p{0.1\linewidth}|>{\centering\arraybackslash}p{0.1\linewidth}|}
\hline
\multicolumn{3}{|c|}{\textbf{Election Parameters}} & 
\multicolumn{4}{|c|}{\textbf{Graph Parameters}} \\
\hline
\#candidates & \#voters & Budget & 
Degree & Treewidth & CVD & FVS \\
\hline
\WTH (\Cref{candidates-hard})& \FPT (\Cref{affected-voters-fpt}) & \WTH (\Cref{w2hbudget}) & \WTH (\Cref{w2hdegree}) & \FPT (\Cref{twproof})  & \FPT (\Cref{cvd-fpt}) & \FPT (\Cref{fvs})  \\
\hline
\end{tabular}
\label{tab:parameterized-hardness}
\end{table*}

\begin{table}[t]
\centering
\caption{Complexity results for \SBON. All instances assume edge weights $1$ and identity cost functions, except for entries marked with $^\star$ (arbitrary edge weights) and $^\dagger$ (linear cost functions). Here $r$ denotes number of cliques  in the input graph and $b$ denotes the budget.}
\begin{tabular}{|c|c|c|}
\hline
\#\textbf{Candidates} & \textbf{Graph class} & \textbf{Complexity Status} \\
\hline
$m=2$ & complete graph$^\star$ & \NPC \\
\hline
$m=2$ & connected graph & \NPC \\
\hline
$m=2$ & bipartite & \NPC \\
\hline
$m=2$ & transitive tournament &  $\OO(n)$ algorithm  \\
\hline
arbitrary $m$ & clique$^\dagger$ & $\OO(n \log n)$ algorithm  \\
\hline
$m=2$ & cluster graph & $O(r\cdot b)$ algorithm\\
\hline
arbitrary $m$ & directed path$^\dagger$ & $\OO(n^2 \cdot b)$ algorithm \\
\hline
\end{tabular}
\label{tab:complexity-analysis}
\end{table}

\subsection{Related Work}
The study of manipulative actions in elections, including bribery and control, is a central theme in computational social choice \cite{faliszewski2016control}. 
Bribery was formally introduced by Faliszewski et al.~\cite{faliszewski2009hard}, modeling situations where an external agent influences voters’ preferences within a fixed budget. 
Among the variants of bribery, \emph{Shift Bribery}---introduced by Elkind et al.~\cite{elkind2009swap}---has received considerable attention as it captures campaign management scenarios where a preferred candidate is promoted in the rankings by shifting them forward.

Early complexity results demonstrated that Constructive Shift Bribery is \NPH for rules such as Borda, Copeland, and Maximin, but polynomial-time solvable for $k$-Approval and Bucklin rules \cite{bredereck2016prices, elkind2009swap}. 
Approximation algorithms have also been proposed: Elkind and Faliszewski~\cite{DBLP:conf/wine/ElkindF10} provided a $2$-approximation for scoring rules, later extended to polynomial-time approximation schemes by Faliszewski et al.~\cite{elkind2009swap, faliszewski2021approximation}.


Further extensions include analyzing bribery in \emph{iterative voting systems}, where candidates are eliminated in rounds. 
Maushagen et al.~\cite{maushagen2018iterative} proved that both constructive and destructive shift bribery are \NPC for prominent iterative rules such as Hare, Coombs, Baldwin, and Nanson. 
Their results were recently extended and refined in a journal version \cite{maushagen2022iterative}, establishing hardness across a broader range of elimination-based systems.

Shift Bribery has also been studied in the context of \emph{multiwinner elections}. 
Bredereck et al.~\cite{bredereck2021committee} investigated rules such as SNTV, Bloc, $k$-Borda, and Chamberlin--Courant, showing that shift bribery is typically harder in multiwinner settings than in single-winner cases. 
This result emphasizes the increased complexity when the goal is to influence committee selections rather than individual winners.

Parameterized complexity analysis has provided further insights. 
Bredereck et al.~\cite{bredereck2016prices} demonstrated that the complexity of Shift Bribery depends heavily on the chosen parameterization and the class of price functions. 
For example, the problem is \WTH when parameterized by the number of affected voters for Borda, Maximin, and Copeland, but becomes fixed-parameter tractable for parameters such as the number of unit shifts under certain rules.

\section{Preliminaries and Problem Definitions}\label{sec:prelim}

An election is a pair $(\CC, \VV)$, where $\CC = \{c_1,\ldots,c_m\}$ is the set of candidates and $\VV = \{v_1 , \ldots, v_n\}$ is the set of voters. Unless stated otherwise, we use $n$ and $m$ to denote the number of voters and candidates, respectively. Each voter $v_i$ has a 
preference order (vote) $\succ_i$, which is a linear order over $\CC$.
We denote the set of all complete orders over $\CC$ by $\LL(\CC)$. A list of $n$ preference orders $\{\succ_1, \succ_2, \ldots,\succ_n\} \in \LL(\CC)^n$ is called an \emph{$n$-voter preference profile}. We denote the $i^\text{th}$ preference order of a preference profile $\PP$ by $\succ_i^\PP$. 

A map $r : \LL(\CC)^n \rightarrow \CC$ is called a \emph{resolute voting rule} (as we assume the unique-winner model). In case of ties, the winner is determined using a \emph{lexicographic tie-breaking order} $\succ_t$, which is a fixed total order over $\CC$. For a set of candidates $X$, let $\overrightarrow{X}$ denote an ordering over $X$, and $\overleftarrow{X}$ denote its reverse.

For any positive integer $\ell$, let $[\ell]$ denote the set $\{1, 2, \ldots, \ell\}$.  
A voting rule $r$ is called \emph{anonymous} if for every preference profile $(\succ_i )_{i \in [n]} \in \LL(\CC)^n$ and every permutation $\sigma$ of $[n]$, we have $r((\succ_i )_{i \in [n]} ) = r((\succ_{\sigma(i)} )_{i \in [n]} )$.  
A voting rule is called \emph{efficient} if the winner can be computed in time polynomial in the input size.

A \emph{scoring rule} is induced by an $m$-dimensional vector $(\alpha_1, \ldots, \alpha_m) \in \mathbb{Z}^m$ with $\alpha_1 \geq \alpha_2 \geq \ldots \geq \alpha_m$ and $\alpha_1 > \alpha_m$.  
A candidate receives a score of $\alpha_i$ from a voter if she is placed at the $i^\text{th}$ position in that voter’s preference order. The total score of a candidate is the sum of her scores across all voters.  


We use $s(c)$ to denote the total score of a candidate $c \in \CC$.  
The voting rule under consideration will be clear from the context.  
We assume that the briber has full knowledge of the voters’ preferences.

We now define our problems formally. Let $r$ be a voting rule. We denote a graph by $\GG = (\VV, E)$, where $\VV$ is a non-empty set of vertices on the voter set $V$ and $E$ is the set of edges. In a directed graph, each edge in $E$ carries an orientation and is called an \emph{arc}. For an undirected graph, the open neighborhood of a vertex $v$ is given by $N_\GG(v) = \{u \mid \{u,v\} \in E\}$, and the closed neighborhood is $N_\GG[v] = N_\GG(v) \cup \{v\}$. For a directed graph and a vertex $v \in \VV$, the out-neighborhood is defined as $N_\GG^+(v) = \{u \mid (v,u) \in E\}$ and the in-neighborhood as $N_\GG^-(v) = \{u \mid (u,v) \in E\}$. When the graph is clear from context, we omit the subscript $\GG$. Given a set of vertices $\SS \subseteq \VV$, the subgraph of $\GG$ induced by $\SS$ is denoted by $\GG[\SS]$.

Let $w: E \to \mathbb{N}_{\ge 0}$ be a function assigning non-negative weights to the edges of the graph. A graph is called \emph{uniform} if each edge has weight equal to 1, and \emph{general} if weights take arbitrary non-negative values. For brevity, we omit preliminaries on parameterized complexity here; they can be found in the Supplementary Material.

A \emph{shift vector} $\sss = (s_1, \ldots, s_n) \in \mathbb{N}_0^n$ specifies, for each voter $v_i \in \VV$, the number of positions the preferred candidate $c$ is shifted forward (to the left) in $v_i$’s preference order. The value $s_i$ thus denotes the \emph{initial shift} directly applied to voter $v_i$ by the briber.

Each voter $v_i$ is associated with a \emph{cost function} $\pi_i : [m-1] \rightarrow \mathbb{N}$, where $\pi_i(s_i)$ represents the cost of shifting the preferred candidate $c$ by $s_i$ positions in $v_i$’s preference order. We assume $\pi_i(0) = 0$ for all $i \in [n]$.  
When all cost functions are identical and linear, i.e., $\pi_i(s_i) = s_i$ for all $i \in [n]$, we say that the cost functions are \emph{identity cost functions}. Let $pos_i(c)$ denote the initial position (rank) of the preferred candidate $c$ in the preference order $\succ_i^{\PP}$ of voter $v_i$, where $pos_i(c)=1$ indicates that $c$ is ranked first.

\begin{definition}[Effect of Applying Shift Vector on a Network of Voters]
	Suppose we have a set $\CC$ of $m$ candidates, a set $\VV$ of $n$ voters, an $n$-voter profile $\PP = (\succ_i^{\PP})_{i \in [n]} \in \LL(\CC)^n$ over $\CC$, a directed weighted network $\GG=(\VV,\AA,w:\AA\longrightarrow\RB)$, and a shift vector $\sss = (s_1, \dots, s_n) \in \mathbb{N}_0^n$. We define a tuple $(s_1^\pr, \dots, s_n^\pr) \in \mathbb{N}_0^n$ as follows.
    
	\[ s_i^\prime = s_i +  \sum_{j\in[n]\setminus\{i\}: (j,i)\in\AA} \left \lfloor s_j\cdot w(j,i)\right \rfloor \]

	Let $\suc_i^\QQ$ be the preference obtained from $\suc_i^\PP$ by shifting $c$ to left by $\min\{s_i^\pr,$ pos$(c)\}$ positions. We call the profile $\QQ = (\succ_i^{\QQ})_{i \in [n]}$ to be the profile resulting from applying \sss on \GG.
\end{definition}
\begin{probdefinition}[\SBON]\label{def:sbon}
	Given a set $\CC$ of $m$ candidates, a set $\VV$ of $n$ voters, an $n$-voter profile $\PP = (\succ_i^{\PP})_{i \in [n]} \in \LL(\CC)^n$, a directed weighted network $\GG=(\VV,\AA,w:\AA\to\mathbb{R}_{\ge 0})$, a preferred candidate $c \in \CC$, a family $\Pi = (\pi_i : [m - 1] \to \mathbb{N})_{i \in [n]}$ of cost functions (where $\pi_i(0) = 0$ for all $i \in [n]$), and a budget $b \in \mathbb{R}_{\ge 0}$,  
	the task is to determine whether there exists a shift vector $\sss = (s_1, \dots, s_n) \in \mathbb{N}_0^n$ such that:
	\begin{enumerate}
	    \item $\sum_{v_i \in \VV} \pi_i(s_i) \le b$, and
	    \item $c$ is a winner in the profile obtained by applying $\sss$ on $\GG$.
	\end{enumerate}
\end{probdefinition}
An instance of \SBON is denoted by $(\CC, \PP, \GG, c, \Pi, b)$.
\section{Results: Classical Complexity}\label{res-class-comp}

We begin our study of \SBON by analyzing its complexity on general graphs and showing strong hardness results even under severe restrictions. While these results highlight the inherent difficulty of the problem, many real-world social networks exhibit additional structure, motivating the study of special graph classes. For instance, complete graphs capture highly interconnected communities where every individual interacts with everyone else, cluster graphs represent societies partitioned into cohesive groups, and tournaments model hierarchical or dominance-based relations. Bipartite networks arise naturally in two-layered systems such as influencer–follower platforms, while graphs with bounded treewidth reflect sparse or tree-like communities.

Investigating these classes is both theoretically and practically relevant. From a theoretical perspective, they allow us to pinpoint the boundary between tractable and intractable instances. From a practical standpoint, they reflect natural patterns of social interaction that shape how influence and persuasion spread in elections.

\subsection{General graph network}
We show that \SBON is \NPC even with two candidates and identity cost functions under the majority voting rule, in contrast to the classical shift bribery problem, which is solvable in polynomial time for both majority and plurality rules. We reduce from \DSP which is known to be \NPC.

\begin{definition}[\DSP]\label{def:dsp}
Given an undirected graph $\GG = (V, E)$, a subset of vertices D is called a dominating set if for every vertex $v \in V \backslash D$, there exists a vertex $u \in D$ such that $(u, v) \in E$.
\end{definition}

\begin{theorem}[$\star$]\label{sbon-npc}
	\SBON is \NPC when all members of the family of cost functions $\Pi$ are identity functions and there are 2 candidates.
\end{theorem}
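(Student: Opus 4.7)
\begin{proofsketch}
Membership in \NP is straightforward: a shift vector $\sss \in \mathbb{N}_0^n$ is a polynomial-size certificate whose cost and induced winner can both be computed in polynomial time. For hardness I would reduce from \DSP, which is known to be \NPC. Given an instance $(G=(V,E),k)$ of \DSP with $n := |V|$, I build an \SBON instance as follows. Introduce $2n-1$ voters: for every $u\in V$ a \emph{vertex voter} $v_u$, plus $n-1$ extra \emph{dummy voters}; let every voter's initial preference be $c' \suc c$; assign the identity cost function to every voter; construct the network \GG on the voter set by adding arcs $(v_u,v_w)$ and $(v_w,v_u)$ of weight $1$ for each edge $\{u,w\}\in E$, while keeping the dummies isolated; and set the budget $b:=k$. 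Under majority rule with $2n-1$ voters, $c$ wins iff it is ranked first by at least $n$ voters.

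For the forward direction, from a dominating set $D\subseteq V$ with $|D|\le k$ I take the shift vector with $s_{v_u}=1$ for $u\in D$ and $0$ for every other voter. The total cost is $|D|\le k$. By the propagation rule, each $v_u$ with $u\in D$ trivially has $s_{v_u}'\ge 1$, and every $u\in N_G(D)\setminus D$ has a bribed in-neighbor contributing $\lfloor 1\cdot 1\rfloor=1$, so $s_{v_u}'\ge 1$ as well. Since $N_G[D]=V$, all $n$ vertex voters end up ranking $c$ first, which gives $c$ at least $n$ first-place votes and hence the majority.

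For the backward direction, consider any valid bribery of cost at most $k$; with only two candidates one may assume $s_i\in\{0,1\}$, as any $s_i\ge 2$ is strictly wasteful. Let $S_V$ (resp.~$S_D$) denote the set of bribed vertex (resp.~dummy) voters. Because arcs are symmetric and of weight $1$, the vertex voters ending with $c$ on top are exactly those in $S_V\cup N_G(S_V)$, and the flipped dummies are exactly $S_D$, so majority demands
\[ |S_V\cup N_G(S_V)| + |S_D| \;\ge\; n. \]
The key step is to set $D^* := S_V \cup \bigl(V\setminus (S_V\cup N_G(S_V))\bigr)$. Every vertex either lies in $D^*$ or belongs to $N_G(S_V)\setminus S_V \subseteq N_G(D^*)$, so $D^*$ dominates $V$; moreover $|D^*| = |S_V| + (n - |S_V\cup N_G(S_V)|) \le |S_V|+|S_D| \le k$.

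The main conceptual obstacle is handling the dummies: they let the briber achieve the majority threshold without actually covering $V$, so one must argue that any such ``dummy saving'' can be converted into a dominating set without exceeding the budget. The repair swap above---trading each bribed dummy for an undominated vertex added to $S_V$---is precisely what ties the budget to the domination number and makes the two instances equivalent, and calibrating the number of dummies to exactly $n-1$ is what forces the threshold to coincide with full coverage of $V$.
\end{proofsketch}
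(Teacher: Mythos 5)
Your proposal is correct and follows essentially the same route as the paper's proof: the same reduction from \DSP with $n-1$ isolated dummy voters, budget $k$, and the identical repair step in the backward direction (your $D^* = S_V \cup (V\setminus(S_V\cup N_G(S_V)))$ is exactly the paper's $\XX_G \cup L$, with the same budget accounting $|L|\le|\XX_I|$). The only additions are your explicit remarks on \NP membership and on normalizing shifts to $\{0,1\}$, which the paper leaves implicit.
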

\begin{proofsketch}
We prove this by a polynomial-time reduction from \DSP, which is known to be \NPC. Given an instance $(\GG, k)$ of \DSP, we construct, in polynomial time, an equivalent instance $(\CC, \PP, \GG_1, c, \Pi, b)$ of \SBON.

Let $\CC = \{c_1, c_2\}$ be the set of candidates, where $c_2$ is the preferred candidate $c$ of the briber. Each vertex $v_i \in \VV$ of $\GG$ corresponds to a voter $v_i$ in the election. We define a directed network $\GG_1 = (\VV_1, E_1, w)$ as follows. We define $\VV_1= \VV \cup \VV_I$, where $\VV_I$ induces an independent set of size $n-1$ in $\GG_1$, and $E_1=E$. Finally, we put weight one on each edge in $E$.
Every voter strictly prefers $c_1$ to $c_2$, i.e., for all $v_i \in \VV_1$, 
$c_1 \succ_i c_2$, at the beginning. Each voter $v_i$ has an identity cost function $\pi_i(s_i) = s_i$, meaning that shifting $c_2$ left by one position (to the top) in $v_i$’s preference order $\succ^{\PP}_{i}$ costs exactly one unit. We set the total bribery budget $b = k$. 

Hence, the constructed instance $(\CC, \PP, \GG_1, c_2, \Pi, b)$ of \SBON can clearly be computed in polynomial time from $(\GG, k)$.
We now argue that the instance $(\GG, k)$ of \DSP is a \yes-instance if and only if the constructed instance $(\CC, \PP, \GG_1, c_2, \Pi, b)$ of \SBON is a \yes-instance.
\end{proofsketch}
\longversion{
\begin{proof}
We prove this by a polynomial-time reduction from \DSP, which is known to be \NPC. Given an instance $(\GG, k)$ of \DSP, we construct, in polynomial time, an equivalent instance $(\CC, \PP, \GG_1, c, \Pi, b)$ of \SBON.

Let $\CC = \{c_1, c_2\}$ be the set of candidates, where $c_2$ is the preferred candidate $c$ of the briber. Each vertex $v_i \in \VV$ of $\GG$ corresponds to a voter $v_i$ in the election. We define a directed network $\GG_1 = (\VV_1, E_1, w)$ as follows. We define $\VV_1= \VV \cup \VV_I$, where $\VV_I$ induces an independent set of size $n-1$ in $\GG_1$, and $E_1=E$. Finally, we put weight one on each edge in $E$.


Every voter strictly prefers $c_1$ to $c_2$, i.e., for all $v_i \in \VV_1$, 
$c_1 \succ_i c_2$, at the beginning. Each voter $v_i$ has an identity cost function $\pi_i(s_i) = s_i$, meaning that shifting $c_2$ left by one position (to the top) in $v_i$’s preference order $\succ^{\PP}_{i}$ costs exactly one unit. We set the total bribery budget $b = k$. 

Hence, the constructed instance $(\CC, \PP, \GG_1, c_2, \Pi, b)$ of \SBON can clearly be computed in polynomial time from $(\GG, k)$.
We now argue that the instance $(\GG, k)$ of \DSP is a \yes-instance if and only if the constructed instance $(\CC, \PP, \GG_1, c_2, \Pi, b)$ of \SBON is a \yes-instance.

$(\Rightarrow)$ Suppose $(\GG, k)$ is a \yes-instance of \DSP, with $D \subseteq \VV$ as solution of size at most $k$. We construct a bribery strategy by bribing exactly the voters corresponding to vertices in $D$, shifting $c_2$ to the top of their preference vectors (i.e., $s_i = 1$ for all $v_i \in D$ and $s_i = 0$ otherwise). Since $\pi_i(s_i) = s_i$ and $|D| \le b$, the total bribery cost is within the budget.

Since $D$ is a dominating set in $\GG$, every vertex in $\VV \setminus D$ has at least one bribed neighbor. Moreover, each edge in $\GG_1$ (same as the edges in $\GG$) has weight one, hence, for each unbribed voter $v_j$, the accumulated influence from its bribed neighbors ensures that its effective shift $s_j' \ge 1$, making $c_2$ move to the top of its preference vector. Consequently, every voter in the original graph $\GG$ now ranks $c_2$ above $c_1$. As there are $n$ such voters in $\GG$ and $2n - 1$ voters in total (including the isolated ones), $c_2$ receives at least $n$ votes and becomes the winner. Hence, the constructed instance of \SBON is a \yes-instance.

$(\Leftarrow)$ Conversely, suppose that the constructed \SBON instance is a \yes-instance. Then there exists a shift vector $\sss$ such that the total cost $\sum_i \pi_i(s_i) \le b = k$, and $c_2$ becomes the winner after applying $\sss$. Let $\XX \subseteq \VV_1$ be the set of voters directly bribed (i.e., those with $s_i > 0$). Let $\XX_G = \XX \cap \VV$ denote the bribed voters corresponding to the original graph, and $\XX_I = \XX \setminus \XX_G$ the bribed isolated voters.

Define $\bar{\XX}$ to be the set of voters in $\VV$ who are either in $\XX_G$ or influenced by a voter in $\XX_G$ (i.e., $\bar{\XX} = \XX_G \cup N(\XX_G)$). Let $L = \VV \setminus \bar{\XX}$ denote the set of vertices that are neither bribed nor influenced.

Since $c_2$ wins, at least $n$ voters must now rank $c_2$ above $c_1$, implying $|\bar{\XX}| + |\XX_I| \ge n$. As $\bar{\XX}$ and $L$ partition $\VV$, we have $|\bar{\XX}| + |L| = n$, and thus $|L| \le |\XX_I|$. Because the total bribery cost is at most $k$, we have $|\XX_G| + |\XX_I| \le k$. Combining these inequalities gives
\[
|L| \le |\XX_I| \implies |\XX_G| + |L| \le k.
\]
By construction, $\XX_G \cup L$ forms a dominating set of $\GG$ (since every vertex in $L$ is explicitly chosen to cover the remaining undominated vertices in $\GG$). Therefore, $\GG$ has a dominating set of size at most $k$, and $(\GG, k)$ is a \yes-instance of \DSP.
\end{proof}
}
Note that by extension of the above theorem, the general version of \SBON is also \NPC.

\subsection{Complete graph network}
In social networks, complete graphs are interesting because they model scenarios where every individual interacts with all others. They help to capture the maximum potential for influence and information spread. 

We show that \SBON is NP-complete for complete graphs when the edge weights are arbitrary, even with just two candidates by showing a reduction from \DSP. 

Let $(\GG, k)$ be an arbitrary instance of \DSP, where $\GG = (\VV, E)$ is a graph with $|\VV| = n$ vertices. We construct, in polynomial time, an instance $(\CC, \PP, \GG_1, c_2, \Pi, b)$ of \SBON as follows.

Let the set of candidates be $\CC = \{c_1, c_2\}$, where $c_2$ is the preferred candidate of the briber. Each vertex $v_i \in \VV$ of $\GG$ corresponds to a voter $v_i$. The initial preference of each voter $v_i$ is $c_1 \succ_i c_2$, which means that all voters initially prefer $c_1$ to $c_2$. 

We add $n - 1$ additional voters $\VV_I = \{v_{n+1}, \dots, v_{2n-1}\}$, each also with preference $c_1 \succ_i c_2$. The total set of voters is therefore $\VV_1 = \VV \cup \VV_I$, and $|\VV_1|=2n-1$.  Let the overall preference profile is denoted by $\PP$. 

We now construct a complete graph $\GG_1 = (\VV_1, E_1, w)$ as follows. For every edge $\{v_i, v_j\} \in E$, set $w(\{v_i, v_j\}) = 1$. For all remaining unordered pairs $\{v_i, v_j\}$ not in $E$, add edges with weight $\frac{1}{2k}$. This ensures that $\GG_1$ is a complete weighted graph.
The family of cost functions $\Pi$ consists of:
\[
\pi_i(s_i) =
\begin{cases}
s_i, & \text{if } v_i \in \VV,\\
(k+1) \cdot s_i, & \text{if } v_i \in \VV_I.
\end{cases}
\]
That is, voters corresponding to the original vertices of $\GG$ have identity cost functions, while the additional $(n - 1)$ voters have cost functions that are prohibitively expensive to bribe. The bribery budget is set to $b = k$.
Hence, the constructed instance $(\CC, \PP, \GG_1, c_2, \Pi, b)$ of \SBON is computable in polynomial time from $(\GG, k)$. Next, we show the equivalence of these two instances in the following lemma.

\begin{lemma}[$\star$]\label{npc-complete}
The instance $(\GG, k)$ of \DSP  is a \yes-instance if and only if the constructed \SBON instance $(\CC, \PP, \GG_1, c_2, \Pi, b)$ is a \yes-instance.
\end{lemma}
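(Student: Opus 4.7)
The plan is to prove both directions of the equivalence by exploiting the asymmetry between the two edge weights (weight $1$ on arcs derived from $E$ and weight $\tfrac{1}{2k}$ elsewhere), together with the prohibitive cost of bribing any voter in $\VV_I$. A preliminary numerical fact I will use on both sides is the following: since the budget is $k$ and the cost functions on $\VV$ are identity, $\sum_i s_i \le k$, and so every individual $s_j$ is at most $k$; this forces each floor term $\lfloor s_j \cdot \tfrac{1}{2k} \rfloor$ to vanish, because $s_j \cdot \tfrac{1}{2k} \le \tfrac{1}{2} < 1$. Consequently, effective influence can only propagate along weight-$1$ arcs, i.e., along edges that come from the original graph $\GG$.

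For the forward direction, given a dominating set $D \subseteq \VV$ with $|D| \le k$, I would set $s_i = 1$ for every $v_i \in D$ and $s_i = 0$ otherwise. The bribery cost is $|D| \le k = b$. Each $v_i \in D$ has $c_2$ pushed to the top directly; each $v_j \in \VV \setminus D$ has some $v_i \in D$ with $\{v_i, v_j\} \in E$, so the term $\lfloor 1 \cdot 1 \rfloor = 1$ contributes and $c_2$ is lifted to the top in $v_j$ as well. By the preliminary observation, no voter in $\VV_I$ receives a non-zero effective shift. Thus $c_2$ collects all $n$ votes in $\VV$ while $c_1$ retains only the $n-1$ votes of $\VV_I$, giving $c_2$ a strict majority among $2n-1$ voters.

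For the converse, I would first observe that no voter in $\VV_I$ can be directly bribed, because any non-zero shift on such a voter costs at least $k+1 > k$. Letting $X \subseteq \VV$ denote the set of directly bribed voters, we have $|X| \le \sum_i s_i \le k$. By the preliminary observation, voters in $\VV_I$ keep $c_1$ on top, contributing $n-1$ votes to $c_1$; for $c_2$ to win with strict majority, every voter in $\VV$ must switch. An unbribed $v_j \in \VV \setminus X$ switches only if some incoming floor term is at least $1$, and by the same observation this can happen only through a weight-$1$ arc from some $v_i \in X$, i.e., through an edge $\{v_i, v_j\} \in E$ of $\GG$. Hence $X$ is a dominating set of $\GG$ of size at most $k$.

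The main obstacle I expect lies in the weight calibration: the secondary weight $\tfrac{1}{2k}$ must be simultaneously small enough that no budget-respecting shift ever pushes a single floor term to $1$ along a non-$E$ arc, and strictly positive so that $\GG_1$ remains a well-defined complete weighted network. The crucial enabling feature is that the effective-shift definition floors each summand individually rather than aggregating them first, which is what keeps the $\tfrac{1}{2k}$ arcs invisible to the bribery mechanism under any budget-feasible strategy. I would state this calibration explicitly before invoking it on each side of the equivalence.
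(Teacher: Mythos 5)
Your proposal is correct and follows essentially the same route as the paper's proof: the forward direction bribes the dominating set directly, and the converse argues that voters in $\VV_I$ can be neither bribed (cost $k+1>b$) nor influenced (the $\tfrac{1}{2k}$-weight arcs contribute nothing under any budget-feasible shift vector), so the bribed set must lie in $\VV$, have size at most $k$, and dominate $\GG$. The only cosmetic difference is that you kill the $\tfrac{1}{2k}$ arcs via the per-term floor $\lfloor s_j\cdot\tfrac{1}{2k}\rfloor=0$ for each $s_j\le k$, whereas the paper bounds the aggregate influence by $\tfrac{|\XX_G|}{2k}\le\tfrac12<1$; both are valid and reach the same conclusion.
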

\longversion{
\begin{proof}
Suppose $(\GG, k)$ is a \yes-instance of \DSP. Then there exists a dominating set $D \subseteq \VV$ such that $|D| \le k$. We bribe exactly the voters corresponding to the vertices in $D$, shifting $c_2$ to the top of their preference orders (i.e., $s_i = 1$ for $v_i \in D$ and $s_i = 0$ otherwise). Since $\pi_i(s_i) = s_i$ for these voters, the total bribery cost is at most $|D| \le b$, satisfying the budget constraint.

Because the vertices of $\GG$ form a dominating set, every vertex $v_j \in \VV \setminus D$ has at least one bribed neighbor $v_i \in D$. In $\GG_1$, the influence of each bribed vertex is weighted $w(v_i, v_j) = 1$, so every unbribed voter in $\VV$ experiences a net shift $s_j' \ge 1$ and consequently changes their top preference to $c_2$. Therefore, all $n$ voters corresponding to the original vertices of $\GG$ now prefer $c_2$ over $c_1$. Since there are $2n - 1$ voters in total, $c_2$ secures at least $n$ votes and becomes the winner. Hence, the constructed \SBON instance is a \yes-instance.

Conversely, suppose that the constructed \SBON instance is a \yes-instance. Then there exists a shift vector $\sss$ such that $\sum_i \pi_i(s_i) \le b = k$ and $c_2$ becomes the winner after the shifts are applied. Let $\XX = \{v_i \in \VV_1 \mid s_i > 0\}$ denote the set of bribed voters.

We first note that bribing any voter $v_i \in \VV_I$ (from the added set) costs at least $(k + 1)$, which exceeds the total available budget $k$ i.e. all influence must originate from voters in the original vertex set $\VV$. Moreover, to change such a voter’s effective preference via neighbors would require a total influence of at least $1$, meaning an aggregate shift contribution of at least $2k$ from other voters (since the connecting edges have weight $\frac{1}{2k}$), again exceeding the budget. Hence, no voter in $\VV_I$ can have their preference changed directly or through influence. 

Let $\XX_G = \XX \cap \VV$ denote the bribed voters corresponding to the original graph $\GG$. Because $\sum_i \pi_i(s_i) \le k$, we have $|\XX_G| \le k$. If $\XX_G$ is a dominating set of $\GG$, then $(\GG, k)$ is a \yes-instance of \DSP, as required.

For the sake of contradiction, assume that $\XX_G$ is not a dominating set. Then there exists some vertex $v_j \in \VV$ that is (not dominated), neither bribed nor adjacent to any bribed vertex in $\GG$. Since any pair non-adjacent vertices in $\GG$, connected via an an edge of weight $\frac{1}{2k}$ in $\GG_1$, the influence from all bribed vertices is at most $\frac{|\XX_G|}{2k} \le \frac{1}{2} < 1$, implying $s_j' < 1$. Thus, $v_j$ continues to prefer $c_1$ to $c_2$. Moreover, as established earlier, all $n-1$ voters in $\VV_I$ also continue to prefer $c_1$. Therefore, at most $n - 1$ voters prefer $c_2$, contradicting the assumption that the bribery scheme made $c_2$ win. Hence, $\XX_G$ must be a dominating set of $\GG$ of size at most $k$.
\end{proof}
}
From the above lemma, along with the construction, we have the following theorem.

\begin{theorem}
	\SBON is NP complete for complete graphs and 2 candidates.
\end{theorem}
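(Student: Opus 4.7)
The plan is to argue the theorem as an immediate consequence of the construction and the equivalence lemma already in place, augmented by a short membership-in-\NP argument. First I would verify that \SBON lies in \NP on complete graphs with two candidates: a nondeterministic algorithm guesses a shift vector $\sss = (s_1,\ldots,s_n) \in \mathbb{N}_0^n$ of polynomial bit-length (we can assume each $s_i \leq m-1 = 1$ without loss of generality, as further shifts give no additional benefit once $c_2$ is at the top), computes the propagated values $s_i^\prime$ from the definition in polynomial time, applies the shifts to obtain the resulting profile $\QQ$, and checks in polynomial time whether $c_2$ wins under the voting rule. Since all these steps run in time polynomial in the input size, the problem belongs to \NP.

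For hardness, I would invoke the polynomial-time reduction from \DSP already spelled out before \Cref{npc-complete}: given an instance $(\GG, k)$ of \DSP with $\GG=(\VV,E)$ and $|\VV|=n$, we build the instance $(\CC, \PP, \GG_1, c_2, \Pi, b)$ by taking $\CC=\{c_1,c_2\}$; one voter per vertex with initial preference $c_1\suc_i c_2$; $n-1$ additional dummy voters with the same preference and prohibitive bribery cost $(k+1)\cdot s_i$; a complete weighted graph $\GG_1$ where original edges of $\GG$ get weight $1$ and all non-edges get weight $\tfrac{1}{2k}$; identity cost functions on the original voters; and budget $b=k$. This construction is clearly computable in time polynomial in $n$ and $k$, and the resulting graph is complete with exactly two candidates, matching the setting claimed by the theorem.

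Next, I would cite \Cref{npc-complete} directly to obtain the biconditional: $(\GG,k)$ is a \yes-instance of \DSP if and only if the constructed tuple is a \yes-instance of \SBON. Since \DSP is \NPC, this gives a polynomial-time many-one reduction from an \NPH problem to \SBON restricted to complete graphs with two candidates, establishing \NPH. Combined with membership in \NP from the first paragraph, we conclude \NPC.

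I expect no real obstacle in this write-up since the heavy lifting—the correctness of the construction in both directions, especially the argument that non-edges of weight $\tfrac{1}{2k}$ cannot aggregate enough influence to flip an undominated voter, and that the dummy voters in $\VV_I$ are insulated from being tipped—has already been carried out inside \Cref{npc-complete}. The only care required is to make the \NP membership proof explicit (it was not stated before) and to note that the construction lives in the claimed restricted class. A minor stylistic point is that the construction uses arbitrary (rational) edge weights; if integrality of weights is required, one can uniformly scale all weights by $2k$, which does not affect the polynomial-time nature of the reduction.
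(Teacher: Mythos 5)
Your proposal takes essentially the same route as the paper: the theorem is obtained by combining the construction laid out before \Cref{npc-complete} with the equivalence established in that lemma, and your explicit \NP-membership argument (guess a shift vector with each $s_i\le 1$, simulate propagation, check the winner) is a harmless addition that the paper leaves implicit. So the core of the write-up is fine.

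One caveat on your closing remark: scaling all weights by $2k$ to obtain integer weights would \emph{break} the reduction, not merely restyle it. The propagated influence is $\lfloor s_j\cdot w(j,i)\rfloor$ and, with two candidates, an effective shift of $1$ already moves $c_2$ to the top; after scaling, every non-edge has weight $1$, so a single bribed voter anywhere would contribute $\lfloor 1\cdot 1\rfloor=1$ to \emph{every} other original voter and flip the whole of $\VV$, collapsing the gadget. The fractional weight $\tfrac{1}{2k}$ is essential precisely so that the aggregate influence across non-edges stays below $1$; since \Cref{def:sbon} allows weights in $\mathbb{R}_{\ge 0}$, no integrality fix is needed, and this remark should simply be dropped.
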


In contrast to the above setting with arbitrary edge weights, if all edge weights are 1, the problem becomes polynomial-time solvable under the majority voting rule with linear cost functions and $m$ candidates.
\begin{theorem}
	\SBON is polynomial-time solvable if the underlying graph is complete and all the edge weights are 1 for the majority voting rule, and linear cost functions.
\end{theorem}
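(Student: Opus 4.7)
The plan is to exploit the uniform structure of a unit-weight complete graph so that the optimization collapses to a one-dimensional search over the total shift. The key observation is that on such a graph, if the briber applies initial shifts $\sss=(s_1,\dots,s_n)$ with total $S=\sum_{i=1}^{n} s_i$, then for every voter $v_i$
\[
s_i' \;=\; s_i + \sum_{j\ne i}\lfloor s_j\cdot 1\rfloor \;=\; S.
\]
All voters therefore experience the same effective shift, and the number of voters who rank $c$ first after propagation is $N(S)=\bigl|\{\,i : pos_i(c)\le S+1\,\}\bigr|$, a function that depends only on the scalar $S$, is non-decreasing in $S$, and reaches $n$ as soon as $S\ge \max_i pos_i(c)-1\le m-1$.

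Next I would use the linearity of costs. Writing $\pi_i(s)=\alpha_i s$ (legitimate because $\pi_i(0)=0$) and letting $\alpha^{\ast}=\min_i \alpha_i$ with minimizer $j^{\ast}$, any shift vector with total $S$ has cost $\sum_i \alpha_i s_i\ge \alpha^{\ast}\cdot S$, with equality attained by setting $s_{j^{\ast}}=S$ and $s_i=0$ otherwise. Since only $S\le m-1$ is ever needed, this concentrated choice stays inside the domain of $\pi_{j^{\ast}}$, so the minimum-cost shift realising total $S$ is exactly $\alpha^{\ast}\cdot S$.

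The resulting algorithm precomputes $pos_i(c)$ for each voter, computes $\alpha^{\ast}$, sorts the positions once to build the cumulative count $N(\cdot)$ on $\{0,1,\dots,m-1\}$, and then enumerates $S\in\{0,1,\dots,m-1\}$, returning \yes iff some $S$ satisfies both $N(S)>n/2$ (the strict majority threshold) and $\alpha^{\ast}\cdot S\le b$. The running time is polynomial, roughly $O(n\log n+m)$. The main obstacle is conceptual rather than technical: recognizing that on a unit-weight complete graph only the total shift $S$ matters, and that linearity of the cost then forces all of that mass onto a single cheapest voter. Once these two structural facts are in place, correctness is a short monotonicity-and-budget argument, and the two directions (existence of a valid $\sss$ $\Longleftrightarrow$ existence of a valid $S$) are immediate.
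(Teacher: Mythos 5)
Your proposal is correct and follows essentially the same route as the paper: both exploit that on a unit-weight complete graph every voter's effective shift equals the total $S=\sum_i s_i$, and both then concentrate the entire shift on the single cheapest voter using linearity of the costs. The only cosmetic difference is that you enumerate/binary-search over feasible values of $S$ while the paper computes the required threshold via the median of the position vector; your handling of the off-by-one in the top-position condition ($pos_i(c)\le S+1$) is in fact slightly more careful than the paper's.
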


\begin{proof}
	Let $(\CC, \PP, \GG, c, \Pi, b)$ be an arbitrary instance of \SBON, under the majority voting rule. Let us say the majority rule elects a dummy candidate in the event no majority is achieved. The goal of bribery is to then make our candidate c a majority winner. The approach towards the solution is greedily solved. Notice that for all shift vectors $\sss =(s_i)_{i \in \left[ n \right]}$, the value of $s_i^\pr ~\forall i \in \left[ n \right]$ is same and equal to $\sum_{i \in [n]} s_i$. Let us call this quantity $\alpha$. \\ 
 \\
     Let us say the position vector of candidate c is $\ppp{} = \left( \succ_{i}^{\PP}\left( c \right) \right)_{i \in \left[ n  \right]}$
     where $\succ_{i}^{\PP}\left( c \right)$ specifies the position of desired candidate c in the ranking $\succ_{i}$ in preference profile $\PP$. 
     Notice that c is a Majority winner $\iff \alpha \geq$  the median of this position vector. This is because all voters will see a shift of $\alpha$, and only those voters with $\succ_{i}^{\PP}\left( c \right) \leq \alpha$ will vote for c under the plurality scoring rule. 

     Hence the minimum value of $\alpha$ needed to make c a Majority winner can be easily found using any polynomial algorithm to find median in a vector. 
    Let us say the family of linear cost functions for each voter are $\Pi = (\pi_i : [m - 1] \rightarrow \mathbb{N})_{i \in [n]}$ (where $\pi_i(x) = b_i \times x, \forall i \in [n]$), and a budget $b \in \mathbb{R}$.
    This value of $\alpha$ can be achieved with minimum possible budget by bribing only the voter with the smallest cost coefficient, at a cost of $b_i \times \alpha$, (where $b_i$ is smallest among all $i \in [n]$). This is the minimum possible cost to make c a Majority winner. If this cost is $ \leq b$, then c can be made a Majority winner. If not, c cannot be made a Majority winner.
\end{proof}

Let us now see the same problem with the plurality voting rule instead of the majority voting rule

\begin{lemma}[$\star$]\label{c_wins}
	If c is a winner under the plurality voting rule for some score $\alpha ^\pr$, then c is also a winner for all $\alpha \geq \alpha ^\pr$.
\end{lemma}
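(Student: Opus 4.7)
The plan is to exploit the uniform-shift property that the preceding theorem already established for complete graphs with unit edge weights. Under that setting, for any shift vector $\sss = (s_1,\dots,s_n)$, every voter experiences the same effective shift $s_i^\pr = \alpha \eqdef \sum_{j \in [n]} s_j$. Hence varying $\alpha$ amounts to shifting $c$ uniformly leftward (by the same amount) in the preference order of every voter simultaneously. In particular, the family of profiles obtained as $\alpha$ grows is totally ordered: increasing $\alpha$ can only move $c$ to an equal or earlier position in each voter's ranking, while the relative order of all candidates other than $c$ is preserved.

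Next I would isolate the effect on plurality scores. Under plurality, each voter contributes exactly one point to the candidate ranked first in her preference order. Fix any voter $v_i$ and consider how her top-ranked candidate evolves as $\alpha$ grows. Either (i) $c$ is already at the top, in which case she continues to vote for $c$ for all larger $\alpha$, or (ii) some other candidate $d$ is currently at the top, and as $\alpha$ grows $c$ either remains below $d$ (so $v_i$ still votes for $d$) or overtakes $d$ and becomes the new top (so $v_i$ switches her vote from $d$ to $c$ and keeps voting for $c$ thereafter). No voter ever switches from $c$ to another candidate, and no voter switches between two candidates other than $c$. Consequently, $s(c)$ is a non-decreasing function of $\alpha$, while $s(d)$ is non-increasing for every $d \neq c$.

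Combining these two monotonicities finishes the argument. Suppose $c$ is a winner under plurality at $\alpha^\pr$; then $s_{\alpha^\pr}(c) \ge s_{\alpha^\pr}(d)$ for every $d\neq c$, and in case of equality $c$ wins by the fixed lexicographic tie-breaking order $\succ_t$. For any $\alpha \ge \alpha^\pr$ we have $s_\alpha(c) \ge s_{\alpha^\pr}(c) \ge s_{\alpha^\pr}(d) \ge s_\alpha(d)$ for every $d \neq c$, so $c$ remains a plurality winner; and whenever this chain collapses to an equality with some $d$, the same tie-breaking rule $\succ_t$ still elects $c$. Therefore $c$ is a winner at every $\alpha \ge \alpha^\pr$, establishing the lemma.

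I expect the routine part to be the monotonicity of individual votes, and the only mildly delicate point to be the tie-breaking bookkeeping; this is resolved by observing that tie-breaking is defined by a fixed order $\succ_t$ that does not depend on $\alpha$, so $c$'s privilege at $\alpha^\pr$ transfers verbatim to every larger $\alpha$.
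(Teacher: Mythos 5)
Your proposal is correct and follows essentially the same route as the paper's own proof: both rest on the two monotonicity facts that under the uniform effective shift $\alpha$, the plurality score of $c$ is non-decreasing in $\alpha$ while every rival's score is non-increasing, which together preserve $c$'s victory. Your write-up is merely more explicit about the per-voter case analysis and the fixed tie-breaking order, but adds no new idea beyond the paper's argument.
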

\longversion{
\begin{proof}
    As discussed previously, all voters in this problem framework shift by the same maximum amount $\alpha$. Let us say c wins for a given value of $\alpha$. This means that c has more votes than all other candidates $c ^\pr \in \CC$. We prove that this cannot change upon increasing $\alpha$. Notice that increasing the value of $\alpha$ cannot reduce the number of votes c gets, since all voters now shift c to a more preferable location. Similarly, notice that increasing $\alpha$ cannot increase the number of votes a rival of c can get, since all other candidates either stay stationary, or shift to a less preferable profile. This completes the proof.
\end{proof}}

\begin{lemma}[$\star$]\label{c_looses}
	If c is a loser under the plurality voting rule for some score $\alpha ^\pr$, then c is also a loser for all $\alpha \leq \alpha ^\pr$.
\end{lemma}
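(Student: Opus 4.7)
The plan is to derive this lemma as the contrapositive of Lemma \ref{c_wins}. The statement ``$c$ loses for $\alpha'$ implies $c$ loses for all $\alpha \leq \alpha'$'' is logically equivalent to ``$c$ wins for some $\alpha \leq \alpha'$ implies $c$ wins for $\alpha'$,'' which is a direct consequence of the monotonicity established in Lemma \ref{c_wins}. So I would open the proof by assuming, for contradiction, that there exists $\alpha \leq \alpha'$ such that $c$ is a winner under the plurality rule when all voters shift by $\alpha$, while $c$ is a loser when all voters shift by $\alpha'$.

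Next, I would invoke Lemma \ref{c_wins} with the role of $\alpha'$ there played by the present $\alpha$. Since $c$ wins at shift value $\alpha$, the lemma guarantees that $c$ continues to win at every shift value $\beta \geq \alpha$. In particular, taking $\beta = \alpha'$ (which is allowed because $\alpha \leq \alpha'$ by assumption) yields that $c$ wins at $\alpha'$, directly contradicting the hypothesis that $c$ loses at $\alpha'$. This contradiction establishes the lemma.

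As an alternative direct argument (which I would briefly note for completeness), one can observe that in the complete graph with unit edge weights, all voters experience the same effective shift equal to the total bribed amount, so decreasing $\alpha$ can only move $c$ rightward (or leave it in place) in each preference order, while each rival candidate can only move leftward (or stay). Hence the plurality score of $c$ is nondecreasing in $\alpha$ and the score of each rival is nonincreasing, so if $c$ fails to beat every rival at $\alpha'$, it certainly cannot beat them at any smaller $\alpha$. The main (minor) obstacle is being careful about tie-breaking: one must verify that the same fixed lexicographic order $\succ_t$ is used throughout, so that ``loser'' status under a lower score is preserved; this is immediate because monotonicity of scores with respect to $\alpha$ does not depend on the tie-breaking rule, and the argument via Lemma \ref{c_wins} transports whatever convention is in force.
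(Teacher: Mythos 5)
Your proposal is correct, and your \emph{secondary} argument (the direct monotonicity of scores in $\alpha$) is essentially the paper's proof verbatim: the paper argues that decreasing $\alpha$ cannot increase $c$'s vote count and cannot decrease any rival's vote count, so a deficit against some rival persists. Your \emph{primary} route --- deriving the lemma as the contrapositive of \Cref{c_wins} --- is a genuinely different and more economical decomposition: it avoids re-running the monotonicity argument and makes the two lemmas visibly two faces of one fact. What the paper's direct route buys in exchange is robustness to how ``winner'' and ``loser'' are defined: the paper's proofs of both lemmas work with strict vote comparisons (``more votes than all rivals'' vs.\ ``fewer votes than some rival''), and under that reading the two notions are not complementary (ties fall in between), so the contrapositive of \Cref{c_wins} would not literally yield \Cref{c_looses}. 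Your argument is sound because the paper fixes a resolute rule with lexicographic tie-breaking, under which ``loser'' is exactly ``not the winner'' --- and you correctly flag this as the one point needing care. So your main route is valid given the paper's conventions, but it leans on a definitional fact that the paper's direct argument never needs to invoke.
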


\longversion{
\begin{proof}
    Very analogous to the previous proof. Let us say c loses for a given value of $\alpha$. This means that c has less votes than some other candidate $c ^\pr \in \CC$. We prove that this cannot change upon decreasing $\alpha$. Notice that decreasing the value of $\alpha$ cannot increase the number of votes c gets, since no voter will shift c to a more preferable location. Similarly, notice that decreasing $\alpha$ cannot decrease the number of votes a rival of c can get, since all other candidates either stay stationary, or shift to a more preferable profile. 
\end{proof}}

\begin{lemma}\label{plurality_score}
	The optimal value of $\alpha$ has to be in the position vector $\ppp{} = \left( \succ_{i}^{\PP}\left( c \right) \right)_{i \in \left[ n  \right]}$ or 0. 
\end{lemma}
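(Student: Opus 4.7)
The plan is to combine the monotonicity already established in \Cref{c_wins} and \Cref{c_looses} with the step-function behaviour of plurality scores as a function of $\alpha$. Since we work on a complete graph with unit edge weights, every voter sees the same effective shift $\alpha=\sum_{i\in[n]}s_i$, so the resulting profile depends only on $\alpha$, and the cost of realising any given $\alpha$ with linear costs is simply $b_{\min}\cdot\alpha$. Hence minimising cost subject to making $c$ a winner is equivalent to finding the smallest $\alpha\in\mathbb{N}_0$ for which $c$ wins.

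The first step is to make precise how the plurality scores depend on $\alpha$. Under plurality only the top-ranked candidate of each voter contributes, and voter $v_i$'s top vote switches to $c$ exactly when $\alpha$ reaches $\succ_i^{\PP}(c)$ (this is the same observation used in the preceding majority argument). Consequently the score $s(c)(\alpha)$ is a nondecreasing step function of $\alpha$ whose jumps occur precisely at the values appearing in $\ppp$; and for every rival $c'\neq c$ the score $s(c')(\alpha)$ is a nonincreasing step function with jumps at the same set of values, because each time $v_i$ transfers her ballot to $c$ she simultaneously withdraws it from whichever candidate previously held position one in $\succ_i^{\PP}$. In particular, on any integer interval that contains no element of $\ppp$, the score vector of the entire election is constant.

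The second step turns this into the claim. By \Cref{c_wins} and \Cref{c_looses} the set $W=\{\alpha\in\mathbb{N}_0 : c\text{ wins at }\alpha\}$ is upward closed and hence equals $[\alpha^\star,\infty)\cap\mathbb{N}_0$ for some threshold $\alpha^\star$. If $\alpha^\star=0$ we are done. Otherwise, let $t$ be the largest element of $\{0\}\cup\ppp$ that is strictly smaller than $\alpha^\star$. Since $(t,\alpha^\star]$ contains no breakpoint, every candidate's plurality score at $\alpha=t$ coincides with its value at $\alpha=\alpha^\star$, so $c$ is already a winner at $\alpha=t$. This contradicts the minimality of $\alpha^\star$ unless $\alpha^\star=t\in\{0\}\cup\ppp$, which is exactly what the lemma asserts. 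Finally, because $b_{\min}\cdot\alpha$ is monotone in $\alpha$, this minimum-winning $\alpha$ also minimises the bribery cost.

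The main obstacle I foresee is the bookkeeping in the first step, namely confirming that \emph{every} rival's score changes line up with the breakpoints in $\ppp$ (and not at some intermediate integer). This is where an off-by-one could sneak in. The cleanest way to dispatch it is to observe that any change in any candidate's plurality score attributable to voter $v_i$ must coincide with $c$ entering or leaving position one in $\succ_i^{\PP}$, an event governed entirely by the single threshold $\succ_i^{\PP}(c)$; once this is pinned down, the step-function structure combines with the monotonicity lemmas to give the claim immediately.
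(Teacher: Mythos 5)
Your proposal is correct and follows essentially the same route as the paper: assume the optimal $\alpha$ is not a breakpoint, take the largest element of $\{0\}\cup\ppp$ below it, and observe that no voter's top choice (and hence no plurality score) changes on the intervening interval, contradicting minimality. Your explicit step-function framing of the rivals' scores is a slightly more careful write-up of the same idea; the only cosmetic slip is the phrase ``unless $\alpha^\star=t$'' (impossible since $t<\alpha^\star$ by construction), which should read ``unless $\alpha^\star\in\{0\}\cup\ppp$, in which case $(t,\alpha^\star]$ does contain a breakpoint and the contradiction does not arise.''
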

\longversion{
\begin{proof}
    For simplicity, let us add 0 to our position vector
    Consider a value of $\alpha$ that is not equal to any $\succ_{i}^{\PP}\left( c \right)$ for any $i \in [n]$ (or 0). Let us assume that this value of $\alpha$ is optimal (the lowest among all $\alpha$ that make c win). Now, consider the largest value of an element in our position vector that is smaller than $\alpha$. Let us call this element p. Notice that reducing the value of $\alpha$ to p cannot cause out candidate c to lose. This is because all voters with $\succ_{i}^{\PP}\left( c \right)$ less than p voted for c previously, and will continue to do so. All the other voters didn't vote for c even with the old value of $\alpha$ since their $\succ_{i}^{\PP}\left( c \right)$ was strictly greater than $\alpha$. This means that $p \geq \succ_{i}^{\PP}\left( c \right) \iff \alpha \geq \succ_{i}^{\PP}\left( c \right)$, which implies that the value p is also a valid $\alpha$ that can make c win. since p $\leq \alpha$, this contradicts the assumption that $\alpha$ is optimal.    \\\\
    Note that this proof does not work if we do not add 0 to our position vector, since $\alpha = 0$ could also be optimal (c is already the winner). 
\end{proof}
}
With these lemmas we are ready to prove the following theorem.

\begin{theorem}
	\SBON is polynomial-time solvable if the underlying graph is complete and all the edge weights are 1 for the plurality voting rule, and linear cost functions.
\end{theorem}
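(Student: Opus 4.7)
\begin{proofsketch}
My plan is to combine the three preceding lemmas into a simple enumeration procedure. First, I observe that in a complete graph with unit edge weights, every voter $v_i$ receives the same effective shift $s_i' = \sum_{j \in [n]} s_j =: \alpha$, since each other voter contributes $\lfloor s_j \cdot 1 \rfloor = s_j$ through its arc to $v_i$. Consequently the election outcome depends only on the scalar $\alpha$, and not on how the direct shifts are distributed among the voters.

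Next, Lemmas~\ref{c_wins} and~\ref{c_looses} together imply that the predicate ``$c$ wins under plurality when the common effective shift is $\alpha$'' is monotone in $\alpha$; hence there is a threshold $\alpha^\star$ such that $c$ wins if and only if $\alpha \ge \alpha^\star$. Lemma~\ref{plurality_score} restricts $\alpha^\star$ to the set $S = \{0\} \cup \{\succ_i^{\PP}(c) : i \in [n]\}$, which contains at most $n+1$ distinct values. I sort $S$ and either scan it linearly or binary-search to locate $\alpha^\star$; for each candidate value of $\alpha$, applying the uniform shift and recomputing plurality scores takes $O(nm)$ time, giving an overall $O(n^2 m)$ (or $O(nm \log n)$ with binary search) procedure.

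Finally, since the cost functions are linear, $\pi_i(s_i) = b_i \cdot s_i$, the cheapest way to realize a total direct shift of $\alpha^\star$ is to assign the entire shift to the voter $v_{i^\star}$ with the smallest coefficient $b_{i^\star} = \min_i b_i$, yielding total cost $b_{i^\star} \cdot \alpha^\star$. The algorithm returns \yes if and only if $b_{i^\star} \cdot \alpha^\star \le b$. The main subtlety I anticipate is verifying that the whole direct shift can legitimately be concentrated on a single voter (the cost functions are defined only on the domain $[m-1]$, but $\alpha^\star$ is bounded by the maximum meaningful shift $m-1$, since positions lie in $\{1,\dots,m\}$) and correctly handling the fixed lexicographic tie-breaking when deciding whether $c$ wins at each candidate $\alpha$; everything else reduces to routine bookkeeping, so the algorithm runs in polynomial time.
\end{proofsketch}
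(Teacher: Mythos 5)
Your proposal is correct and follows essentially the same route as the paper's proof: reduce to a single scalar $\alpha$ (since all effective shifts coincide on a unit-weight complete graph), use Lemmas~\ref{c_wins}, \ref{c_looses}, and \ref{plurality_score} to binary-search the $O(n)$ candidate values, and then realize the optimal $\alpha^\star$ by concentrating the entire shift on the voter with the smallest cost coefficient. Your added remarks on the domain $[m-1]$ of the cost functions and on tie-breaking are reasonable hygiene but do not change the argument.
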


\begin{proof} For any given $\alpha$, we can simulate the voting procedure under the plurality rule in polynomial time and determine whether \(c\) becomes a winner. From Lemmas~\ref{c_wins} and~\ref{c_looses}, it follows that the smallest feasible $\alpha$ can be found by performing a binary search over possible $\alpha$ values.

From \Cref{plurality_score}, it follows that we only need to check $\alpha$ values from our position vector 
\(
\ppp = \left\{ \succ_i^{\mathcal{P}}(c) \;\middle|\; i \in [n] \right\} \cup \{0\},
\)
where \(\succ_i^{\mathcal{P}}(c)\) denotes the position of \(c\) in voter \(i\)'s preference ranking. Since \(|\ppp| = \OO(n)\), this yields a polynomial-size search space.

Each feasibility check for a given $\alpha$ requires generating a modified preference profile for all voters, which can be done in \(O(n)\) time. Thus, performing binary search over \ppp has time complexity
\(O(n \log n)\). Once the optimal value \(\alpha^\ast\) is identified, finding the corresponding minimum-cost bribery scheme is straightforward under linear cost functions. Specifically, if \(b_i\) denotes the cost coefficient for voter \(i\), then \(\alpha^\ast\) can be achieved at minimum cost by bribing the voter with the smallest cost coefficient, i.e.,
\(
\min_{i} \; b_i \times \alpha^\ast.
\)

If this minimum cost satisfies
\(
\min_{i} \; b_i \times \alpha^\ast \;\leq\; b,
\)
where \(b\) is the available budget, then \(c\) can be made a plurality winner; otherwise, it is impossible.

Therefore, under the stated assumptions, \SBON is solvable in polynomial time.
\end{proof}

We now turn our attention to \SBON on a transitive tournament $T=(V,A)$ with two candidates, 
identity cost functions, and uniform edge weights. 
Let the vertices be ordered by decreasing out-degree, i.e. 
$v_1,v_2,\ldots,v_n$ with $d^+(v_1)=n-1, d^+(v_2)=n-2, \ldots, d^+(v_n)=0$. 
Scanning this order from $v_1$ downwards, bribing the first vertex 
that does not already rank the preferred candidate~$1$ at the top suffices 
to make all vertices in $T$ support candidate~$1$. If all vertices already prefer candidate~$1$, no bribery is needed.
\begin{observation}
  There exists a linear time algorithm to solve \SBON on a transitive tournament graph with two candidates, uniform bribery costs, and uniform edge weights. 
\end{observation}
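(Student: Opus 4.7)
The plan is to exploit the total-order structure of the transitive tournament to reduce the problem to a single left-to-right scan, formalizing the intuition sketched preceding the statement. First, I would fix the vertex ordering $v_1,\ldots,v_n$ by decreasing out-degree, so that in $T=(V,A)$ we have $(v_i,v_j)\in A$ if and only if $i<j$; in particular the out-neighborhood of $v_i$ equals $\{v_{i+1},\ldots,v_n\}$ while its in-neighborhood equals $\{v_1,\ldots,v_{i-1}\}$.

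The central structural claim I would establish is the following: if only $v_i$ is bribed with $s_i=1$ and all other direct shifts are zero, then the effective shift vector satisfies $s_j'=1$ for every $j\ge i$ and $s_j'=0$ for every $j<i$. This is immediate from the definition of the applied shift, since $v_i$ is an in-neighbor of $v_j$ exactly when $i<j$, every edge weight is $1$, and $\lfloor 1\cdot 1\rfloor=1$. Because there are only two candidates, any effective shift of at least $1$ moves the preferred candidate $c$ to the top of the corresponding voter's ranking. Hence bribing $v_i$ causes precisely $v_i,v_{i+1},\ldots,v_n$ to rank $c$ at the top, while $v_1,\ldots,v_{i-1}$ keep their original rankings.

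With this claim in hand, the algorithm is straightforward. First, check in $O(n)$ time whether $c$ already wins by counting the two candidates' top-vote shares; if so, return yes at zero cost. Otherwise, let $i^\star$ be the smallest index for which $v_{i^\star}$ currently ranks $c$ in position $2$ (which must exist, else $c$ would already win). By the minimality of $i^\star$, each of $v_1,\ldots,v_{i^\star-1}$ already ranks $c$ at the top, so bribing $v_{i^\star}$ with $s_{i^\star}=1$ drives every remaining voter to do the same, and $c$ wins with all $n$ top-votes at total cost $1$; we therefore return yes whenever $b\ge 1$. Conversely, if $c$ does not already win and $b<1$, the identity cost function forbids any nonzero shift, so we return no. The whole procedure runs in $O(n)$ time, and I expect the main obstacle to be the structural claim above; once it is established, the rest is routine boundary-case bookkeeping.
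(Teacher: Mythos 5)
Your proposal is correct and follows essentially the same approach as the paper: order the vertices by decreasing out-degree and bribe the first one that does not already rank $c$ on top, which by the arc structure of the transitive tournament propagates a unit shift to all later vertices. The paper states this only as a two-sentence sketch; your write-up supplies the structural claim and the budget boundary cases explicitly, but the underlying argument is identical.
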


\subsection{Cluster graphs}
Cluster graphs are again intriguing as they model tightly connected communities in social networks. In such graphs, influence spreads easily within a cluster but less so between clusters. Studying \SBON here helps understand how shifting preferences in one community affects the overall outcome. We design pseudo-polynomial time algorithms to solve \SBON with two candidates, uniform edge weights, and linear cost functions.

\begin{theorem}[$\star$]\label{cluster-graph-pseudo-poly}
On cluster graphs with two candidates, uniform edge weights, and linear cost functions, \SBON can be solved in pseudo-polynomial time by dynamic programming in 
$\OO(r\cdot b)$, where $r$ is the number of cliques and $b$ is the budget. 
\end{theorem}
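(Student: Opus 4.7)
The plan is to exploit the block structure of cluster graphs to reduce \SBON to a $0/1$ knapsack over cliques, and then solve it by a standard pseudo-polynomial DP.

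First I would establish a per-clique \emph{shift pooling} property. Fix a clique $C$ with $n_C$ voters and let $s_j$ be the shift applied to voter $j\in C$. Since every pair of voters in $C$ is joined by an edge of weight $1$ and voters outside $C$ have no arcs into $C$, for every $i\in C$,
\[
s_i' \;=\; s_i + \sum_{j\in C\setminus\{i\}} \lfloor s_j\cdot 1\rfloor \;=\; \sum_{j\in C} s_j \;=:\; S_C,
\]
so every voter in $C$ receives the same effective shift, and the cliques act independently because they lie in different components. With two candidates, voter $i$ supports $c$ iff $\mathrm{pos}_i(c)=1$ originally or $s_i'\geq 1$, so by the pooling property either $S_C=0$ (and $C$ contributes only its original $c$-count $\ell_C$) or $S_C\geq 1$ (and all $n_C$ voters of $C$ support $c$). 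Under linear costs $\pi_j(s_j)=b_j s_j$, the cheapest way to realise $S_C\geq 1$ is to set $s_{j^\star}=1$ for some $j^\star\in\argmin_{j\in C} b_j$ and leave the other shifts at $0$, at cost $\kappa_C := \min_{j\in C} b_j$. Hence each clique admits only two relevant actions: \emph{do nothing} (cost $0$, contributing $\ell_C$ votes to $c$) or \emph{flip} (cost $\kappa_C$, contributing $n_C$ votes to $c$).

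I would then cast the global decision as a $0/1$ knapsack and solve it by DP. Enumerate the cliques as $C_1,\ldots,C_r$, let $g_i := n_{C_i}-\ell_{C_i}$ be the extra $c$-votes gained by flipping $C_i$, and let $T$ be the minimum number of additional $c$-votes required for $c$ to win under the voting rule and tie-breaking (computable directly from the initial profile). Define $f[i][j]$ as the maximum total gain obtainable using a subset of $\{C_1,\ldots,C_i\}$ with total cost at most $j$, with the recurrence
\[
f[i][j] \;=\; \max\bigl\{\, f[i-1][j],\; f[i-1][\,j-\kappa_{C_i}\,] + g_i \,\bigr\}
\]
(the second branch taken only when $j\geq\kappa_{C_i}$); the input is a \yes-instance iff $f[r][b]\geq T$. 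The table has $O(rb)$ cells, each filled in $O(1)$ time, giving the claimed $O(rb)$ bound.

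The one conceptual step is the pooling observation in the first paragraph, which both decouples the cliques and eliminates any benefit from partial intra-clique shifts; once that is in place, the rest is a textbook pseudo-polynomial knapsack DP and I expect no further obstacles.
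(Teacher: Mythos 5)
Your proposal is correct and follows essentially the same route as the paper: reduce each clique to a binary flip decision with cost $\min_{v\in C_i}$ of the per-voter bribery cost and gain equal to the clique size minus its initial supporters, then run a standard $0/1$ knapsack DP over the $r$ cliques and budget $b$. Your explicit ``shift pooling'' argument justifying why partial intra-clique shifts are never useful is a welcome addition that the paper leaves implicit, but it does not change the overall structure of the proof.
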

\shortversion{
\begin{proofsketch}
Assume there are two candidates i.e. $\CC= \{c_1, c_2\}$ with $c_2$ being our preferred candidate. We reduce the problem to selecting a subset of cliques to flip within budget $b$ to reach a majority for $c_2$.  
For each clique $C_i$, let $c(C_i)$ be the minimum cost to flip it and $gain(C_i)$ the number of additional supporters gained. We use dynamic programming: $DP[i][d]$ stores the maximum gain from the first $i$ cliques with budget at most $d$. Finally, we check if the total gain plus initial supporters reaches a majority.  
The DP runs in $\OO(r\cdot b)$ time where $r$ is the number of cliques in the input graph, giving a pseudo-polynomial algorithm.
\end{proofsketch}
}
\longversion{
\begin{proof}
Let $\CC = \{c_1,c_2\}$ with $c_2$ being our preferred candidate. Let $\GG=(V,E)$ be a cluster graph consisting of disjoint cliques 
$C_1,\dots,C_r$. 
Each voter $v \in V$ has an arbitrary cost $c(v)$ to shift the preferred candidate~$c_2$ to the top. 
For each clique $C_i$, define
\[
\begin{aligned}
c(C_i) \;&=\; \min_{v \in C_i} c(v), \\[4pt]
gain(C_i) \;&=\; |C_i| \;-\; 
   \#\{\,v \in C_i \mid v \text{ already ranks } c_2 \text{ on top}\,\}.
\end{aligned}
\]

That is, $c(C_i)$ is the minimum cost to flip $C_i$, and $gain(C_i)$ is the number of additional supporters gained if $C_i$ is flipped.

The problem reduces to selecting a subset of cliques whose total cost does not exceed $b$ and whose total gain, together with the initial supporters, reaches the majority threshold $\lceil (n+1)/2 \rceil$.

We construct the following dynamic programming table:
\[
DP[i][d] \;=\; 
\max \Bigl\{ \sum_{j \in S} gain(C_j) \;\Bigm|\; 
S \subseteq \{1,\ldots,i\}, \;
\sum_{j \in S} c(C_j) \leq d \Bigr\}.
\]
That is, $DP[i][d]$ stores the maximum number of additional supporters obtainable from the first $i$ cliques using budget at most $d$.

The recurrence is
\(
DP[i][d] \;=\; 
\max \bigl(
DP[i-1][d], \;
DP[i-1][d - c(C_i)] + gain(C_i)
\bigr),
\)
whenever $d \geq c(C_i)$, and otherwise
\(
DP[i][d] \;=\; DP[i-1][b].
\)

The base case is $DP[0][d]=0$ for all $d \geq 0$.  
At the end, let $l$ be the number of initial supporters of candidate~$c_2$.  
We check whether
\(
\max_{d\leq b} DP[r][d] + l \;\;\geq\;\; \lceil (n+1)/2 \rceil.
\)
If yes, then the bribery succeeds within budget $b$; otherwise, it is impossible.

The DP table has size $O(r \cdot b)$ and each entry can be computed in constant time from previous entries. Thus, the algorithm runs in $O(r\cdot b)$ time, which is pseudo-polynomial in the budget $b$. 
\end{proof}
}
\subsection{Path graph network}

We consider a uni-directed path of \(n\) voters with unit edge weights and identity cost functions (i.e., \(\pi_i(s)=s\) for all \(i\)).  Concretely, for \(i=1,\dots,n-1\) the path contains the arc \((i+1,i)\) with weight \(1\) (influence flows leftwards). Consider that the shift vector  $(s_1^\pr, \dots, s_n^\pr) \in \mathbb{N}_0^n$. It now follows from our problem definition that $s_i^\pr = s_i + s_{i+1}$. Under plurality, for a voter $i$ to vote for c,  $s_i^\pr \geq \succ_{i}^{\PP}\left( c \right)$.

\begin{lemma}[$\star$]\label{left-sgift tight}
	There exists an optimal solution where $s_i \leq \succ_{i}^{\PP}\left( c \right) \Rightarrow s_i + s_{i-1} = \succ_{i-1}^{\PP}\left( c \right)$  
\end{lemma}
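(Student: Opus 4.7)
The plan is to take an arbitrary optimal shift vector $\vec{s}$ and, via a finite sequence of local exchange moves that do not increase the total cost, transform it into an optimal solution satisfying the claimed tightness. First I would invoke the path identity $s_{i-1}^{\prime} = s_{i-1} + s_i$, which follows from the problem definition because the only arc entering voter $i-1$ has tail $i$ and weight one; hence the condition $s_{i-1} + s_i = \succ_{i-1}^{\PP}(c)$ says exactly that voter $i-1$'s effective shift sits on the threshold for preferring $c$ under plurality.

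Next, for any index where voter $i-1$ does not rank $c$ on top in the optimal, I would argue that $s_{i-1}$ may be taken to be $0$, since trimming it can only hurt voter $i-2$ (to which any subsequent exchange can then be repeated), so the implication in the lemma becomes vacuous for that index. Thus I may restrict the analysis to indices where voter $i-1$ does vote for $c$, and under this assumption the conclusion of the implication is just the statement that voter $i-1$ is tight.

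Now suppose some index $i$ violates the lemma, i.e., $s_i \leq \succ_i^{\PP}(c)$ and voter $i-1$ votes for $c$ but $s_{i-1}+s_i > \succ_{i-1}^{\PP}(c)$. I would apply one of two local rewrites: (a) if $s_{i-1}\geq 1$, decrement $s_{i-1}$ by one; (b) otherwise decrement $s_i$ by one. Either decrement preserves voter $i-1$'s vote, since her effective shift drops by exactly one yet remains above threshold, and under the identity cost functions the total cost strictly decreases by one. If no other voter is flipped, optimality of $\vec{s}$ is contradicted and the violation is removed; if a voter is flipped, I would absorb the loss by a cost-neutral compensating swap as described below.

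The main obstacle is handling the knock-on effect of these decrements. Rewrite (a) decreases $s_{i-2}^{\prime}=s_{i-2}+s_{i-1}$ by one and may flip voter $i-2$; rewrite (b) decreases $s_i^{\prime}=s_i+s_{i+1}$ by one and may flip voter $i$. To handle (a), I would pair the decrement with the cost-neutral swap $s_{i-1}\leftarrow s_{i-1}+1$, $s_i\leftarrow s_i-1$, which keeps $s_{i-1}^{\prime}$ and $s_{i-2}^{\prime}$ at their original values while pushing one unit of pressure onto $s_i^{\prime}$; the case hypothesis $s_i\leq \succ_i^{\PP}(c)$ is precisely the handle that rules out voter $i$ being the one that breaks, possibly after one further cascade to the right using the symmetric swap between $s_i$ and $s_{i+1}$. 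A strong induction on the index, with the cost as the primary decreasing potential and distance-from-the-right as a tie-breaker, shows that the process terminates in at most $n$ iterations and leaves an optimum in which every index $i$ satisfies the stated implication.
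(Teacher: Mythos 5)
Your proposal does not go through, and the difficulty begins with the direction of the hypothesis. Read literally, the statement is false: take a three-voter path $v_3\to v_2\to v_1$ with $\succ_1^{\PP}(c)=5$ and $\succ_2^{\PP}(c)=\succ_3^{\PP}(c)=1$. The unique minimum-cost way to reach a majority is $s_3=1$, $s_2=s_1=0$ (convincing $v_3$ and $v_2$ at cost $1$, whereas convincing $v_1$ costs at least $5$); here $s_2=0\le\succ_2^{\PP}(c)$ yet $s_2+s_1=0\ne 5=\succ_1^{\PP}(c)$. The paper's own proof in fact assumes $s_i>\succ_i^{\PP}(c)$ throughout and shows that such a surplus can be trimmed to $\max\bigl(\succ_i^{\PP}(c),\,\succ_{i-1}^{\PP}(c)-s_{i-1}\bigr)$ without losing any vote; the intended (and provable) statement therefore has hypothesis $s_i>\succ_i^{\PP}(c)$: a shift at voter $i$ exceeding her own requirement exists only to serve voter $i-1$ and must then be exactly tight. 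Your exchange argument instead tries to force tightness at voter $i-1$ even when the slack there is an unavoidable by-product of $s_i$ being needed for voter $i$ herself, which no cost-preserving rearrangement can achieve.

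Two of your individual steps also fail on their own terms. First, setting $s_{i-1}=0$ when voter $i-1$ does not support $c$ does not make the implication vacuous: its hypothesis constrains $s_i$, not $s_{i-1}$ or voter $i-1$'s vote, and the counterexample above lives precisely in this case. Second, the claim that the case hypothesis $s_i\le\succ_i^{\PP}(c)$ ``rules out voter $i$ being the one that breaks'' is wrong: voter $i$'s vote is governed by $s_i'=s_i+s_{i+1}$, and if this sits exactly at $\succ_i^{\PP}(c)$ then decrementing $s_i$ flips voter $i$ regardless of how $s_i$ compares to $\succ_i^{\PP}(c)$ (e.g.\ $s_i=\succ_i^{\PP}(c)=2$, $s_{i+1}=0$). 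The compensating swap you then invoke (decrease $s_i$, increase $s_{i+1}$) is cost-neutral, so it no longer contradicts optimality and merely shunts the slack rightward; making that cascade terminate would require the potential argument you only allude to, and even then it cannot rescue the statement with the $\le$ hypothesis.
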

\longversion{
\begin{proof}
    First consider the case where $s_i + s_{i-1}$ is greater than $\succ_{i-1}^{\PP}\left( c \right)$. Now let us say there exists an optimal solution with $s_i > \ \succ_{i}^{\PP}\left( c \right)$. We can decrease the value of $s_i$ until $s_i = max(\succ_{i}^{\PP}\left( c \right), \succ_{i-1}^{\PP}\left( c \right) - s_{i - 1})$ without changing optimality. This is because we do not decrement $s_i$ past the point where either voter changes their vote. We have now obtained a new optimal solution that satisfies the above property.
    
    Now consider the case where $s_i + s_{i-1} < \succ_{i-1}^{\PP}\left( c \right)$. Once again let us say there exists an optimal solution with $s_i > \succ_{i}^{\PP}\left( c \right)$. We can decrease the value of $s_i$ to $\succ_{i}^{\PP}\left( c \right)$ without changing optimality. Since the $i^{th}$ voter will continue voting for candidate c and the $i - 1 ^{th}$ voter will continue to not vote for c. the value of ${s_i}$ does not impact any other voters. This completes the proof of the lemma.
\end{proof}
}

\begin{lemma}[$\star$]\label{right-shift-tight}
	There exists an optimal solution where $s_{i+1} > 0 \Rightarrow s_i + s_{i+1} = \succ_{i}^{\PP}\left( c \right)$.
\end{lemma}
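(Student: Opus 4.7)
The plan is to prove this lemma by an exchange argument in exactly the same spirit as \Cref{left-sgift tight}, now with $s_{i+1}$ playing the role that $s_i$ played there. Among all minimum-cost shift vectors $\sss=(s_1,\ldots,s_n)$ that make $c$ win, I would pick one maximising the right-biased potential $\Phi(\sss)=\sum_{j=1}^{n} j\cdot s_j$. This tiebreak selects an optimum that places the bribery mass as far toward the right end of the path as possible, and it is tailored so that the swap used below strictly increases it.

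Suppose, for contradiction, that at some index $i<n$ the selected $\sss$ satisfies $s_{i+1}>0$ yet $s_i+s_{i+1}\neq \succ_i^{\PP}(c)$. The analysis splits into two cases. In the \emph{overshoot} case $s_i+s_{i+1}>\succ_i^{\PP}(c)$, voter $i$ has slack, so decrementing $s_{i+1}$ by one still leaves voter $i$'s effective shift at or above threshold and voter $i$ continues to vote for $c$. In the \emph{undershoot} case $s_i+s_{i+1}<\succ_i^{\PP}(c)$, voter $i$ does not support $c$ in the first place and the same decrement does not change voter $i$'s status either. In both cases the only voter whose vote the decrement can possibly flip is voter $i+1$.

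I would then sub-case on voter $i+1$. If voter $i+1$ is not tight (i.e.\ $s_{i+1}+s_{i+2}\ne \succ_{i+1}^{\PP}(c)$) or is not currently voting for $c$, the bare decrement preserves the entire set of $c$-supporters while strictly lowering cost, contradicting the minimum-cost choice of $\sss$. If voter $i+1$ is tight and supports $c$, I would compensate by simultaneously incrementing $s_{i+2}$ by one; then the cost is unchanged, voter $i+1$'s effective shift is exactly restored, voter $i+2$'s effective shift only weakly increases (so the set of $c$-supporters weakly grows), yet $\Phi$ strictly increases by one, contradicting the maximality of the tiebreak.

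The delicate step, and the one I expect to be the main obstacle, is the boundary $i+1=n$, where no $s_{i+2}$ is available to absorb the compensation. The only troublesome subcase there is that voter $n$ is tight and voting ($s_n=\succ_n^{\PP}(c)$) while voter $n-1$ has slack. I would handle it by decrementing $s_{n-1}$ instead of $s_n$ (safe by the slack at voter $n-1$) and propagating any knock-on loss leftward by invoking \Cref{left-sgift tight} as an invariant on the reduced prefix, so that the overall cost strictly drops and optimality is again contradicted. Apart from this boundary case, the rest of the argument is routine bookkeeping.
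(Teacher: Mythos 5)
Your exchange argument for interior indices is, at its core, the same as the paper's: the paper also repairs a violating index $i$ by transferring mass from $s_{i+1}$ to $s_{i+2}$, which preserves voter $(i+1)$'s effective shift $s_{i+1}+s_{i+2}$, pulls voter $i$'s effective shift down to $\succ_{i}^{\PP}(c)$ (or to a value that is irrelevant, in the undershoot case), and weakly helps voter $i+2$. The paper performs the transfer in one shot; you perform it one unit at a time and use the potential $\Phi(\sss)=\sum_j j\cdot s_j$ over minimum-cost winning vectors to rule out infinite descent. That packaging is legitimate and arguably cleaner, and your case split on whether voter $i+1$ is tight is sound whenever $i+2\le n$.

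The genuine gap is exactly where you predicted it, and your patch does not close it. When $i+1=n$ the compensation target $s_{i+2}$ does not exist, and decrementing $s_{n-1}$ instead requires both that $s_{n-1}>0$ and that the overshoot $s_{n-1}+s_n-\succ_{n-1}^{\PP}(c)$ can be absorbed by $s_{n-1}$ alone; neither holds in general. In fact no local repair can work, because the statement itself fails at this boundary: take $n=2$ with $\succ_1^{\PP}(c)=1$ and $\succ_2^{\PP}(c)=2$, where both voters must be convinced. Voter $2$'s effective shift is $s_2$ alone, so $s_2\ge 2$ is forced, and the unique minimum-cost winning vector is $\sss=(0,2)$, for which $s_2>0$ yet $s_1+s_2=2\ne 1=\succ_{1}^{\PP}(c)$. (The paper's own proof has the identical hole --- it silently increments a nonexistent $s_{i+2}$ when $i+1=n$ --- so the honest conclusion is that the exchange, yours or theirs, establishes the claim only for indices with $i+2\le n$, and the boundary index must be excluded from the statement or handled by a genuinely different argument.) Two smaller points: the troublesome boundary subcase also arises in your undershoot branch, not only when voter $n-1$ has slack; and your appeal to the left-tightness lemma ``as an invariant on the reduced prefix'' is not an argument --- decrementing $s_{n-1}$ perturbs voter $n-2$, and you would need to show explicitly that the cascade terminates with a strict cost decrease.
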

\longversion{
\begin{proof}
	Suppose there exists an optimal shift vector $\mathbf{s}$ such that $s_{i+1} > 0$ but 
	\(
		s_i + s_{i+1} \neq \succ_{i}^{\PP}(c).
	\)
	We consider two cases.

	\paragraph{Case 1:} $s_i + s_{i+1} > \succ_{i}^{\PP}(c)$.  
	In this case, we can reduce $s_{i+1}$ to
	\(		s_{i+1}' := \min\left(0, \succ_{i}^{\PP}(c) - s_i\right),
	\)
	and simultaneously increase $s_{i+2}$ by the amount
	\(
		\delta := s_{i+1} - s_{i+1}'.
	\)
	This transformation preserves the condition that voter $i$ votes for candidate $c$, since
	\(
		s_i + s_{i+1}' = \min\left(s_i, \succ_{i}^{\PP}(c)\right) \ge \succ_{i}^{\PP}(c).
	\)
	Similarly, voter $i+1$’s decision remains unchanged since the total influence on voter $i+1$ depends on $s_i + s_{i+1}$, which is unchanged by this transformation.  
	Furthermore, subsequent voters can only be better off since increasing $s_{i+2}$ increases their total shift.  
	Since all voters have identity cost functions and the transformation only reallocates shift values without increasing their sum, the total cost is unaffected.  
	Thus, we obtain another optimal solution satisfying 
	\(		s_i + s_{i+1} = \succ_{i}^{\PP}(c).
	\)

	\paragraph{Case 2:} $s_i + s_{i+1} < \succ_{i}^{\PP}(c)$.  
	In this case, we can reduce $s_{i+1}$ to $0$ and increase $s_{i+2}$ by $s_{i+1}$. This modification preserves voter $i$’s decision, since $s_i + s_{i+1} < \succ_{i}^{\PP}(c)$ implies voter $i$ would not change their vote regardless of the decrease.  
	Voter $i+1$’s decision is unaffected because the total influence they experience remains unchanged.  
	Subsequent voters again can only be better off due to the increased $s_{i+2}$.  
	The cost remains unchanged due to the identity of cost functions and simple reallocation of shift amounts.  
	Thus, this transformation yields another optimal solution satisfying the property.

	In both cases, we have shown that an optimal solution exists in which
	\(
		s_{i+1} > 0 \ \Rightarrow \ s_i + s_{i+1} = \succ_{i}^{\PP}(c).
	\)
\end{proof}
}
\begin{lemma}[$\star$]
	There exists an optimal solution where $s_0$ is 0.  
\end{lemma}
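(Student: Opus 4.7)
The plan is an exchange argument in the spirit of the two preceding lemmas. Starting from an arbitrary optimal shift vector $\sss$, I will assume $s_0 > 0$ and define a modified vector $\sss'$ by $s_0' = 0$, $s_1' = s_1 + s_0$, and $s_i' = s_i$ for all $i \geq 2$. I then show that $\sss'$ is feasible and has the same total cost as $\sss$, hence is also optimal with $s_0' = 0$.

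The key structural observation is that voter $0$ sits at the left endpoint of the path and has no outgoing arc in the influence graph, so $s_0$ feeds only into voter $0$'s own effective shift $s_0^\pr = s_0 + s_1$ and nowhere else. Under the transformation I would verify three facts: voter $0$'s effective shift is preserved, since $s_0' + s_1' = 0 + (s_1 + s_0) = s_0 + s_1$; voter $1$'s effective shift changes from $s_1 + s_2$ to $(s_1 + s_0) + s_2$, which can only push $c$ higher in voter $1$'s ranking; and for every voter $i \geq 2$, the effective shift $s_i' + s_{i+1}' = s_i + s_{i+1}$ is unchanged. Consequently, every voter who ranked $c$ at the top under $\sss$ continues to do so under $\sss'$, so the winning candidate is preserved and $\sss'$ is feasible. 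Since all cost functions are the identity, the total cost satisfies $\sum_i s_i' = 0 + (s_1 + s_0) + s_2 + \cdots = \sum_i s_i$, matching that of $\sss$, so $\sss'$ is also optimal.

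The only real point to watch is the boundary character of voter $0$: the same transfer trick would \emph{not} work at an internal vertex $i$, because moving weight from $s_i$ to $s_{i+1}$ would decrease voter $(i-1)$'s effective shift $s_{i-1} + s_i$ and could cost a vote. It is precisely the absence of a left neighbor for voter $0$ that lets the transfer go through without harming any voter, which is the conceptual heart of the lemma.
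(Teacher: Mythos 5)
Your proposal is correct and matches the paper's proof, which likewise transfers $s_0$ to $s_1$, observes that voter $0$'s effective shift is unaffected, and notes that subsequent voters can only benefit; you simply spell out the verification in more detail. No issues.
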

\longversion{
\begin{proof}
    Similar logic to above, transfer $s_0$ to $s_1$. This does not affect $\succ_{0}^{\PP}\left( c \right)$, and can only positively affect subsequent voters.
\end{proof}
}
From the preceding lemmas, it follows that for every value of $s_{i+1}$, it suffices to ensure that the condition 
\(
	s_i + s_{i+1} = \succ_{i}^{\PP}(c)
\)
is either satisfied exactly or disregarded entirely by setting $s_{i+1} = 0$.  

Since all cost functions are identical and linear, i.e., $\pi_i(s_i) = s_i$ for all $i \in [n]$, the objective reduces to minimizing the total cost
\(
	\sum_{i \in [n]} s_i.
\)

For the majority voting rule, this observation naturally leads to a dynamic programming formulation. Specifically, the problem reduces to deciding, for each voter, whether to satisfy their shift requirement or to ignore it completely, subject to the majority constraint. This decision structure enables a dynamic programming approach, described as follows.

Consider a directed path of $n$ voters $v_n \to v_{n-1} \to \dots \to v_1$, where influence flows leftwards along the path. Each voter $v_i$ can be shifted directly by the briber and influenced by the voter to the right ($v_{i+1}$). Let $\tau_i$ denote the minimum effective shift required for voter $v_i$ to vote for the preferred candidate $c$, i.e., the initial rank of $c$ in $\succ_i^\PP$. Let $\pi_i$ denote the cost of applying one unit of shift to voter $v_i$ (identity cost or general).

\paragraph{Effective shift.} For $i = 1,\dots,n-1$, the effective shift at voter $i$ is
\[
s_i' = s_i + s_{i+1}, \quad s_{n+1} := 0,
\]
where $s_i$ is the direct shift applied to voter $i$ and $s_{i+1}$ is the influence propagated from voter $i+1$. Voter $i$ is \emph{convinced} if $s_i' \ge \tau_i$.

We define the DP state
\(
dp[i,s_{i},b^\pr] = \text{max. number of voters convinced among } \{v_i, v_{i-1}, \dots, v_1\}
\)
when processing from right to left, with:
\begin{itemize}
    \item $i$ = current voter being processed,
    \item $s_{i}$ = shift applied to voter $v_{i}$,
    \item $b^\pr$ = budget for direct shifts.
\end{itemize}

\paragraph{Base case.} For the rightmost voter $v_n$:
\[
dp[n,s_n,b^*] = 
\begin{cases}
1 &\text{if } \tau_n = 0\\
1 & \text{if } s_n + s_{n+1} \ge \tau_n \text{ and } s_n \cdot \pi_n \le b^*, \\
0 & \text{otherwise},
\end{cases}
\].

\paragraph{Transition.} For $i = n-1,\dots,1$ :
\[
\begin{aligned}
dp[i,s_i,b^\pr] = \max \Bigl\{\, 
& dp[i\!+\!1,s_{i+1},b^\pr - s_i \pi_i] 
   + \mathbf{1}_{\,s_i + s_{i+1} \ge \tau_i,\; s_i\pi_i\le b`},\\
& dp[i\!+\!1,s_{i+1},b^\pr]
\Bigr\}.
\end{aligned}
\]
Here:
\begin{itemize}
    \item $s_i$ = direct shift applied to voter $v_i$,
    \item $s_i + s_{i+1}$ = effective shift accounting for influence from $v_{i+1}$,
    \item the DP value increases by $1$ if voter $i$ is convinced,
    \item for the next step (leftwards), $s_i$ becomes the propagated shift $s_{i}$ to $v_{i-1}$.
\end{itemize}

Let b be the total budget. The maximum number of voters that can be convinced is
\(
\max \{\min_{0\leq s_1 \leq \tau_1}dp[1,s_1,b]\}
\)
and check if this attains at least $\lceil{\frac{n+1}{2}\rceil}$ or not.

\paragraph{Complexity.} Let $\tau_{\max} = \max_i \tau_i$, which can be upper bounded by $n$. The total no. of DP states= $\OO(n \cdot (\tau_{\max}+1) \cdot (b+1))$. Each state considers $O(b / c_i)$ feasible shifts $s_i$, giving $\OO(n^2 \cdot b)$ with identity costs.  
- Using standard optimization, the time and space complexity remains polynomial in $n$ and b.


\begin{theorem}
There exists a dynamic programming algorithm that computes an optimal shift vector \(\sss^\ast = (s_1^\ast, \dots, s_n^\ast)\) for \SBON on a path graph with linear cost functions and the majority voting rule, running in time  \(\OO(n^2 \cdot b)\).
\end{theorem}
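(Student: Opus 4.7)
The plan is to verify that the dynamic programming recurrence already sketched produces an optimum and then count states. Correctness hinges on the structural lemmas for directed paths, which restrict the form of an optimal shift vector: there exists an optimum in which $s_0=0$, and for every voter $i$ either $s_i=0$ or the combined shift $s_i+s_{i+1}$ exactly meets the demand $\tau_{i-1}$ of the voter immediately to the left. This rules out overspending on any voter and bounds $s_i\le \tau_i\le n$, so the search space has polynomial size.

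First I would formally define $dp[i,s_i,b^\pr]$ as the maximum number of voters among $\{v_n,v_{n-1},\dots,v_i\}$ that can be convinced by a feasible choice of shifts $(s_i,s_{i+1},\dots,s_n)$ whose cumulative cost is at most $b-b^\pr$ (equivalently, $b^\pr$ is the budget still available for voters $1,\dots,i-1$). Processing right to left reflects the orientation of the arcs $(i+1,i)$: when we decide $s_i$ we already know $s_{i+1}$, so the indicator $\mathbf{1}_{s_i+s_{i+1}\ge \tau_i}$ can be evaluated locally. I would then check the base case at $i=n$ using $s_{n+1}=0$, verify the transition by its two branches (convince voter $i$, or skip voter $i$), and extract the final answer as $\max_{0\le s_1\le \tau_1} dp[1,s_1,b]$. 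The preferred candidate wins if and only if this value is at least $\lceil (n+1)/2\rceil$.

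For the complexity, the structural lemmas restrict $s_i\in\{0,1,\dots,\tau_i\}$ with $\tau_i\le n$, so the number of states is $\OO(n\cdot n\cdot b)=\OO(n^2\cdot b)$. Each transition inspects only $dp[i+1,s_{i+1},\cdot]$ at two budget values and evaluates one indicator, hence costs $\OO(1)$. Multiplying gives the claimed $\OO(n^2\cdot b)$ runtime, with polynomial space after standard rolling-array compression of the $i$-dimension.

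The main obstacle will be the correctness argument that the right-to-left DP never commits to a locally optimal choice of $s_{i+1}$ that blocks a global optimum. Concretely, given any optimal global shift vector $(s_1^\ast,\dots,s_n^\ast)$, I must show that its suffix restricted to $\{v_n,\dots,v_i\}$ is itself an optimal subsolution realising the value $dp[i,s_i^\ast,b^\ast]$ for the corresponding residual budget $b^\ast$. Combined with the right-shift-tight and left-shift-tight lemmas, which certify that searching over $s_i\in\{0,1,\dots,\tau_i\}$ loses no optimum, this standard optimal-substructure argument closes the proof.
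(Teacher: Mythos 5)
Your proposal follows essentially the same route as the paper: the same structural lemmas (tightness of $s_i+s_{i+1}$ and $s_0=0$) to bound each shift by $\tau_i\le n$, the same right-to-left DP over states $(i,s_i,b')$ with the indicator $\mathbf{1}_{s_i+s_{i+1}\ge\tau_i}$, the same majority-threshold check at the end, and the same state-counting argument for the $\OO(n^2\cdot b)$ bound. If anything, your formulation is the more internally consistent one --- the paper's state description says the value counts convinced voters in $\{v_i,\dots,v_1\}$ while its base case and transitions actually accumulate over the suffix $\{v_n,\dots,v_i\}$, and its final extraction is written as a max--min where your plain $\max_{0\le s_1\le\tau_1} dp[1,s_1,b]$ is what is intended --- so no substantive divergence remains.
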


\subsection{Bipartite Graphs}
In this section, we show that \SBON is \NPC even for bipartite graphs. We give a polynomial time reduction from \textsc{Set Cover}, 
which is known to be \NPC.

An instance of \textsc{Set Cover} consists of a universe $\mathcal{U}$ of $n$ elements, 
a family $\alpha=\{S_1,\ldots,S_m\}$ of $m$ subsets of $\mathcal{U}$, 
and an integer $k$. The task is to determine whether there exists a subfamily $\{S_{i_1},\ldots,S_{i_k}\}$ that covers all of $\mathcal{U}$.

Given an instance $(\mathcal{U},\alpha,k)$ of \textsc{Set Cover}, construct an undirected bipartite graph $G'=(L\cup R,E')$ as follows.  
For each element $u\in \mathcal{U}$, create a vertex in $L_{\mathrm{orig}}$.  
For each set $S_i\in \alpha$, create a vertex $r_i\in R_{\mathrm{orig}}$, and add edges $\{r_i,u\}$ for all $u\in S_i$.  
Next, add the following gadgets. Introduce $m-k+1$ new vertices $L_{\mathrm{new}}$ to $L$, each adjacent to every vertex of $R_{\mathrm{orig}}$ (i.e., complete bipartite connections).  
Also add $n+k-1$ isolated vertices $R_{\mathrm{new}}$ to $R$.  
Thus, the bipartition is $L=L_{\mathrm{orig}}\cup L_{\mathrm{new}}$ and $R=R_{\mathrm{orig}}\cup R_{\mathrm{new}}$. The total number of vertices is $2(n+m)$.

We now construct an equivalent \SBON instance. There are two candidates, denoted by $c_1$ and $c_2$, where $c_2$ is the preferred candidate. Initially, every voter has a preference order $c_1 \succ c_2$. Bribing a voter costs~$1$. The budget is set to $b=k$. The winning condition is that candidate~$c_2$ must obtain at least $t= n+m+1$ supporters.

\begin{lemma}[$\star$]\label{sc_equi}
There exists a set cover of size at most $k$ if and only if there exists a bribery strategy for \SBON of cost at most $k$.
\end{lemma}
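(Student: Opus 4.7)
The plan is to establish the $(\Rightarrow)$ direction constructively, by exhibiting a bribery strategy induced by a set cover, and the $(\Leftarrow)$ direction by a tight double-counting argument that forces the bribed right-side vertices to themselves form a cover. Throughout, I interpret the bipartite graph as a DAG with every edge oriented from $R$ to $L$ (consistent with the paper's claim that the reduction yields a bipartite DAG): only vertices in $R_{\mathrm{orig}}$ can propagate influence to their neighbors.

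For the forward direction, suppose $\{S_{i_1},\dots,S_{i_k}\}$ covers $\mathcal{U}$ (padding with arbitrary sets if the minimum cover has strictly fewer than $k$ sets). I would bribe each $r_{i_j}\in R_{\mathrm{orig}}$ with a unit shift, paying total cost exactly $k=b$. Each bribed $r_{i_j}$ contributes $1$ to the effective shift of every out-neighbor, namely the elements of $S_{i_j}\subseteq L_{\mathrm{orig}}$ together with all of $L_{\mathrm{new}}$ (by the complete bipartite adjacency). Since the chosen sets cover $\mathcal{U}$, every vertex of $L_{\mathrm{orig}}$ receives a positive shift, and every vertex of $L_{\mathrm{new}}$ does as well. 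Thus the supporters of $c_2$ number $k + n + (m-k+1) = n+m+1 = t$, meeting the threshold.

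For the reverse direction, let $B$ denote the set of directly bribed voters with $|B|\le k$, and partition $B = B_R\uplus B_I\uplus B_L^{\mathrm{orig}}\uplus B_L^{\mathrm{new}}$ according to the four vertex classes. Because every arc points from $R$ to $L$, only $B_R$ propagates influence, and the entire influenced region lies in $L_{\mathrm{new}}\cup U_R$, where $U_R:=\bigcup_{r_i\in B_R}S_i$ (and is empty when $B_R=\emptyset$). Exploiting disjointness across the four vertex classes, the $c_2$-supporter count is bounded by
\[ |B_R|+|B_I|+(m-k+1)+|B_L^{\mathrm{orig}}\cup U_R| \;\le\; k+(m-k+1)+n \;=\; t. \]
Since the strategy is successful, this inequality must be tight on both slacks: $|B_R|+|B_I|=k$ (forcing $B_L^{\mathrm{orig}}=B_L^{\mathrm{new}}=\emptyset$) and $|B_L^{\mathrm{orig}}\cup U_R|=n$. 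Combining these yields $U_R=\mathcal{U}$, so $\{S_i:r_i\in B_R\}$ is a set cover of size $|B_R|\le k$.

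The main obstacle, I expect, is the careful bookkeeping that makes the upper bound above tight: one must argue that supporters in $L_{\mathrm{new}}$, $L_{\mathrm{orig}}$, $R_{\mathrm{orig}}$, and $R_{\mathrm{new}}$ are counted exactly once, and that the dense $L_{\mathrm{new}}$--$R_{\mathrm{orig}}$ biclique cannot be abused to pad the supporter count without actually covering $\mathcal{U}$. The degenerate case $B_R=\emptyset$ is handled immediately: supporters then number at most $|B|\le k<n+m+1=t$ under the standing assumption $k\le m$, so such a strategy cannot succeed. Once these two pieces are in place, both directions follow directly from the displayed chain of inequalities.
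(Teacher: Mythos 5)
Your forward direction matches the paper's, except for your (correct and necessary) observation that the cover must be padded to exactly $k$ sets so that the supporter count reaches exactly $t=n+m+1$; the paper silently assumes a cover of size exactly $k$. Your backward direction takes a genuinely different route. The paper first proves a standalone claim that one may assume $B\subseteq R_{\mathrm{orig}}$ via a local exchange argument (replace a bribed $L$-vertex by an $R_{\mathrm{orig}}$ neighbour, discard bribes in $L_{\mathrm{new}}$ and $R_{\mathrm{new}}$) and only then counts supporters; you instead bound the supporter count class by class, note that the bound already equals $t$, and let tightness force $B\cap L=\emptyset$ and $U_R=\mathcal{U}$ in one stroke. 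Your version is cleaner and avoids the somewhat informal exchange step.

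The one substantive divergence is the edge orientation. The paper's construction is explicitly an \emph{undirected} bipartite graph, whereas your entire backward count --- in particular the bound $|B_R|+|B_I|$ on supporters inside $R$ --- relies on influence flowing only from $R$ to $L$. In the undirected model a single bribed vertex of $L_{\mathrm{new}}$ influences all $m$ vertices of $R_{\mathrm{orig}}$, which breaks your inequality: one then gets $2m-k+1+|\bigcup_j S_{i_j}|$ supporters by bribing one $L_{\mathrm{new}}$-vertex and $k-1$ sets, and for $m$ large relative to $n+k$ this clears the threshold without covering $\mathcal{U}$. (The same issue undermines the paper's own exchange claim, which asserts without justification that bribing a vertex of $L$ ``influences only that vertex.'') So your proof correctly establishes the directed/DAG variant that the paper's later corollary refers to, but as written it does not cover the undirected statement the lemma is attached to; you should either make the orientation an explicit part of the construction or supply a separate argument for the undirected case.
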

\longversion{
\begin{proof}
Suppose first that the given \textsc{Set Cover} instance is a \yes-instance, and let $\{S_{i_1},\ldots,S_{i_k}\}$ be a cover of size $k$. Bribe the corresponding vertices $\{r_{i_1},\ldots,r_{i_k}\}\subseteq R_{\mathrm{orig}}$. Then all $n$ vertices in $L_{\mathrm{orig}}$ are influenced, since the chosen subfamily covers $\mathcal{U}$. Moreover, all $m-k+1$ vertices in $L_{\mathrm{new}}$ are influenced, as they are adjacent to every vertex in $R_{\mathrm{orig}}$. Finally, the $k$ bribed vertices in $R_{\mathrm{orig}}$ themselves support candidate~$c_2$. Therefore, the total number of supporters of $c_2$ is $n+(m-k+1)+k = n+m+1$, meeting the majority threshold. Hence the constructed \SBON instance is a \yes-instance.

Conversely, suppose that the constructed \SBON instance is a \yes-instance. Then there exists a bribery set $B$ with $|B|\leq k$ such that at least $n+m+1$ voters support candidate~$c_2$ after bribery and propagation. Partition $B$ as
\[
B=(L_1\cup L_2)\cup(R_1\cup R_2),
\]
where $L_1\subseteq L_{\mathrm{orig}}$, $L_2\subseteq L_{\mathrm{new}}$, 
$R_1\subseteq R_{\mathrm{orig}}$, and $R_2\subseteq R_{\mathrm{new}}$.

\begin{claim}
There exists a solution with $B\subseteq R_{\mathrm{orig}}$.
\end{claim}

\begin{proof}
We first argue that an optimal bribery strategy of cost at most $k$ may, without loss of generality, be assumed to bribe only vertices in $R_{\mathrm{orig}}$. Indeed, bribing a vertex of $L_1$ influences only that vertex, whereas bribing one of its neighbors in $R_{\mathrm{orig}}$ influences not only that vertex of $L_1$ but possibly additional vertices as well. Hence any bribery that targets a vertex of $L_1$ can be replaced by bribing one of its neighbors in $R_{\mathrm{orig}}$ without decreasing the number of supporters. Similarly, bribing a vertex of $L_2$ is redundant, since every vertex of $L_2$ is already influenced whenever any neighbor in $R_{\mathrm{orig}}$ is bribed. Finally, bribing a vertex of $R_2$ yields only that single supporter, since these vertices are isolated, and replacing such a bribery with one targeting a vertex of $R_{\mathrm{orig}}$ never decreases the total number of influenced vertices. 
\end{proof}

Next, consider the requirement of reaching $n+m+1$ supporters. Since all $m-k+1$ vertices of $L_{\mathrm{new}}$ must be influenced, and each such vertex has neighbors only in $R_{\mathrm{orig}}$, this necessitates that $B$ contain at least one vertex of $R_{\mathrm{orig}}$. Moreover, all $n$ vertices of $L_{\mathrm{orig}}$ must be influenced as well, for otherwise the total number of supporters is strictly less than $n+m+1$. Consequently, every vertex of $L_{\mathrm{orig}}$ is adjacent to some bribed vertex in $R_{\mathrm{orig}}$, which means that $B$ dominates all of $L_{\mathrm{orig}}$ and satisfies $|B|\leq k$. By construction of the reduction, each vertex of $R_{\mathrm{orig}}$ corresponds to a set in the family of the original set cover instance, and domination of $L_{\mathrm{orig}}$ by $B$ corresponds precisely to covering the universe $\mathcal{U}$. Therefore, the bribery strategy $B$ induces a set cover of $\mathcal{U}$ of size at most $k$.
\end{proof}
}
With this lemma, we arrive at the following theorem.

\begin{theorem}
    \SBON for bipartite graphs is \NPC even for two candidates, identity cost functions and uniform weights.
\end{theorem}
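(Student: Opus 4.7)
The plan is to obtain the theorem as a direct consequence of the reduction constructed just before \Cref{sc_equi} together with the equivalence established by that lemma. Since \textsc{Set Cover} is \NPC, it suffices to verify three things: that \SBON is in \NP, that the reduction is polynomial-time computable and produces an instance satisfying the stated restrictions (bipartite graph, two candidates, identity cost functions, uniform edge weights), and that the forward and backward directions of Lemma~\ref{sc_equi} yield the biconditional \yes-instance correspondence.

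First I would argue membership in \NP: given an instance of \SBON, a shift vector $\sss$ is a succinct certificate, since we can compute the propagated shifts $s_i^\pr$, total cost $\sum_i \pi_i(s_i)$, and the resulting winner under the specified voting rule in polynomial time. Next, I would observe that the constructed graph $G^\prime=(L\cup R, E^\prime)$ is bipartite by construction, and contains $2(n+m)$ vertices and $O(nm)$ edges, so the reduction runs in polynomial time in the size of the \textsc{Set Cover} instance. Every edge of $G^\prime$ carries weight~$1$, every voter has unit bribery cost (i.e., the identity cost function $\pi_i(s)=s$ restricted to $s\in\{0,1\}$ suffices), and only two candidates $c_1,c_2$ appear, so the produced instance satisfies all restrictions demanded by the theorem statement.

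With these structural observations in place, I would invoke \Cref{sc_equi}: a set cover of size at most $k$ exists if and only if a bribery strategy of cost at most $b=k$ makes $c_2$ the majority winner (with threshold $n+m+1$). Combined with polynomiality of the reduction and \NP-membership of \SBON, this yields \NP-completeness.

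The only step that requires any care is ensuring that the reduction indeed produces a bipartite graph even after the gadget augmentations. The original edges between $L_{\mathrm{orig}}$ and $R_{\mathrm{orig}}$ respect the bipartition, the complete bipartite edges from $L_{\mathrm{new}}$ to $R_{\mathrm{orig}}$ also respect it, and the vertices of $R_{\mathrm{new}}$ are isolated, so no intra-side edges are introduced; hence $G^\prime$ is bipartite with bipartition $(L_{\mathrm{orig}}\cup L_{\mathrm{new}},\ R_{\mathrm{orig}}\cup R_{\mathrm{new}})$. Since all the substantive work has already been discharged in the construction and \Cref{sc_equi}, I do not anticipate any remaining obstacle; the theorem is essentially a packaging of that lemma.
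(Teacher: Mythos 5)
Your proposal is correct and follows exactly the paper's route: the theorem is obtained by packaging the construction preceding \Cref{sc_equi} with the equivalence of that lemma, plus the routine checks of \NP-membership, polynomial-time computability, and that the gadget preserves bipartiteness, two candidates, identity costs, and unit weights. No substantive difference from the paper's argument.
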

\section{Results: Parameterized Algorithms}
We discuss the plausibly of parametrized algorithms for \SBON. While the general problem is intractable, certain structural properties of the underlying social network can make the problem efficiently solvable. In real-world networks, parameters such as treewidth or feedback vertex set size are often small, even when the overall graph is large. Exploiting these parameters allows us to design fixed-parameter tractable (\FPT) algorithms that efficiently handle such instances. In this section, we explore how the problem behaves under these natural graph parameters and election parameters and present corresponding \FPT algorithms or establish W-Hardness that leverage the bounded structural complexity of the network.
\subsection{Bounded Treewidth as a Parameter}
We consider a nice tree decomposition $(\mathbb{T},\mathcal{X})$ of the underlying graph. 
Let $(\CC, \PP, \GG, c, \Pi, b)$ be an input instance of \SBON 
such that $\tw = \tw(\GG)$. 
We define a function $\ell : V_{\mathbb{T}} \rightarrow \mathbb{N}$ as follows. 
For the root $r$ of $\mathbb{T}$, we set $\ell(r) = 0$, and for a node $t \in V_{\mathbb{T}}$, 
we let $\ell(t) = \sf{dist}_{\mathbb{T}}(t,r)$, where $\sf{dist}_{\mathbb{T}}$ 
denotes the distance in the decomposition tree. 
Thus, $\ell(r)=0$, and in general the values of $\ell$ range between $0$ and $L$. 

For a node $t \in V_{\mathbb{T}}$, we denote by $V_t$ the set of vertices 
contained in the bags in the subtree of $\mathbb{T}$ rooted at $t$, 
and let $\GG_t = \GG[V_t]$ denote the subgraph induced by $V_t$. 

Now, we describe a dynamic programming algorithm over $(\mathbb{T},\mathcal{X})$. 
We define the following states.

\paragraph{DP State.}
For each node $t \in V_{\mathbb{T}}$ with bag $X_t$, a subset $S \subseteq X_t$ of bribed vertices, 
a subset $I \subseteq X_t$ of influenced vertices, and a budget value $0 \leq b' \leq b$, 
we define the dynamic programming entry
\(
DP[t,S,I,b'] \;=\; 
\text{maximum number of influenced vertices in the subtree of $t$}.
\)

subject to:
\begin{itemize}
    \item exactly the vertices in $S$ are bribed in $X_t$,
    \item exactly the vertices in $I$ are influenced in $X_t$,
    \item the total bribery cost inside the subtree of $t$ is $b'$.
\end{itemize}
If no such configuration exists, we set $DP[t,S,I,b'] = -\infty$.

\paragraph{Transitions.}
We give the recurrence rules for the four node types in a nice tree decomposition.

\paragraph{Leaf.} 
If $X_t = \varnothing$, then 
\(
DP[t,\varnothing,\varnothing,0] = 0,
\)
and all other entries are $-\infty$.

\paragraph{Introduce node.} 
Suppose $t$ introduces a vertex $v$, so $X_t = X_{t'} \cup \{v\}$. 
For a state $(S,I,b')$ at $t$, let $S' = S \setminus \{v\}$ 
and consider a child state $(S',I',b'')$ at $t'$.
We require $I$ to be consistent with $(S,I')$:
\begin{itemize}
    \item If $v \in S$, then $v \in I$ and we pay cost $c(v)$. 
    Moreover, every neighbor $u \in N(v)\cap X_{t'}$ must belong to $I$ 
    (since bribing $v$ influences them). We look at the child node $t'$ with $S' = S \setminus \{v\}$ and $I'= I \setminus (N(v)\cap X_{t'})$.
    In this case,
    \begin{equation*}
\begin{split}
DP[t,S,I,b'] \;=\;&\; DP[t',S',I',b'-c(v)] + 1 \\
&+ \big|\{ u \in N(v)\cap X_{t'} : u \notin I'\}\big|.
\end{split}
\end{equation*}
    \item If $v \notin S$, then $v \in I$ only if some bribed neighbor in $V_t$ influences $v$; 
    otherwise $v \notin I$. In either case we require $I \cap X_{t'} = I'$, and
    \[
    DP[t,S,I,b'] \;=\; 
    DP[t',S',I',b'] 
    + \mathbf{1}_{\{\exists u \in N(v)\cap V_t : u \in S'\}}.
    \]
\end{itemize}

\paragraph{Forget node.} 
Suppose $t$ forgets a vertex $v$, so $X_t = X_{t'} \setminus \{v\}$. 
Since vertices are already counted when they become influenced, 
no extra contribution is added here. 
We simply set
\[
\begin{aligned}
DP[t,S,I,b'] = \max \big\{ & DP[t', S \cup \{v\}, I \cup \{v\}, b'], \\
                            & DP[t', S, I \cup \{v\}, b'], \\
                            & DP[t', S, I, b'] \big\}.
\end{aligned}
\]

\paragraph{Join node.} 
Suppose $t$ has two children $t_1,t_2$ with the same bag $X_t$. 
Both children must agree on $S$ and $I$ for the bag. 
Since bag vertices are counted in both subtrees, 
we subtract $|I|$ once to avoid double counting:
\[
DP[t,S,I,b'] \;=\; 
\max_{b_1+b_2=b'} 
\Bigl( DP[t_1,S,I,b_1] + DP[t_2,S,I,b_2] - |I| \Bigr).
\]

\paragraph{Root.} 
At the root $r$, the total number of influenced vertices is simply
\(
\max_{S,I,\,b'\leq b} \;DP[r,S,I,b'].
\)

\begin{lemma}[$\star$]\label{treewidth-correctness}   
Let $\mathbb{T}$ be the nice tree decomposition used by the DP and let $\tw$ denote the treewidth (so every bag has size at most $\tw+1$). Suppose the DP table \(
DP[t,S,I,b']\)
has been computed for every node $t$ of $\mathbb{T}$, every $S,I\subseteq X_t$, and every $0\le b'\le b$.  
Then one can recover an explicit optimal bribery (equivalently, the binary shift vector $s\in\{0,1\}^n$) that achieves
\(
\max_{S,I,\,b'\le b} DP[r,S,I,b']
\)
\end{lemma}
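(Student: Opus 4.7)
The plan is to perform a standard DP traceback over the nice tree decomposition, using the table values and the recurrences themselves as the witness for each choice. I would first pick a root entry $(S^\ast,I^\ast,b^\ast)$ with $b^\ast\le b$ attaining $\max_{S,I,b'\le b} DP[r,S,I,b']$; call this the active state at the root. I would then descend into $\mathbb{T}$ maintaining, at every visited node $t$, one active state $(S_t,I_t,b'_t)$ that is guaranteed to be realisable by an actual bribery of $\GG_t$ achieving the stored DP value.

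The recovery is carried out by cases on the four node types, mirroring the recurrences given in the excerpt. At a \emph{leaf} there is nothing to record. At a \emph{forget} node that drops $v$, I inspect the three candidate child states in the $\max$ defining $DP[t,S_t,I_t,b'_t]$ and pick any child state $(S_{t'},I_{t'},b'_{t'})$ whose DP value equals $DP[t,S_t,I_t,b'_t]$; this exists by definition of $\max$. At an \emph{introduce} node that adds $v$, I consult the case $v\in S_t$ versus $v\notin S_t$: in the first case I commit $v$ to the global bribery set $B$ and pass to the child state $(S_t\setminus\{v\},I_t\setminus(N(v)\cap X_{t'}),b'_t-c(v))$ whose value attains the recurrence; in the second case I simply pass to $(S_t\setminus\{v\},I_t\cap X_{t'},b'_t)$ plus the unique Boolean indicator used. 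At a \emph{join} node I search the at most $b+1$ splits $b'_t=b_1+b_2$ and pick any witness pair $(b_1,b_2)$ for which $DP[t_1,S_t,I_t,b_1]+DP[t_2,S_t,I_t,b_2]-|I_t|$ equals the stored value; this pair always exists because the stored value was obtained as such a maximum. The union of all vertices committed at introduce nodes forms the bribery set $B$, and the shift vector $s\in\{0,1\}^n$ is defined by $s_v=1$ iff $v\in B$.

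Correctness follows by induction on the subtree height: the inductive hypothesis is that for any node $t$ and any realisable state $(S_t,I_t,b'_t)$ with $DP[t,S_t,I_t,b'_t]\ne-\infty$, the recursive procedure outputs a bribery set $B_t\subseteq V_t$ of total cost $b'_t$ that induces exactly the on-bag pattern $(S_t,I_t)$ and influences at least $DP[t,S_t,I_t,b'_t]$ vertices of $V_t$. The base case is trivial. For the inductive step each transition rule was derived so that any optimal realisation of $(S_t,I_t,b'_t)$ decomposes into (i) the local contribution at $t$ and (ii) an optimal realisation of the chosen child state, which by hypothesis can be reconstructed. Applying this to the root state $(S^\ast,I^\ast,b^\ast)$ gives a bribery of cost at most $b$ whose number of influenced vertices equals $\max_{S,I,b'\le b}DP[r,S,I,b']$, as required.

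The only subtle point, and the main obstacle, is the \emph{consistency of influence} across introduce nodes: a bag vertex $u$ may be marked as influenced in $I_t$ because some neighbour already in $V_t$ has been bribed, but the witness for that neighbour might lie deep inside the subtree. I would handle this exactly as the recurrence does, by letting $I_t$ encode precisely the promised in-bag influence pattern and letting the introduce rule check that every newly relevant edge $\{v,u\}$ with $u\in X_{t'}$ is reflected in $I_t$ whenever $v\in S_t$; conversely, when $v\notin S_t$, the indicator $\mathbf{1}_{\{\exists u\in N(v)\cap V_t: u\in S'\}}$ in the recurrence guarantees that the influence of $v$ is consistent with the bribery already committed in $V_t$. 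Because bag size is at most $\tw+1$, the traceback at each node runs in $2^{O(\tw)}\cdot b$ time, giving total reconstruction time $2^{O(\tw)}\cdot n\cdot b$, matching the DP's own running time.
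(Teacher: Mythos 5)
Your proposal is correct and follows essentially the same route as the paper: an induction over the four node types of the nice tree decomposition establishing that every finite entry $DP[t,S,I,b']$ is realised by an actual bribery of $\GG_t$, followed by a standard witness-based traceback from an optimal root entry. The paper proves the same invariant in detail but compresses the reconstruction step into a remark that it follows by ``standard dynamic programming techniques,'' which you have simply spelled out, including the in-bag influence-consistency point that the recurrences already enforce.
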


The correctness of this lemma allows us to prove the following result.

\longversion{
\begin{proof} We maintain the following invariant: for every node $t$ of the nice tree decomposition with bag $X_t$, for every $S\subseteq X_t$, $I\subseteq X_t$, and $0\le b'\le b$, the entry
\(
DP[t,S,I,b']
\)
equals the maximum number of vertices in $V_t$ that can be influenced by any bribery configuration restricted to the subtree rooted at $t$ that
\begin{enumerate}
  \item bribed exactly the vertices $S$ inside the bag $X_t$,
  \item results in exactly the set $I$ of influenced vertices inside the bag $X_t$, and
  \item spends total bribery cost exactly $b'$ inside $V_t$.
\end{enumerate}
If no such configuration exists, $DP[t,S,I,b']=-\infty$.

We prove the invariant by induction on the height (distance from the root) of node $t$.

\paragraph{\textbf{Base (leaf)}.}
If $t$ is a leaf then $X_t=\varnothing$ and $V_t=\varnothing$. The only feasible configuration uses zero budget and influences zero vertices; hence
\(
DP[t,\varnothing,\varnothing,0]=0,
\)
and all other entries are infeasible ($-\infty$). This matches the invariant.

\textbf{Inductive step.}
Assume the invariant holds for all children of $t$. We show each node-type transition computes the correct maximum and therefore the invariant holds at $t$.
\begin{enumerate}
    \item Introduce node. Suppose $t$ introduces vertex $v$ and its child is $t'$ with $X_t = X_{t'}\cup\{v\}$. Fix $(S,I,b')$ for $t$. Any feasible completion of this partial assignment in $V_t$ induces a feasible completion in $V_{t'}$ with bag assignment $(S',I')$ where $S' = S\setminus\{v\}$ and $I' = I\cap X_{t'}$. There are exactly three mutually exclusive possibilities for $v$ relative to the bribery/influence decisions:
\begin{enumerate}
  \item $v\in S$ (we bribed $v$). Then $v$ must be in $I$, we pay cost $c(v)$, and bribing $v$ may additionally influence neighbours of $v$ that lie in $X_{t'}$. Every feasible completion in $V_t$ corresponds to a feasible completion in $V_{t'}$ with $S',I'$ where neighbours already counted are removed as in the recurrence. By the inductive hypothesis the child DP value yields the maximum number of influenced vertices in $V_{t'}$ consistent with that child assignment; adding $v$ and the newly influenced neighbours yields exactly the total in $V_t$. The recurrence
   \[
    DP[t,S,I,b'] \;=\; 
    DP[t',S',I',b'-c(v)] + 1 
    + \big|\{ u \in N(v)\cap X_{t'} : u \notin I'\}\big|.
    \]
  therefore computes the maximum over all completions in this case (and is $-\infty$ when $b'-c(v)<0$).
  \item $v\notin S$ but $v\in I$ (not bribed but already influenced). Then some bribed neighbour in $V_t$ must influence $v$. Any such completion restricts to a child completion with $(S',I',b')$ and the validity of $v\in I$ is exactly captured by the indicator $\mathbf{1}_{\{\exists u\in N(v)\cap X_{t'}:u\in S'\}}$ (or more generally, check for bribed neighbours in the processed part). By the inductive hypothesis, $DP[t',S',I',b']$ is the maximum achievable in the child; adding the one for $v$ when allowed gives the correct value.
  \item $v\notin S$ and $v\notin I$. Then $v$ contributes nothing and feasible completions correspond one-to-one with child completions having $(S',I',b')$, so
  \[
  DP[t,S,I,b'] = DP[t',S',I',b']
  \]
  is correct.
\end{enumerate}
Since these three cases are exhaustive and mutually exclusive, taking the maximum over them yields the correct optimal value for $DP[t,S,I,b']$.

\item Forget node. Suppose $t$ forgets vertex $v$ so $X_t = X_{t'}\setminus\{v\}$. Fix $(S,I,b')$ at $t$. Any feasible completion in $V_t$ arises from some feasible completion in $V_{t'}$ where $v$ had one of the three local statuses in the child bag: (i) bribed and influenced, (ii) not bribed but influenced, or (iii) not bribed and not influenced. These correspond precisely to the three child states
\[
(S\cup\{v\},I\cup\{v\},b'),\quad (S,I\cup\{v\},b'),\quad (S,I,b'),
\]
respectively. By the inductive hypothesis each child DP value is the maximum over completions in the child consistent with that child state, and since nothing additional is gained at the forget operation (vertices are already counted when influenced), the maximum of these three child values equals the maximum over all completions in $V_t$ consistent with $(S,I,b')$. Hence the recurrence
\[
DP[t,S,I,b'] \;=\;
\max\left\{
\begin{aligned}
   &DP[t',S\cup\{v\},I\cup\{v\},b'], \\
   &DP[t',S,I\cup\{v\},b'], \\
   &DP[t',S,I,b']
\end{aligned}
\right\}.
\]
is correct.

\item Join node. Suppose $t$ has two children $t_1,t_2$ with $X_{t_1}=X_{t_2}=X_t$. Fix $(S,I,b')$ at $t$. Any feasible completion in $V_t$ decomposes into completions in the two child subtrees whose bag-assignments agree with $(S,I)$ and whose budgets sum to $b'$. Each child DP (by the inductive hypothesis) equals the maximum number of influenced vertices in its subtree consistent with the child assignment. However, vertices in the bag $X_t$ (in particular the influenced set $I$) are counted in both child DP values, so we must subtract $|I|$ once to avoid double counting. Maximizing over all budget splits $b_1+b_2=b'$ yields:
\[
DP[t,S,I,b'] = \max_{b_1+b_2=b'} \big( DP[t_1,S,I,b_1] + DP[t_2,S,I,b_2] - |I| \big),
\]
which therefore gives the correct maximum for $V_t$.

\end{enumerate}
\medskip\noindent\textbf{Root.} Let $r$ be the root with bag $X_r$. The whole graph equals $V_r$. By the invariant, for each $(S,I,b')$ the value $DP[r,S,I,b']$ equals the maximum number of influenced vertices achievable in the whole graph under those bag constraints; hence taking
\[
\max_{S,I,\,b'\le b} DP[r,S,I,b']
\]
gives the optimal number of influenced vertices for the input instance. This completes the induction and correctness proof. Now it remains to check whether \(
\max_{S,I,\,b'\le b} DP[r,S,I,b'] \geq \frac{n}{2} + 1
\)
or not.

\paragraph{Reconstruction.}
Since we only have two candidates, every bribery corresponds to 
shifting our preferred candidate one position upwards in the vote. 
Thus, the shift vector is binary: for each voter $v$, 
\[
s_v \;=\;
\begin{cases}
1 & \text{if $v$ was bribed}, \\
0 & \text{otherwise}.
\end{cases}
\]
Using standard Dynamic programming techniques, we can track which vertices are bribed.
\end{proof}
}
\begin{theorem}[$\star$]\label{twproof}
For elections with two candidates, identity cost functions and uniform edge weights, 
the \SBON problem can be solved in 
\(
\OO\!\bigl(4^{\tw}\cdot n^{\OO(1)} \cdot \kappa\bigr)
\)
time, where $\tw$ is the treewidth of the underlying graph 
and 
\(
\kappa \;=\; 
\max\Bigl\{0,\; \bigl\lceil\tfrac{n+1}{2}\bigr\rceil - \ell\Bigr\},
\)
with $\ell$ denoting the number of voters already voting for the 
preferred candidate.
\end{theorem}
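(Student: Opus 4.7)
The plan is to analyze the running time of the dynamic program whose correctness was established in \Cref{treewidth-correctness}, by bounding both the number of DP states per bag of a nice tree decomposition and the effective range of the budget parameter. First, I would invoke the standard fact that any graph of treewidth \tw admits a nice tree decomposition with $\OO(\tw\cdot n)$ nodes whose bags have size at most $\tw+1$. For each node $t$, the state $(S,I,b')$ ranges over $S,I\subseteq X_t$, contributing at most $2^{\tw+1}\cdot 2^{\tw+1}=4^{\tw+1}$ choices of $(S,I)$ per admissible value of $b'$. Leaf, introduce, and forget transitions take $\mathsf{poly}(\tw)$ time per state, while the join transition performs a knapsack-style convolution over the budget, contributing an additional polynomial factor in the budget range.

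Next, I would argue that in the two-candidate, identity-cost regime the budget dimension may be restricted to $\{0,1,\dots,\kappa\}$. Indeed, each direct bribery of a non-supporter of $c$ costs exactly one unit and immediately flips that voter, so the trivial strategy of bribing any $\kappa$ non-supporters directly (such non-supporters exist whenever $\kappa\le n-\ell$) already reaches the majority threshold at total cost $\kappa$. Hence any instance with budget $b\ge\kappa$ is equivalent to the same instance with budget $\kappa$, so we may replace $b$ by $\min(b,\kappa)$ throughout the DP. Combining with the state count above and with the $\OO(\tw\cdot n)$ nodes of the decomposition yields total running time $\OO(4^{\tw}\cdot n^{\OO(1)}\cdot \kappa)$, where polynomial factors in $n$, \tw, and $\kappa$ are absorbed into $n^{\OO(1)}$; the final decision is read off at the root by checking whether $\max_{S,I,\,b'\le\min(b,\kappa)} DP[r,S,I,b']\ge\lceil (n+1)/2\rceil$.

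The most delicate step will be the WLOG cap at $\kappa$: it exploits precisely the two-candidate, identity-cost assumption, so that each direct bribery yields one new supporter at unit cost regardless of network propagation, and would not survive in richer settings where the marginal cost per gained supporter can exceed one. A secondary concern is that the join-node convolution naively contributes a $\kappa^2$ factor per join, but since $\kappa\le n$ such overhead can be safely absorbed into $n^{\OO(1)}$, preserving the stated bound.
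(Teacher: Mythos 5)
Your proposal is correct and follows essentially the same route as the paper: count $4^{\tw}\cdot(\text{budget}+1)$ states per bag over the $\OO(\tw\cdot n)$ nodes of a nice tree decomposition, then cap the budget dimension at $\kappa$ using the two-candidate, unit-cost observation that each direct bribery buys one new supporter, so budgets beyond $\kappa$ are never needed. Your justification of the $\kappa$ cap (bribing any $\kappa$ non-supporters directly already wins) and your remark that the join-node convolution's extra $\kappa$ factor is absorbed into $n^{\OO(1)}$ are in fact slightly more explicit than the paper's own argument.
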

\longversion{
\begin{proof}
Let $\tw := |X_t| \le \operatorname{tw}(\GG)+1$ denote the (maximum) bag size. For each node $t$ we store an entry for every triple $(S,I,b')$ where $S\subseteq X_t$, $I\subseteq X_t$, and $b'\in\{0,1,\dots,b\}$. Hence the number of states per node is at most
\(2^{\tw}\cdot 2^{\tw}\cdot (b+1) = 4^{\tw}(b+1).
\)
Since the treewidth of the input graph $\GG$ is at most $\tw$, 
it is possible to construct a data structure in time $\tw^{\OO(1)} \cdot n$ 
that allows performing adjacency queries in time $\OO(\tw)$. 
As we may assume without loss of generality that the number of nodes in the 
tree decomposition is $\OO(\tw \cdot n)$, and there are $n$ choices for the 
introduced vertex $v$, the running time of the algorithm is
\(
\OO\!\left(4^{\tw}\cdot n^{\OO(1)} \cdot b\right).
\)

Since we only have two candidates, every bribery corresponds to a unit shift, 
and hence each bribed voter contributes cost~1. Therefore $b$ can be bounded 
as follows:
\(
b \;\leq\; n-\ell,
\)
where $\ell$ is the number of voters already voting for the preferred 
candidate. Moreover, if the goal is to ensure victory of the preferred 
candidate, it suffices to consider
\[
b \;\leq\; \kappa \;=\; 
\max\Bigl\{0,\; \bigl\lceil\tfrac{n}{2}\bigr\rceil+1 - \ell\Bigr\},
\]
since $\kappa$ is exactly the number of additional votes required to reach a 
strict majority. 

Thus, the overall running time of our algorithm is
\(
\OO\!\bigl(4^{\tw}\cdot n^{\OO(1)} \cdot \kappa\bigr),
\)
where $\kappa \leq \lceil (n + 1)/2 \rceil$.
\end{proof}
}
Using the above result, we can show that \SBON parameterized by the size of a Feedback Vertex Set is \FPT. The intuition follows from our earlier result establishing \FPT with respect to treewidth. Since removing the feedback vertex set say $X$ yields a forest, we can exhaustively enumerate all possible subsets of $X$ (there are $2^{|X|}$ possibilities), and for each such configuration, solve the resulting instance in polynomial time using the dynamic programming algorithm designed for bounded-treewidth graphs. As the number of subsets depends only on $k=|X|$, the overall running time is $\OO(2^k)\cdot n^{\OO(1)}$. This leads us to the following theorem:

\begin{theorem}\label{fvs}
For elections with two candidates and unit shift cost, 
the \SBON problem can be solved in 
\(
\OO\!\bigl(2^{k}\cdot n^{\OO(1)} \cdot \kappa\bigr)
\)
time, where $k$ is the size of a minimum feedback vertex set of the underlying graph 
and 
\(
\kappa \;=\; 
\max\Bigl\{0,\; \bigl\lceil\tfrac{n+1}{2}\bigr\rceil - \ell\Bigr\},
\)
with $\ell$ denoting the number of voters already voting for the 
preferred candidate.
\end{theorem}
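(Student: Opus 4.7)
The plan is to combine the treewidth dynamic program of \Cref{twproof} with an exhaustive branching over the feedback vertex set. Let $X$ be a minimum feedback vertex set of $\GG$ with $|X|=k$; then $F := \GG[V\setminus X]$ is a forest and hence has $\tw(F)\le 1$. Any bribery of $\GG$ decomposes into its restriction to $X$ and its restriction to $F$, so I would enumerate all $2^k$ subsets $S\subseteq X$, interpreting each $S$ as the set of vertices of $X$ the briber bribes, and then solve the residual problem on $F$ for that fixed choice.

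For a fixed guess $S$, the first step is a cheap preprocessing pass. Reduce the budget from $b$ to $b-|S|$ and discard the guess if this is negative. Mark every vertex $v\in V\setminus X$ with a neighbor in $S$ as already influenced, so it contributes to the support of the preferred candidate for free. Partition $X\setminus S$ into two groups: those already guaranteed to support the preferred candidate because they have some neighbor in $S$, and the residual vertices $u\in X\setminus S$ whose support depends solely on whether at least one neighbor of $u$ in $V\setminus X$ is eventually bribed by the forest algorithm.

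Next, for each $S$ I would invoke the DP of \Cref{twproof} on $F$. Because $\tw(F)\le 1$, that theorem yields an $\OO(n^{\OO(1)}\cdot\kappa)$ running time per invocation. The DP needs only two small extensions: pre-influenced vertices of $V\setminus X$ are initialised as influenced with no cost incurred, and each residual $u\in X\setminus S$ is assigned to a single canonical forest-neighbour so that the one-unit contribution of $u$ is added exactly when that canonical neighbour is processed and found to be bribed. Because each such $u$ is charged to a single forest-neighbour, no additional bits are added to the DP state, and the per-bag transitions change only by constants.

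Putting the pieces together, the $2^k$ branches are exhaustive over all briberies on $X$; for each branch, the extended forest DP finds the best way to spend the remaining budget on $F$, and we accept iff some branch meets the threshold $\lceil (n+1)/2\rceil$. The total running time is $2^k$ branches each followed by one polynomial-time forest DP of cost $\OO(n^{\OO(1)}\cdot\kappa)$, giving the claimed $\OO(2^k\cdot n^{\OO(1)}\cdot \kappa)$. The main obstacle I anticipate is the bookkeeping for the residual vertices of $X\setminus S$: one must ensure each such vertex contributes to the influenced count at most once, without inflating the DP state with a coverage bit per vertex (which would blow up to $4^k$). The canonical-neighbour assignment is the key device that keeps the forest DP polynomial and makes the overall branching bound tight at $2^k$.
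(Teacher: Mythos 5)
Your overall strategy is the same as the paper's: delete a minimum feedback vertex set $X$, branch over all $2^{k}$ subsets $S\subseteq X$ of bribed vertices, and solve the residual problem on the forest $F=\GG[V\setminus X]$ (which has treewidth at most $1$) with the dynamic program of \Cref{twproof}. The paper's own argument is exactly this two-line reduction and never discusses the interaction between $X\setminus S$ and the forest, so you deserve credit for isolating the real difficulty: a vertex $u\in X\setminus S$ with no neighbour in $S$ still becomes a supporter if \emph{some} forest neighbour of $u$ is bribed, and tracking this naively costs an extra bit of DP state per such $u$.

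However, the canonical-neighbour device you introduce to avoid that blow-up is unsound. If $u$ has forest neighbours $w_1,w_2$ with $w_1$ canonical, a solution that bribes $w_2$ but not $w_1$ (say because $w_2$ also influences many other forest vertices and the budget does not permit bribing both) does influence $u$, yet your DP never credits $u$'s vote; the algorithm undercounts and can reject a \yes-instance. Crediting $u$ at every neighbour instead would overcount. A correct repair either (i) adds one coverage bit per vertex of $X\setminus S$ to the forest DP state, giving total time $\sum_{S\subseteq X}2^{|X\setminus S|}\cdot n^{\OO(1)}\cdot\kappa=3^{k}\cdot n^{\OO(1)}\cdot\kappa$, or (ii) inserts $X$ into every bag of a tree decomposition of $F$, yielding width $k+\OO(1)$ and, via \Cref{twproof}, time $4^{k}\cdot n^{\OO(1)}\cdot\kappa$ up to constant factors. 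Either way the result remains \FPT in $k$, but not with the advertised base $2$; note that the paper's own sketch is silent on this point and therefore does not establish the base-$2$ bound either.
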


\subsection{Cluster Vertex Deletion Number as a parameter}
The pseudo polynomial time algorithm for cluster graphs in \Cref{cluster-graph-pseudo-poly} motivates us to exploit the \FPT algorithm for graphs with small cluster vertex deletion number, denoted by $k$.
\begin{theorem}[$\star$]\label{cvd-fpt}
\SBON can be solved in pseudo polynomial time if the cluster vertex deletion number of the underlying graph is a constant. Consequently, \SBON is pseudo fixed-parameter tractable (\FPT) when parameterized by the cluster vertex deletion number for two candidate, uniform edge weights and linear cost functions.
\end{theorem}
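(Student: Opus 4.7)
\begin{proofsketch}
The plan is to lift the pseudo-polynomial dynamic program of \Cref{cluster-graph-pseudo-poly} from pure cluster graphs to graphs that become cluster graphs after deleting a small set $S$ of $k$ vertices. First, compute a cluster vertex deletion set $S$ of size $k$ by the standard \FPT routine for this problem in $f(k)\cdot n^{O(1)}$ time, and let $C_1,\dots,C_r$ be the cliques of $\GG[V\setminus S]$. All interactions between $\GG[S]$ and the cluster part pass through $S$, so once the bribery of $S$ is fixed the cliques interact with each other only through their common ``usage'' of $S$.

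Next, I would enumerate all $2^k$ subsets $B_S\subseteq S$ of $S$-vertices to bribe. For a fixed $B_S$, let $I_0=N(B_S)\cap(S\setminus B_S)$ be the $S$-vertices already flipped by $B_S$ and $S'=S\setminus(B_S\cup I_0)$ be the remaining $S$-vertices (of size at most $k$) whose flip can still be triggered via clique bribery. For each clique $C_i$ and each subset $\mathcal{Q}\subseteq S'$, precompute
\[
\mu_i(\mathcal{Q})\;=\;\min\Bigl\{\,\textstyle\sum_{v\in T}c(v)\;\Bigm|\;T\subseteq C_i,\ T\neq\emptyset,\ \mathcal{Q}\subseteq N(T)\cap S'\,\Bigr\},
\]
and record whether $C_i$ is already flipped, i.e.\ $C_i\cap N(B_S)\neq\emptyset$. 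Grouping vertices of $C_i$ by their trace $N(v)\cap S'$ yields at most $2^k$ distinct ``types'', after which all values $\mu_i(\cdot)$ are computed by a standard bitmask set-cover DP on a universe of size at most $k$ in $2^{O(k)}|C_i|$ time per clique.

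The outer DP then has states $(j,\mathcal{T},d)$, where $j$ is the current clique index, $\mathcal{T}\subseteq S'$ is the set of $S'$-vertices flipped through cliques $C_1,\dots,C_j$, and $d\leq b$ is the budget spent so far; the stored value is the maximum number of flipped clique-voters. The transition at $C_{j+1}$ enumerates every $\mathcal{Q}\subseteq S'\setminus\mathcal{T}$ together with a bribe/no-bribe decision: in the bribe case pay $\mu_{j+1}(\mathcal{Q})$, set $\mathcal{T}'=\mathcal{T}\cup\mathcal{Q}$, and add $|C_{j+1}|$ to the flipped count; in the no-bribe case add $|C_{j+1}|$ only if $C_{j+1}$ is already flipped by $B_S$. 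The instance is a \yes-instance iff some final state $(r,\mathcal{T},d)$ with $d+\sum_{v\in B_S}c(v)\leq b$ produces, together with $|B_S|+|I_0|+|\mathcal{T}|$ flipped $S$-vertices and the initial supporters of $c$, a total number of supporters reaching the majority threshold.

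The main obstacle is the clique-local set-cover step: within a single clique we must decide, for every $\mathcal{Q}\subseteq S'$, the cheapest non-empty $T\subseteq C_i$ whose neighbourhood in $S'$ covers $\mathcal{Q}$. Set cover is \NPH in general, but here the universe has size at most $k$, and consolidating vertices by neighbourhood type in $S'$ leaves only $2^k$ distinct sets, so a bitmask DP suffices. Once this preprocessing is in place, the outer DP is essentially the cluster-graph DP of \Cref{cluster-graph-pseudo-poly} augmented with the $\mathcal{T}$ coordinate, and its correctness follows by induction on the clique index. The total running time is $2^{k}\bigl(2^{O(k)}n+4^{k}\,r\,b\bigr)=2^{O(k)}n^{O(1)}b$, establishing the claimed pseudo \FPT bound.
\end{proofsketch}
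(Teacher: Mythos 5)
Your proposal follows the same skeleton as the paper's proof --- compute a cluster vertex deletion set $S$ of size $k$, enumerate the $2^k$ choices of which $S$-vertices to bribe, and then run a knapsack-style DP over the cliques of $\GG-S$ as in \Cref{cluster-graph-pseudo-poly} --- but it is noticeably more careful about the interaction between the cliques and $S$. The paper simply treats the residual problem on $\GG-S$ as a fresh cluster-graph instance with updated initial supporters and reduced budget; it does not track which additional $S$-vertices get flipped by bribes placed inside cliques, nor the fact that \emph{which} vertex of a clique you bribe now matters (different clique vertices have different neighbourhoods in $S$, even though any one of them suffices to flip the whole clique). Your $\mathcal{T}$ coordinate and the per-clique quantities $\mu_i(\mathcal{Q})$, computed by a bitmask set-cover DP over the size-$\le k$ universe $S'$, handle exactly these two effects, at the price of an extra $4^k$ factor in the outer DP. This buys a correctness guarantee that the paper's sketch does not obviously provide, since missing the up-to-$k$ back-influenced $S$-vertices or the choice of representative within a clique can in principle change a \yes-instance into a reported \no.

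Two bookkeeping points in your sketch need fixing, though neither is structural. First, influence does not cascade in this model ($s_i'$ depends only on the \emph{direct} shifts $s_j$ of in-neighbours), so $C_i\cap N(B_S)\neq\emptyset$ does not mean $C_i$ is ``already flipped'': only the vertices of $N(B_S)\cap C_i$ flip, and in the no-bribe branch the contribution of $C_{j+1}$ should be $|N(B_S)\cap C_{j+1}|$ together with its initial supporters, not $|C_{j+1}|$ or $0$. Second, adding $|C_{j+1}|$ in the bribe branch and then separately adding ``the initial supporters of $c$'' at the end double-counts supporters inside bribed cliques; you should use a gain of the form $|C_i|$ minus the vertices of $C_i$ already ranking $c$ on top (and already flipped by $B_S$), exactly as in the $gain(C_i)$ definition of \Cref{cluster-graph-pseudo-poly}. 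With these corrections the induction over the clique index goes through and the stated $2^{O(k)}n^{O(1)}b$ bound stands.
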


\longversion{
\begin{proof}
Let $\GG=(V,E)$ be the underlying graph of the given instance of \SBON. 
Suppose that $X \subseteq V$ is a cluster vertex deletion set of size $k$, i.e.\ $\GG - X$ is a cluster graph. 
It is known that such a set $X$ can be computed in FPT time with respect to~$k$`\cite{boral2016fast}.

Since $k$ is constant, we can exhaustively try all subsets $Y \subseteq X$ that may be part of the bribery scheme. 
For each choice of $Y$, we compute the cost of bribing $Y$ and the gain obtained in terms of additional supporters of the preferred candidate. 
If the cost of bribing $Y$ already exceeds the budget, we discard this choice.

Now consider the residual instance on $\GG - X$, which is a cluster graph. 
The effect of bribing vertices in $Y$ is that some vertices of $\GG - X$ may already support the preferred candidate (either directly if bribed, or indirectly via propagation). 
Thus, the residual problem on $\GG - X$ is precisely an instance of \SBON on a cluster graph with updated initial supporters and reduced budget. 
This can be solved in pseudo-polynomial time $O(r \cdot b)$ using the dynamic programming algorithm from the previous theorem, where $r$ is the number of cliques in $\GG - X$ and $b$ is the remaining budget.

Since the number of subsets $Y \subseteq X$ is $2^k$ and $k$ is a constant, this branching contributes only a constant factor. 
Hence the overall running time is $O(2^k \cdot poly(r \cdot b)$. 
For constant $k$, this is polynomial in the input size. 
More generally, this shows that the problem is pseudo fixed-parameter tractable when parameterized by $k$.
\end{proof}
}
\begin{corollary}
\SBON is \FPT  when parameterized by the cluster vertex deletion number for two candidate, uniform edge weights and identity cost functions.      
\end{corollary}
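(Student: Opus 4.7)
The plan is to derive the corollary directly from \Cref{cvd-fpt} by observing that, under the two-candidate identity-cost restriction, the budget parameter $b$ appearing in the pseudo-polynomial factor is automatically bounded by the input size. Since the preferred candidate $c$ sits in position $1$ or $2$ in every voter's ranking, the only meaningful direct shift at any voter $v_i$ is $s_i \in \{0,1\}$, and under the identity cost function this contributes $\pi_i(s_i) \in \{0,1\}$ to the total bribery cost. Consequently, any bribery scheme uses at most $n$ units of budget, so without loss of generality we may assume $b \le n$.

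Given this observation, the proof is essentially a re-run of the algorithm from \Cref{cvd-fpt}. First I would invoke the FPT algorithm of Boral et al.\ \cite{boral2016fast} to compute a cluster vertex deletion set $X$ of size $k$ in time $f(k)\cdot n^{\OO(1)}$. Then, for each of the $2^k$ subsets $Y \subseteq X$, I would treat $Y$ as the portion of the bribery strategy lying inside $X$, compute the residual budget $b - |Y|$ (which is still at most $n$), and propagate the influence of $Y$ into the cluster graph $\GG - X$ to obtain an updated set of initial supporters. After this update, the residual problem is an instance of \SBON on a cluster graph, which by \Cref{cluster-graph-pseudo-poly} is solvable in $\OO(r \cdot b') = \OO(r \cdot n)$ time, where $r$ is the number of cliques in $\GG - X$.

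Combining the two pieces, the total running time is
\[
\OO\bigl(2^k \cdot r \cdot n\bigr) \;=\; \OO\bigl(2^k \cdot n^{\OO(1)}\bigr),
\]
which matches the definition of \FPT parameterized by the cluster vertex deletion number $k$. The key conceptual step, and the only place where the argument departs from \Cref{cvd-fpt}, is the elementary observation that the identity-cost, two-candidate setting makes $b$ polynomially bounded, which upgrades the previously pseudo-polynomial dependence into a genuine polynomial one. I do not anticipate a real obstacle here: correctness of each branch is inherited from \Cref{cvd-fpt}, and the bound $b \le n$ is immediate from the structure of two-candidate rankings, so the argument reduces to carefully re-stating the dependence of the runtime in the restricted regime.
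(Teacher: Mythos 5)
Your proposal is correct and follows exactly the derivation the paper intends: the corollary is stated without a separate proof precisely because it follows from \Cref{cvd-fpt} once one notes that, with two candidates and identity costs (and uniform weights), every useful direct shift is in $\{0,1\}$, so the budget is effectively bounded by $n$ and the pseudo-polynomial factor $b$ in the $\OO(2^k\cdot \mathrm{poly}(r\cdot b))$ running time becomes polynomial. No gaps; the argument matches the paper's approach.
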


\subsection{Parameterized by number of affected voters}

We now turn our attention to election parameters. A very natural parameter is the number of affected voters which considers both bribed and influenced voters required to achieve absolute majority.

We first show that our problem translates to the \ktDSP which we define below:

\begin{definition}[\ktDSP] Given a undirected graph $G=(V,E)$ and integers $k$ and $t$, 
if there exists 
a set $D \subseteq V$ with $|D| \leq k$ such that at least $t$ vertices 
of $G$ are \emph{dominated} by $D$.
\end{definition}

Notice that the above problem is \NPC even for $t = \left\lceil \frac{|V| + 1}{2} \right\rceil$.  
This follows by a simple reduction from \DSP: given $(\GG=(V,E),k)$ with $|V|=n$,  
we construct $G'$ by adding $n-1$ isolated vertices, set $k'=k$, and $t' = n$.  
Any dominating set of $G$ of size at most $k$ dominates all $n$ original vertices in $G'$,  
thus satisfying the threshold $t'$. Conversely, any solution to this \ktDSP instance must dominate all original vertices,  
implying a dominating set of $G$.

\begin{observation}\label{dsp-hard}
\ktDSP is \NPC even when $t = \left\lceil \frac{|V| + 1}{2} \right\rceil$. 
\end{observation}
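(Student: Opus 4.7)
The plan is to give a polynomial-time many-one reduction from \DSP, which is already known to be \NPC. Given an instance $(G=(V,E),k)$ of \DSP with $|V|=n$, I would construct a graph $G'$ by taking $G$ and adding $n-1$ fresh isolated vertices, and set $k'=k$ and $t'=n$. Since $|V(G')|=2n-1$, the chosen threshold satisfies $t'=n=\lceil (2n-1+1)/2\rceil=\lceil (|V(G')|+1)/2\rceil$, so it matches the majority form stated in the observation. Membership in \NP is immediate: guess a candidate set $D\subseteq V(G')$ of size at most $k'$ and verify in polynomial time that at least $t'$ vertices of $G'$ are dominated by $D$.

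For the forward direction, I would simply note that any dominating set $D\subseteq V$ of $G$ with $|D|\le k$ sits inside $V(G')$ and, because all edges of $G$ are preserved in $G'$, dominates all $n$ vertices of the copy of $V$ in $G'$; hence at least $t'=n$ vertices of $G'$ are dominated, giving a \yes-instance of \ktDSP.

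The main (and only delicate) step is the backward direction, since a solution to the constructed \ktDSP instance might ``waste'' budget on the newly added isolated vertices instead of picking vertices of $V$. My plan to handle this is a standard swapping argument. Suppose $D'\subseteq V(G')$ with $|D'|\le k$ dominates at least $n$ vertices of $G'$. Partition $D'=D_1\cup D_2$, where $D_1=D'\cap V$ and $D_2$ consists of the chosen isolated vertices. Each vertex in $D_2$ dominates only itself, and no isolated vertex outside $D_2$ is dominated by anyone. Hence the number of dominated vertices equals $|N_G[D_1]|+|D_2|\ge n$, which rearranges to $|N_G[D_1]|\ge n-|D_2|$. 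Letting $U:=V\setminus N_G[D_1]$, we have $|U|\le |D_2|$, and the set $D^{\star}:=D_1\cup U$ clearly dominates all of $V$ in $G$, with $|D^{\star}|\le |D_1|+|D_2|=|D'|\le k$. This is the claimed dominating set of $G$, completing the reduction.

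The only subtle point is ensuring that the swap $D_2\rightsquigarrow U$ does not increase the size of the solution; the bound $|U|\le |D_2|$, which follows directly from the domination deficit identity above, resolves this. Combining the two directions with the NP-membership observation yields the \NPC claim, and by construction the hardness already holds at the threshold $t=\lceil (|V|+1)/2\rceil$.
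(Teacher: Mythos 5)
Your proposal uses exactly the same reduction as the paper (add $n-1$ isolated vertices to $G$, set $k'=k$ and $t'=n$), so the approach is essentially identical. In fact your backward direction is more careful than the paper's one-line claim that any solution ``must dominate all original vertices'' --- which is not literally true when the solution spends part of its budget on isolated vertices --- and your swapping argument, replacing the chosen isolated vertices $D_2$ by the undominated set $U$ with $|U|\le|D_2|$, supplies precisely the justification the paper omits.
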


Using \Cref{dsp-hard}, we prove the following theorem.

\begin{theorem}[$\star$]\label{affected-voters-fpt}
\SBON is fixed-parameter tractable parameterized by 
$\lceil \tfrac{n+1}{2}\rceil - \ell$, 
where $\ell$ is the number of voters that already rank the preferred candidate~$c_2$ on top, when there are two candidates, the edge weights are uniform, and the bribery costs are arbitrary.
\end{theorem}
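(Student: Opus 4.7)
The plan is to cast \SBON with two candidates and uniform edge weights as the weighted analogue of the classical \ktDSP with $t = \kappa := \lceil (n+1)/2 \rceil - \ell$, and solve it via color coding.

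First I would observe that with only two candidates and uniform arc weights, a voter $v$ ranks the preferred candidate $c_2$ first in the resulting profile iff $v \in S \cup N^+(S)$, where $S = \{v \in \VV : s_v \ge 1\}$ is the set of bribed voters; moreover any $s_v > 1$ is strictly wasteful, because $\min\{s_v', \text{pos}(c_2)\} \le 1$ for two candidates and every additional unit only raises cost without helping propagation past the flip threshold. Letting $A$ denote the $n-\ell$ voters initially ranking $c_1$ above $c_2$, \SBON reduces to deciding whether there exists $S \subseteq \VV$ with $\sum_{v \in S} \pi_v(1) \le b$ and $|(S \cup N^+(S)) \cap A| \ge \kappa$.

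Next I would argue that in any feasible solution we may assume $|S| \le \kappa$. Fix any $\kappa$-subset $T \subseteq (S \cup N^+(S)) \cap A$; for each $t \in T$ pick a witness $\phi(t) \in S \cap (\{t\} \cup N^-(t))$ that covers $t$, and let $S' = \{\phi(t) : t \in T\}$. Then $|S'| \le \kappa$, $T \subseteq S' \cup N^+(S')$, and $\sum_{v \in S'} \pi_v(1) \le \sum_{v \in S} \pi_v(1) \le b$ since costs are non-negative. Thus the problem further reduces to finding a weighted $\kappa$-partial dominating set for the target set $A$, which is exactly the cost-weighted generalization of the \ktDSP formulation behind the preceding observation.

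I would then apply color coding. Color each voter in $A$ independently and uniformly with a color from $[\kappa]$; a feasible $S$ is \emph{colorful} if its $\kappa$ chosen witnesses receive pairwise distinct colors, which happens with probability at least $\kappa!/\kappa^\kappa \ge e^{-\kappa}$. For a fixed coloring $\chi$, define $\text{col}(v) = \{\chi(u) : u \in (\{v\} \cup N^+(v)) \cap A\}$, and solve the induced weighted set-cover on universe $[\kappa]$ by the standard subset DP
\[
g(C) \;=\; \min_{v \in \VV}\bigl\{\, g(C \setminus \text{col}(v)) + \pi_v(1) \,\bigr\}, \qquad g(\emptyset) = 0,
\]
in time $O(2^\kappa n)$; accept iff $g([\kappa]) \le b$ for some coloring. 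Repeating for $O(e^\kappa)$ random colorings, or derandomizing using an $(|A|,\kappa)$-perfect hash family of size $e^\kappa \kappa^{O(\log \kappa)} \log |A|$, yields an overall running time of $(2e)^\kappa \cdot n^{O(1)}$, which is \FPT in $\kappa$.

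The main obstacle is the $|S| \le \kappa$ reduction, because with arbitrary non-unit costs one cannot bound $|S|$ by the budget $b$; the witness argument above is what keeps the instance inside the FPT range of weighted partial domination and bridges to the \ktDSP framing. A minor technical point is that the set-cover DP could in principle select the same vertex twice, but by non-negativity of $\pi_v(1)$ this is never strictly beneficial, so the multiset and set optima coincide.
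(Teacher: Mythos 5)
Your proposal is correct, and at the level of the modelling step it follows the same route as the paper: both reduce the two-candidate, uniform-weight case to a partial-domination problem in which one must ``dominate'' at least $\kappa=\lceil(n+1)/2\rceil-\ell$ of the voters not yet supporting $c_2$. The difference lies in how that problem is then solved. The paper simply cites the known \FPT result for \ktDSP parameterized by $t$ and stops there; this is somewhat loose, because \ktDSP as defined there is the unweighted, undirected cardinality version, and with \emph{arbitrary} costs the budget does not bound $|S|$, so the cited black box does not directly apply. Your witness argument (extract one witness per covered target to get $|S|\le\kappa$ without increasing cost) plus the explicit color-coding routine with the subset DP over $[\kappa]$ fills exactly this gap: it yields a self-contained $(2e)^{\kappa}\cdot n^{O(1)}$ algorithm for the cost-weighted, directed variant, which is what the theorem as stated actually needs. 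Two small points to tidy up: the ``colorful'' object should be the $\kappa$-set $T$ of covered targets in $A$ (which is what your DP implicitly assumes via $\mathrm{col}(v)$), not the witnesses, since witnesses need not lie in $A$ and are not colored; and in the recurrence $g(C)=\min_v\{g(C\setminus\mathrm{col}(v))+\pi_v(1)\}$ you should restrict to $v$ with $\mathrm{col}(v)\cap C\neq\varnothing$ so the recursion is well-founded. Neither affects correctness.
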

\longversion{
\begin{proof}
Let $n$ denote the number of voters and $\ell$ the number of voters who already rank the preferred candidate $c_2$ on top. 
The goal is to achieve a majority, i.e., to gain at least
\[
p \;=\; \Bigl\lceil \frac{n+1}{2} \Bigr\rceil - \ell
\]
additional supporters.

Consider the underlying network graph $G=(V,E)$, where each voter is a vertex and edges indicate influence. Bribing a voter allows her to support $c_2$ and potentially influence all of her neighbors. This can be naturally modeled as a \((k,t)\)-Dominating Set problem on $G$: we seek a set of at most $k$ vertices (the bribed voters) whose closed neighborhoods collectively dominate at least $t$ vertices, where and $k$ is constrained by the budget (or the maximum number of voters we are allowed to bribe).

It is known that \ktDSP is fixed-parameter tractable when parameterized by $t$~\cite{kneis2007partial}. By applying this result, we can efficiently select a set of voters to bribe so that at least $p$ additional voters are influenced. Therefore, \SBON is fixed-parameter tractable when parameterized by 
\(\lceil (n+1)/2 \rceil - \ell\) in general graphs with two candidates, uniform edge weights, and arbitrary bribery costs.
\end{proof}
}
These parameterized algorithms demonstrate that \SBON can be solved efficiently on networks with favorable structural properties, such as low treewidth or small feedback vertex sets. However, not all parameters lead to tractability. In the next section, we investigate the limits of parameterized approaches by establishing hardness results for \SBON with respect to natural parameters like graphs on bounded degree and budget. This analysis highlights which aspects of the network structure or election setting make the problem inherently difficult, even when the parameter is small.
\subsection{Parameterized Hardness}

We next show that \SBON is \WTH with respect to natural election parameters such as number of candidates $m$, budget $b$, and graphs with bounded degree. 

As a direct consequence of \Cref{sbon-npc}, we can show that \SBON is \WTH parameterized by number of candidates. This immediately gives following observations.

\begin{observation}\label{candidates-hard}
    \SBON parameterized by number of candidates is \WTH.
\end{observation}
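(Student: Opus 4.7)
The plan is to derive the statement as an immediate corollary of \Cref{sbon-npc}. That theorem establishes that \SBON is \NPC even when restricted to $m=2$ candidates with identity cost functions. Since the proposed parameter is the number of candidates $m$, \NP-hardness at the constant value $m=2$ is precisely the definition of para-\NP-hardness when parameterized by $m$, which is strictly stronger than \WTH: under \PNENP, it rules out not only an \FPT algorithm but even an \xp algorithm in $m$.

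Concretely, I would argue by contradiction in one line. Suppose that \SBON admitted an \FPT algorithm with running time $f(m)\cdot |I|^{O(1)}$ when parameterized by $m$. Plugging in $m=2$ would yield an algorithm of running time $f(2)\cdot |I|^{O(1)} = |I|^{O(1)}$ for the restricted class of instances considered in \Cref{sbon-npc}, contradicting its \NPC status (assuming \PNENP). Hence \SBON is para-\NP-hard, and in particular \WTH, when parameterized by the number of candidates.

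There is no genuine obstacle here: no new reduction is required and no structural argument about the W-hierarchy is needed beyond the basic fact that constant-parameter \NP-hardness implies hardness at every level of the W-hierarchy. The only care is to acknowledge that the conclusion \WTH, as stated in the observation, is in fact a weaker consequence of the stronger para-\NP-hardness that \Cref{sbon-npc} already yields; one could therefore strengthen the observation to para-\NP-hardness at essentially no extra cost, though I would keep the statement as written to match the authors' formulation.
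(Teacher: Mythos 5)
Your proposal is correct and follows exactly the paper's route: the authors also derive this observation directly from \Cref{sbon-npc}, since \NP-hardness at the fixed value $m=2$ gives para-\NP-hardness and hence hardness for every level of the W-hierarchy. Your write-up merely makes explicit the para-\NP-hardness step that the paper leaves implicit.
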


\begin{observation}\label{bribed-hard}
\SBON parameterized by number of bribed votes is \WTH.
\end{observation}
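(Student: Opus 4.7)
The plan is to observe that the polynomial reduction from \DSP constructed in the proof of \Cref{sbon-npc} is already a parameterized reduction when the source parameter is taken to be the solution size $k$ and the target parameter is the number of bribed voters. Since \DSP parameterized by solution size is well known to be \WT{}-complete (hence \WTH), this immediately yields the desired hardness.

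First, I would revisit the reduction and verify the parameter mapping. Given $(\GG, k)$, the constructed \SBON instance $(\CC, \PP, \GG_1, c_2, \Pi, b)$ uses identity cost functions $\pi_i(s_i) = s_i$ and total budget $b = k$. Because each unit of shift costs exactly one, any feasible shift vector $\sss$ satisfies $|\{\,i : s_i > 0\,\}| \leq b = k$. Consequently, the number of voters actually bribed in any feasible solution is at most $k$, the parameter of the \DSP instance. Next, I would note that the construction runs in polynomial time (as already argued in \Cref{sbon-npc}) and that the correctness of the \yes-instance equivalence has been established there as well, so no additional combinatorial argument is needed here.

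The only point to verify carefully, which I expect to be the main (if minor) obstacle, is the accounting of what counts as a ``bribed voter'' under our convention: network propagation through \GG{} is automatic and incurs no cost, so only vertices directly targeted by the briber (those with $s_i > 0$) contribute to the parameter. With this convention, the upper bound $k$ imposed by the budget coincides exactly with the parameter of the target instance, and the mapping $k \mapsto k$ makes the reduction parameter-preserving. Combining this with the \WT{}-hardness of \DSP parameterized by solution size yields the claimed \WTH-ness of \SBON parameterized by the number of bribed voters.
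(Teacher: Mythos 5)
Your proposal is correct and matches the paper's (implicit) argument: the paper derives this observation directly from the reduction in \Cref{sbon-npc}, relying exactly on the fact that the budget $b=k$ together with identity cost functions bounds the number of directly bribed voters by $k$, making the reduction from \WT-hard \DSP parameter-preserving. Your explicit handling of the convention that only voters with $s_i>0$ (and not those reached by propagation) count toward the parameter is a welcome clarification but does not change the argument.
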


Now we explore other parameters as budget and graphs with bounded degree. For that we first show a parameterized reduction from \ktDSP.

Now using \Cref{dsp-hard}, we show that \SBON is \WTH  when parameterized by the budget or degree, even for the case of connected graphs with two candidates, identity cost functions and uniform edge weights.
 
\begin{theorem}[$\star$]\label{w2hbudget}
\SBON is \WTH when parameterized by the budget, even for the case of connected graphs with two candidates, identity cost functions and uniform edge weights.
\end{theorem}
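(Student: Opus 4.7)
The plan is to prove W[2]-hardness via a parameterized reduction from the $(k,t)$-Dominating Set problem with the majority threshold $t = \lceil (|V|+1)/2 \rceil$, whose NP-hardness is recorded in Observation~\ref{dsp-hard}. Given a connected instance $(\GG, k, t)$ of \ktDSP with this threshold, I would build the \SBON instance by taking $\CC=\{c_1,c_2\}$ with $c_2$ preferred, voter set $V(\GG)$, underlying network $\GG_1=\GG$ with every arc of weight $1$, initial preference $c_1 \succ c_2$ at every voter, identity cost functions, and budget $b=k$. Under this choice a voter votes for $c_2$ if and only if it is bribed or has a bribed neighbour, so the strict-majority winning condition on a pool of $|V(\GG)|$ voters is exactly $|\XX \cup N_\GG(\XX)| \ge t$, i.e.\ that $\XX$ is a $(k,t)$-dominating set of $\GG$. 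The reduction is parameter-preserving ($b=k$) and the constructed network $\GG_1=\GG$ inherits connectedness directly from $\GG$, so it will satisfy all the restrictions of the theorem (two candidates, identity costs, uniform edge weights, connected).

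The main obstacle I expect is that Observation~\ref{dsp-hard} only records NP-hardness, and moreover its reduction pads classical \ds with isolated vertices and therefore yields disconnected instances. I would address this by starting from the classical W[2]-hardness of \ds parameterized by $k$ on connected graphs and strengthening the padding of Observation~\ref{dsp-hard} so that the enlarged graph is itself connected. Concretely, instead of attaching $n-1$ isolated vertices, I would append the auxiliary vertices as a small connected gadget anchored at a single vertex $v^{\star} \in V(\HH)$ (for instance a pendant per vertex of $V(\HH)$, or a path of $n-1$ new vertices whose head is $v^{\star}$), chosen so that the arithmetic $t = \lceil(|V|+1)/2\rceil$ still coincides with the number of vertices that must be dominated to correspond to dominating all of $V(\HH)$.

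The delicate step of this strengthening is ensuring that the equivalence ``$\HH$ has a \ds of size $\le k$ iff the enlarged graph has a $(k,t)$-dominating set of size $\le k$'' survives the modification. The forward direction is immediate since dominating all of $V(\HH)$ in the enlarged graph automatically reaches the $t$-threshold. For the converse I would use a swap argument: given any witness $(k,t)$-dominating set $S$ of the enlarged graph, replace every auxiliary vertex in $S$ by its unique attachment vertex in $V(\HH)$ (and in particular collapse any pendant dominations to dominations by $v^{\star}$), arguing that this substitution never decreases the number of dominated original vertices and never increases $|S|$, so the resulting set is contained in $V(\HH)$ and must then dominate all of $V(\HH)$ because the threshold $t$ was chosen to force it. Once this strengthening of Observation~\ref{dsp-hard} is in hand, the trivial reduction of the first paragraph carries W[2]-hardness from \ktDSP to \SBON parameterized by the budget, giving exactly the statement of the theorem.
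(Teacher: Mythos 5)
Your core reduction is exactly the one the paper uses: take the \ktDSP instance $(\GG,k,t)$ with $t=\lceil(|V|+1)/2\rceil$, use $\GG$ itself as the network with weight-$1$ arcs, everyone initially prefers $c_1$, identity costs, budget $b=k$, and observe that a voter flips iff it lies in the closed neighbourhood of the bribed set, so the majority condition is exactly the $(k,t)$-domination condition. That part is correct and parameter-preserving, and it is precisely the paper's proof of \Cref{w2hbudget}. Where you go beyond the paper is in worrying that \Cref{dsp-hard} only records \NP-completeness via a disconnected (isolated-vertex) padding, and in trying to re-derive \WTH hardness of \ktDSP at this threshold on \emph{connected} graphs. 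The paper simply asserts the hardness of \ktDSP and does not address connectivity of the padded instance, so your concern is legitimate.

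However, your proposed fix has a genuine gap in the converse direction. The isolated-vertex padding works because of a one-for-one exchange: each auxiliary vertex placed in the solution dominates only itself, so if $j$ auxiliary vertices are used then at most $j$ original vertices remain undominated, and swapping them in preserves the size bound $k$ (this is exactly the $|\XX_G|+|L|\le k$ accounting in the proof of \Cref{sbon-npc}). Your connected gadgets destroy this accounting. With a pendant per original vertex, or a path of $n-1$ vertices hung off $v^{\star}$, an auxiliary vertex in the solution can dominate two or three vertices, and a solution vertex inside $V(\HH)$ additionally dominates its own pendant; consequently a witness that dominates $t$ vertices of the padded graph may leave up to roughly $k+|S_P|$ (pendants) or $3j$ (path) original vertices undominated, and your swap ``replace each auxiliary vertex by its attachment point'' only guarantees that the number of dominated \emph{original} vertices does not decrease --- it says nothing about reaching $t$ overall, and reaching $t$ overall no longer forces full domination of $V(\HH)$. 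So the claim ``the threshold $t$ was chosen to force it'' does not follow for these gadgets. To close this you would need a padding in which auxiliary vertices are provably useless both as dominators and as dominated mass beyond a one-for-one trade (or you should reduce from \ds on connected graphs directly and argue the threshold arithmetic separately); as written, the strengthened source-problem hardness is not established.
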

\longversion{
\begin{proof}
We give a parameterized reduction from \ktDSP, which is 
\WTH when parameterized by $k$.

Let $(\GG=(V,E),k,t= \left\lceil \frac{|V| + 1}{2} \right\rceil )$ be an instance of \ktDSP.
We construct an instance of \SBON as follows.
The social network is exactly $\GG$, with uniform edge weights equal to~$1$.
There are two candidates, denoted $c_1$ and $c_2$, where $c_2$ is the preferred candidate. Each voter $v \in \VV$ has the initial preference order $c_1 \succ c_2$, 
so candidate $1$ is last in every vote.
The bribery cost function is uniform: shifting candidate $1$ forward by
one position in any vote costs~$1$.
The budget is set to $b=k$.
The winning condition requires that candidate $1$ is shifted to the top
position in at least $t=\left\lceil \frac{|V| + 1}{2} \right\rceil$ votes after bribery 
and propagation.

Now suppose we bribe a set $D \subseteq V$ of voters, with $|D|\leq k$.
Each such voter immediately shifts candidate $1$ to the top.
Furthermore, since edge weights are uniform and equal to~$1$, every neighbor 
of a bribed vertex also shifts candidate $1$ to the top.
Thus, the set of voters supporting candidate $1$ after propagation is precisely
the dominated set $N[D]\cup D$ in $\GG$.
Therefore, at least $t$ voters support candidate $1$ if and only if 
$D$ dominates at least $t$ vertices in $\GG$.

Hence $(\GG,k,t)$ is a \yes-instance of \ktDSP if and only if the constructed
instance is a \yes-instance of \SBON.
The reduction is polynomial-time and parameter-preserving, as the budget
$b$ equals $k$ 
Therefore, \SBON is \WTH parameterized by the budget, even with two candidates, identity cost functions and uniform edge weights.
\end{proof}
}
Similar to  \Cref{w2hbudget}, using \Cref{dsp-hard} we obtain the following result.

\begin{corollary}\label{w2hdegree}
\SBON is \WTH when parameterized by the maximum degree of the graph even for the case of connected graphs with two candidates, identity cost functions and uniform edge weights.
\end{corollary}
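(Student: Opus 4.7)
The plan is to reuse the reduction from \Cref{w2hbudget} essentially unchanged and only revisit the parameter accounting. Starting from an instance $(G=(V,E),k,t=\lceil(|V|+1)/2\rceil)$ of \ktDSP, I construct the same \SBON instance as in \Cref{w2hbudget}: the social network is $G$ itself with uniform edge weights, there are two candidates with initial order $c_1\succ c_2$ and identity cost functions, and the budget is $b=k$. The correctness argument — that a bribery $D \subseteq V$ with $|D|\le k$ shifts $c_2$ to the top of exactly the voters in $N[D]$, and hence $c_2$ wins the majority election if and only if $D$ dominates at least $t$ vertices in $G$ — is verbatim that of \Cref{w2hbudget}, since bribery costs, budget, shift rule, and propagation through unit weights are unchanged.

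The novelty lies in tracking the correct parameter through the reduction. Since the \SBON social network is literally $G$, the maximum degree $\Delta$ of the constructed instance equals $\Delta(G)$. Consequently, to obtain \WTH parameterized by maximum degree, I would invoke \Cref{dsp-hard} specialized to a family of hard \ktDSP instances in which $\Delta(G)$ grows together with the parameter $k$, so that parameterizing the \SBON instance by $\Delta$ coincides, up to a function of the parameter, with parameterizing the source \ktDSP instance by $k$. Under this coupling, any $f(\Delta)\cdot n^{\OO(1)}$ algorithm for \SBON would translate to an $f'(k)\cdot n^{\OO(1)}$ algorithm for \ktDSP, yielding the desired hardness.

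The main obstacle I expect is precisely this coupling of degree and solution size, because Dominating Set on graphs of truly bounded degree is already \FPT with respect to $k$, so a naive degree-truncation of the source instances collapses the hardness. To overcome this, I would compose the padding construction of \Cref{dsp-hard} — which appends $n-1$ isolated vertices to reduce Dominating Set to \ktDSP with $t=\lceil(|V|+1)/2\rceil$ — with a gadget that caps every vertex's degree at an explicit function $f(k)$, for instance by subdividing high-degree stars into balanced trees whose leaves carry the original adjacencies and whose internal nodes preserve the domination semantics. Once this degree-bounded variant of \ktDSP is fixed, the remainder of the argument (candidates, preference profile, identity costs, uniform weights, budget $b=k$, and the equivalence between dominating sets and winning bribery strategies) carries over unchanged from \Cref{w2hbudget}.
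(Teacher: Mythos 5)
Your construction and the verbatim correctness argument are fine as far as they go, but the parameter accounting you propose is where the argument breaks, and the fix you sketch cannot work. You want the maximum degree $\Delta$ of the produced instance to ``grow together with $k$'', i.e.\ to restrict to source instances of \ktDSP with $\Delta(\GG)\le f(k)$ and then transfer hardness from the parameter $k$ to the parameter $\Delta$. With the threshold $t=\lceil(|V|+1)/2\rceil$ this is self-defeating: a set of $k$ vertices, each of degree at most $f(k)$, dominates at most $k(f(k)+1)$ vertices, so every \yes-instance of the restricted family satisfies $|V|\le 2k(f(k)+1)$. The entire source instance then has size bounded by a function of $k$ and is trivially \FPT by brute force, so no hardness can be transferred from it --- this is exactly the collapse you name as the obstacle, and the degree-capping gadget does not avoid it, it produces it. The gadget is also unsound on its own terms: influence in \SBON propagates along single arcs only, so replacing a high-degree star by a balanced tree puts the original neighbours more than one hop from the centre and destroys the correspondence between bribing a vertex and influencing its former neighbourhood; ``internal nodes preserve the domination semantics'' is precisely the step that would require a construction, and none is given.

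The argument the paper intends goes in the opposite direction: keep $\Delta$ \emph{constant} rather than letting it grow with $k$. \DSP is \NPH already on graphs of maximum degree $3$; the padding of \Cref{dsp-hard} adds only isolated (degree-$0$) vertices, and the construction of \Cref{w2hbudget} uses the graph unchanged, so the resulting \SBON instances have maximum degree $3$ while remaining \NPH. Thus \SBON with two candidates, identity costs and uniform weights is \NPH at a fixed value of the parameter ``maximum degree'', i.e.\ it is \PNPH for this parameterization, which in particular implies that it is \WTH as claimed; no coupling of $\Delta$ to $k$ is needed, or indeed possible. (A separate issue, shared with the paper's own one-line justification: the isolated padding vertices conflict with the ``connected graphs'' clause, so one should replace them by a constant-degree connected padding gadget and adjust the threshold accordingly.)
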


 The reduction in \Cref{sc_equi} is parameter-preserving, as the bribery budget $b$ equals the set cover size $k$. Thus, \Cref{sc_equi} leads us to the following theorem.

\begin{theorem}\label{budget-hard-dibi}
\SBON is \WTH parameterized by the budget $b$, even for undirected bipartite graphs 
with two candidates, identity cost functions and uniform edge weights for the majority voting rule.
\end{theorem}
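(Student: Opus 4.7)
The plan is to obtain this hardness result as an immediate corollary of the reduction already developed in \Cref{sc_equi}, by verifying that it is a genuine parameterized reduction from \textsc{Set Cover} parameterized by the solution size, which is the canonical \WTH problem under this parameterization.

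First, I would recall that \textsc{Set Cover} parameterized by the size $k$ of the desired cover is \WTH (indeed \WT-complete). Then I would revisit the construction preceding \Cref{sc_equi}: starting from an instance $(\mathcal{U},\alpha,k)$ with $|\mathcal{U}|=n$ and $|\alpha|=m$, we produce an \SBON instance on the bipartite graph $G'=(L\cup R,E')$ with $L=L_{\mathrm{orig}}\cup L_{\mathrm{new}}$, $R=R_{\mathrm{orig}}\cup R_{\mathrm{new}}$, two candidates with initial preference $c_1\succ c_2$ at every voter, identity cost functions (every bribery costs exactly one), uniform edge weights, and budget $b=k$. I would emphasize that $G'$ is bipartite by construction (the only edges are between $L_{\mathrm{orig}}\cup L_{\mathrm{new}}$ on one side and $R_{\mathrm{orig}}$ on the other), and that the winning threshold $t=n+m+1$ corresponds precisely to the majority requirement once we check that $|L|+|R|=(n+m-k+1)+(m+n+k-1)=2(n+m)$, so a strict majority of these voters is $n+m+1$.

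Next, I would invoke \Cref{sc_equi} to conclude the bi-directional equivalence between the existence of a set cover of size at most $k$ and the existence of a successful bribery of total cost at most $b=k$. Crucially, the whole construction can be carried out in polynomial time in $n+m+k$, and the new parameter $b$ equals the old parameter $k$, so the reduction is parameter preserving. Hence it is a parameterized reduction in the standard sense, which transports the \WTH-ness of \textsc{Set Cover} (parameterized by $k$) to the \SBON problem parameterized by the budget $b$, under the claimed restrictions (two candidates, identity costs, uniform edge weights, bipartite underlying graph, majority rule).

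There is essentially no real obstacle beyond bookkeeping, since \Cref{sc_equi} has already discharged the correctness argument. The only minor subtlety I would double-check is that the majority rule with the threshold $n+m+1$ matches the winning condition stated in \Cref{sc_equi}; this follows directly from the total number of voters $2(n+m)$ being even, so a strict majority equals $n+m+1$. With that verification in place, the theorem follows.
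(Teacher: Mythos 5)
Your proposal is correct and matches the paper's own argument, which likewise derives the theorem directly from \Cref{sc_equi} by observing that the reduction from \textsc{Set Cover} is parameter-preserving with $b=k$. Your extra check that $|L|+|R|=2(n+m)$ so the threshold $n+m+1$ is exactly a strict majority is a worthwhile detail the paper leaves implicit.
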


The hardness hold when restricted to directed bipartite graphs with edges from $\FF$ to $\UU$ in the construction described in \Cref{budget-hard-dibi}. This leads us to the following result.

\begin{corollary}
\SBON is \WTH when parameterized by the budget $b$, even for direct acyclic graphs  with two candidates, uniform edge weights, and arbitrary voting profiles.    
\end{corollary}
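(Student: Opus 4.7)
The plan is to adapt the bipartite construction from \Cref{budget-hard-dibi} (equivalently, the \textsc{Set Cover} reduction underlying \Cref{sc_equi}) by replacing every undirected edge with a directed arc oriented from the ``sets'' side to the ``elements'' side. Concretely, let $(\mathcal{U},\alpha,k)$ be an instance of \textsc{Set Cover}, and build the same vertex set $L\cup R$ as before, with $L=L_{\mathrm{orig}}\cup L_{\mathrm{new}}$ playing the role of elements and $L_{\mathrm{new}}$-gadget, and $R=R_{\mathrm{orig}}\cup R_{\mathrm{new}}$ playing the role of sets and the isolated padding. For every original edge $\{r_i,u\}$ with $r_i\in R_{\mathrm{orig}}$ and $u\in L_{\mathrm{orig}}$, insert the arc $(r_i,u)$, and for every complete-bipartite edge between $R_{\mathrm{orig}}$ and $L_{\mathrm{new}}$, insert the arc $(r_i,\ell)$. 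All arcs point from $R$ to $L$, so the resulting directed graph is acyclic (in fact, it is a bipartite DAG of depth one). Assign each arc weight $1$, set the two-candidate profile as in \Cref{sc_equi} with uniform identity bribery costs, and set budget $b=k$ and majority threshold $t=n+m+1$.

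Next, I would re-establish the two directions of the equivalence of \Cref{sc_equi} in this directed setting. The forward direction is immediate: bribing the vertices in $R_{\mathrm{orig}}$ corresponding to a cover $\{S_{i_1},\dots,S_{i_k}\}$ causes influence to propagate along the out-arcs to every $u\in L_{\mathrm{orig}}$ (by the covering property) and to every $\ell\in L_{\mathrm{new}}$ (by complete bipartite connectivity), producing exactly $n+(m-k+1)+k=n+m+1$ supporters of $c_2$. For the converse, the claim embedded in the proof of \Cref{sc_equi}, asserting that an optimal bribery can be assumed to lie entirely in $R_{\mathrm{orig}}$, carries over with only notational adjustments: vertices of $L_{\mathrm{orig}}\cup L_{\mathrm{new}}$ are sinks (out-degree zero), so bribing them influences only themselves and can be strictly dominated by bribing any in-neighbour; vertices of $R_{\mathrm{new}}$ are isolated and strictly dominated for the same reason. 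Hence a feasible bribery of cost $\le k$ must dominate all of $L_{\mathrm{orig}}\cup L_{\mathrm{new}}$ through out-arcs from $R_{\mathrm{orig}}$, which translates directly into a set cover of size $\le k$.

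Since the reduction is polynomial-time computable and parameter-preserving (the budget equals $k$), and \textsc{Set Cover} is \WTH parameterized by solution size, the corollary follows. The main obstacle is purely bookkeeping: verifying that the ``without loss of generality, bribe only $R_{\mathrm{orig}}$'' argument from the undirected proof remains valid once influence can only flow along out-arcs. This is where the orientation is essential, because in the directed version the argument is actually strictly easier: reverse-direction influence is outright impossible rather than merely suboptimal, so the reduction becomes cleaner rather than harder. No new gadgets are needed beyond the orientation, and acyclicity is an immediate consequence of the one-directional bipartite structure.
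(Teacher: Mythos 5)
Your proposal is correct and matches the paper's approach: the paper obtains this corollary by simply orienting all edges of the bipartite \textsc{Set Cover} construction from the set-side to the element-side, which is exactly the orientation you use, and your observation that the ``bribe only $R_{\mathrm{orig}}$'' argument only becomes easier under this orientation (since reverse influence is impossible rather than merely wasteful) is the whole point. The paper states this in one sentence; you have merely spelled out the verification in full.
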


\section{Conclusion}\label{sec:conclusion}

We introduced and studied the \SBON problem, which captures shift bribery when voters are embedded in a social influence network. Our results cover both tractable and intractable regimes. On the positive side, we gave an exact dynamic programming algorithm for graphs of bounded treewidth. On the negative side, we showed that \SBON is NP-hard and also \WTH under natural parameterizations like maximum degree of the graph and budget. These hardness results naturally extend to other common voting rules such as Borda and Copeland, where the same structural barriers remain.  
This makes it an interesting direction for future work to design polynomial-time approximation algorithms or fixed-parameter approximation schemes for \SBON and its extensions under richer voting rules.

\section*{Acknowledgments}
We acknowledge the helpful discussions with Abhinav Sen and Shruti Thiagu.

\bibliographystyle{ACM-Reference-Format} 
\bibliography{references}
\clearpage

\shortversion{\section*{Supplementary Material}

This is a Supplementary Material for submission ID: 502. We provide the required technical preliminaries and provide missing proofs of some of the theorems.

\section{Technical Preliminaries}

\textbf{Parameterized Complexity.} In decision problems where the input has size \( n \) and is associated with a parameter \( k \), the objective in parameterized complexity is to design algorithms with running time \( f(k) \cdot n^{O(1)} \), where \( f \) is a computable function that depends only on \( k \) \cite{downey2012parameterized}. Problems that admit such algorithms are said to be \emph{fixed-parameter tractable} (\FPT). An algorithm with running time \( f(k) \cdot n^{O(1)} \) is called an \FPT algorithm, and the corresponding running time is referred to as an \FPT running time \cite{hartmanis2006texts}. Formally, a parameterized problem \( L \subseteq \Sigma^* \times \mathbb{N} \) is said to be \emph{fixed-parameter tractable} if there exists an algorithm \( A \), a computable function \( f : \mathbb{N} \rightarrow \mathbb{N} \), and a constant \( c \in \mathbb{N} \) such that for every input \( (x, k) \in \Sigma^* \times \mathbb{N} \), the algorithm \( A \) decides whether \( (x, k) \in L \) in time at most \( f(k) \cdot |(x, k)|^c \). The class of all such problems is denoted by \FPT \cite{cygan2015parameterized}. 

The treewidth of a graph measures how closely the graph resembles a tree \cite{cygan2015parameterized}. Informally, a tree decomposition of a graph is a tree where each node corresponds to a subset of vertices, called bags, and must satisfy three conditions: (i) every vertex of the graph appears in at least one bag, (ii) both endpoints of every edge are contained in some bag, and (iii) for any vertex, the nodes of the tree containing it must form a connected subtree. For formal definitions of a tree decomposition, a nice tree decomposition, and treewidth, we refer to \cite{cygan2015parameterized}.

\section{Missing Proofs}
\subsection{Proof of \Cref{sbon-npc}}
\begin{proof}
We prove this by a polynomial-time reduction from \DSP, which is known to be \NPC. Given an instance $(\GG, k)$ of \DSP, we construct, in polynomial time, an equivalent instance $(\CC, \PP, \GG_1, c, \Pi, b)$ of \SBON.

Let $\CC = \{c_1, c_2\}$ be the set of candidates, where $c_2$ is the preferred candidate $c$ of the briber. Each vertex $v_i \in \VV$ of $\GG$ corresponds to a voter $v_i$ in the election. We define a directed network $\GG_1 = (\VV_1, E_1, w)$ as follows. We define $\VV_1= \VV \cup \VV_I$, where $\VV_I$ induces an independent set of size $n-1$ in $\GG_1$, and $E_1=E$. Finally, we put weight one on each edge in $E$.


Every voter strictly prefers $c_1$ to $c_2$, i.e., for all $v_i \in \VV_1$, 
$c_1 \succ_i c_2$, at the beginning. Each voter $v_i$ has an identity cost function $\pi_i(s_i) = s_i$, meaning that shifting $c_2$ left by one position (to the top) in $v_i$’s preference order $\succ^{\PP}_{i}$ costs exactly one unit. We set the total bribery budget $b = k$. 

Hence, the constructed instance $(\CC, \PP, \GG_1, c_2, \Pi, b)$ of \SBON can clearly be computed in polynomial time from $(\GG, k)$.
We now argue that the instance $(\GG, k)$ of \DSP is a \yes-instance if and only if the constructed instance $(\CC, \PP, \GG_1, c_2, \Pi, b)$ of \SBON is a \yes-instance.

Suppose $(\GG, k)$ is a \yes-instance of \DSP, with $D \subseteq \VV$ as solution of size at most $k$. We construct a bribery strategy by bribing exactly the voters corresponding to vertices in $D$, shifting $c_2$ to the top of their preference vectors (i.e., $s_i = 1$ for all $v_i \in D$ and $s_i = 0$ otherwise). Since $\pi_i(s_i) = s_i$ and $|D| \le b$, the total bribery cost is within the budget.

Since $D$ is a dominating set in $\GG$, every vertex in $\VV \setminus D$ has at least one bribed neighbor. Moreover, each edge in $\GG_1$ (same as the edges in $\GG$) has weight one, hence, for each unbribed voter $v_j$, the accumulated influence from its bribed neighbors ensures that its effective shift $s_j' \ge 1$, making $c_2$ move to the top of its preference vector. Consequently, every voter in the original graph $\GG$ now ranks $c_2$ above $c_1$. As there are $n$ such voters in $\GG$ and $2n - 1$ voters in total (including the isolated ones), $c_2$ receives at least $n$ votes and becomes the winner. Hence, the constructed instance of \SBON is a \yes-instance.

Conversely, suppose that the constructed \SBON instance is a \yes-instance. Then there exists a shift vector $\sss$ such that the total cost $\sum_i \pi_i(s_i) \le b = k$, and $c_2$ becomes the winner after applying $\sss$. Let $\XX \subseteq \VV_1$ be the set of voters directly bribed (i.e., those with $s_i > 0$). Let $\XX_G = \XX \cap \VV$ denote the bribed voters corresponding to the original graph, and $\XX_I = \XX \setminus \XX_G$ the bribed isolated voters.

Define $\bar{\XX}$ to be the set of voters in $\VV$ who are either in $\XX_G$ or influenced by a voter in $\XX_G$ (i.e., $\bar{\XX} = \XX_G \cup N(\XX_G)$). Let $L = \VV \setminus \bar{\XX}$ denote the set of vertices that are neither bribed nor influenced.

Since $c_2$ wins, at least $n$ voters must now rank $c_2$ above $c_1$, implying $|\bar{\XX}| + |\XX_I| \ge n$. As $\bar{\XX}$ and $L$ partition $\VV$, we have $|\bar{\XX}| + |L| = n$, and thus $|L| \le |\XX_I|$. Because the total bribery cost is at most $k$, we have $|\XX_G| + |\XX_I| \le k$. Combining these inequalities gives
\[
|L| \le |\XX_I| \implies |\XX_G| + |L| \le k.
\]
By construction, $\XX_G \cup L$ forms a dominating set of $\GG$ (since every vertex in $L$ is explicitly chosen to cover the remaining undominated vertices in $\GG$). Therefore, $\GG$ has a dominating set of size at most $k$, and $(\GG, k)$ is a \yes-instance of \DSP.
\end{proof}

\subsection{Proof of \Cref{npc-complete}}
\begin{proof}
Suppose $(\GG, k)$ is a \yes-instance of \DSP. Then there exists a dominating set $D \subseteq \VV$ such that $|D| \le k$. We bribe exactly the voters corresponding to the vertices in $D$, shifting $c_2$ to the top of their preference orders (i.e., $s_i = 1$ for $v_i \in D$ and $s_i = 0$ otherwise). Since $\pi_i(s_i) = s_i$ for these voters, the total bribery cost is at most $|D| \le b$, satisfying the budget constraint.

Since the vertices of $\GG$ form a dominating set, every vertex $v_j \in \VV \setminus D$ has at least one bribed neighbor $v_i \in D$. In $\GG_1$, the influence of each bribed vertex is weighted $w(v_i, v_j) = 1$, so every unbribed voter in $\VV$ experiences a net shift $s_j' \ge 1$ and consequently changes their top preference to $c_2$. Therefore, all $n$ voters corresponding to the original vertices of $\GG$ now prefer $c_2$ over $c_1$. Since there are $2n - 1$ voters in total, $c_2$ secures at least $n$ votes and becomes the winner. Hence, the constructed \SBON instance is a \yes-instance.

Conversely, suppose that the constructed \SBON instance is a \yes-instance. Then there exists a shift vector $\sss$ such that $\sum_i \pi_i(s_i) \le b = k$ and $c_2$ becomes the winner after the shifts are applied. Let $\XX = \{v_i \in \VV_1 \mid s_i > 0\}$ denote the set of bribed voters.

We first note that bribing any voter $v_i \in \VV_I$ (from the added set) costs at least $(k + 1)$, which exceeds the total available budget $k$ i.e. all influence must originate from voters in the original vertex set $\VV$. Moreover, to change such a voter’s effective preference via neighbors would require a total influence of at least $1$, meaning an aggregate shift contribution of at least $2k$ from other voters (since the connecting edges have weight $\frac{1}{2k}$), again exceeding the budget. Hence, no voter in $\VV_I$ can have their preference changed directly or through influence. 

Let $\XX_G = \XX \cap \VV$ denote the bribed voters corresponding to the original graph $\GG$. Because $\sum_i \pi_i(s_i) \le k$, we have $|\XX_G| \le k$. If $\XX_G$ is a dominating set of $\GG$, then $(\GG, k)$ is a \yes-instance of \DSP, as required.

For the sake of contradiction, assume that $\XX_G$ is not a dominating set. Then there exists some vertex $v_j \in \VV$ that is (not dominated), neither bribed nor adjacent to any bribed vertex in $\GG$. Since any pair non-adjacent vertices in $\GG$, connected via an an edge of weight $\frac{1}{2k}$ in $\GG_1$, the influence from all bribed vertices is at most $\frac{|\XX_G|}{2k} \le \frac{1}{2} < 1$, implying $s_j' < 1$. Thus, $v_j$ continues to prefer $c_1$ to $c_2$. Moreover, as established earlier, all $n-1$ voters in $\VV_I$ also continue to prefer $c_1$. Therefore, at most $n - 1$ voters prefer $c_2$, contradicting the assumption that the bribery scheme made $c_2$ win. Hence, $\XX_G$ must be a dominating set of $\GG$ of size at most $k$.
\end{proof}

\subsection{Proof of \Cref{c_wins}}
\begin{proof}
    As discussed previously, all voters in this problem framework shift by the same maximum amount $\alpha$. Let us say c wins for a given value of $\alpha$. This means that c has more votes than all other candidates $c ^\pr \in \CC$. We prove that this cannot change upon increasing $\alpha$. Notice that increasing the value of $\alpha$ cannot reduce the number of votes c gets, since all voters now shift c to a more preferable location. Similarly, notice that increasing $\alpha$ cannot increase the number of votes a rival of c can get, since all other candidates either stay stationary, or shift to a less preferable profile. This completes the proof.
\end{proof}

\subsection{Proof of \Cref{c_looses}}
\begin{proof}
    Very analogous to the previous proof. Let us say c loses for a given value of $\alpha$. This means that c has less votes than some other candidate $c ^\pr \in \CC$. We prove that this cannot change upon decreasing $\alpha$. Notice that decreasing the value of $\alpha$ cannot increase the number of votes c gets, since no voter will shift c to a more preferable location. Similarly, notice that decreasing $\alpha$ cannot decrease the number of votes a rival of c can get, since all other candidates either stay stationary, or shift to a more preferable profile. 
\end{proof}

\subsection{Proof of \Cref{plurality_score}}
\begin{proof}
    For simplicity, let us add 0 to our position vector
    Consider a value of $\alpha$ that is not equal to any $\succ_{i}^{\PP}\left( c \right)$ for any $i \in [n]$ (or 0). Let us assume that this value of $\alpha$ is optimal (the lowest among all $\alpha$ that make c win). Now, consider the largest value of an element in our position vector that is smaller than $\alpha$. Let us call this element p. Notice that reducing the value of $\alpha$ to p cannot cause out candidate c to lose. This is because all voters with $\succ_{i}^{\PP}\left( c \right)$ less than p voted for c previously, and will continue to do so. All the other voters didn't vote for c even with the old value of $\alpha$ since their $\succ_{i}^{\PP}\left( c \right)$ was strictly greater than $\alpha$. This means that $p \geq \succ_{i}^{\PP}\left( c \right) \iff \alpha \geq \succ_{i}^{\PP}\left( c \right)$, which implies that the value p is also a valid $\alpha$ that can make c win. since p $\leq \alpha$, this contradicts the assumption that $\alpha$ is optimal.    \\\\
    Note that this proof does not work if we do not add 0 to our position vector, since $\alpha = 0$ could also be optimal (c is already the winner). 
\end{proof}

\subsection{Proof of \Cref{cluster-graph-pseudo-poly}}
\begin{proof}
Let $\CC = \{c_1,c_2\}$ with $c_2$ being our preferred candidate. Let $\GG=(V,E)$ be a cluster graph consisting of disjoint cliques 
$C_1,\dots,C_r$. 
Each voter $v \in V$ has an arbitrary cost $c(v)$ to shift the preferred candidate~$c_2$ to the top. 
For each clique $C_i$, define
\[
\begin{aligned}
c(C_i) \;&=\; \min_{v \in C_i} c(v), \\[4pt]
gain(C_i) \;&=\; |C_i| \;-\; 
   \#\{\,v \in C_i \mid v \text{ already ranks } c_2 \text{ on top}\,\}.
\end{aligned}
\]

That is, $c(C_i)$ is the minimum cost to flip $C_i$, and $gain(C_i)$ is the number of additional supporters gained if $C_i$ is flipped.

The problem reduces to selecting a subset of cliques whose total cost does not exceed $b$ and whose total gain, together with the initial supporters, reaches the majority threshold $\lceil (n+1)/2 \rceil$.

We construct the following dynamic programming table:
\[
DP[i,d] \;=\; 
\max \Bigl\{ \sum_{j \in S} gain(C_j) \;\Bigm|\; 
S \subseteq \{1,\ldots,i\}, \;
\sum_{j \in S} c(C_j) \leq d \Bigr\}.
\]
That is, $DP[i,d]$ stores the maximum number of additional supporters obtainable from the first $i$ cliques using budget at most $d$.

The recurrence is
\(
DP[i,d] \;=\; 
\max \bigl(
DP[i-1,d], \;
DP[i-1,d - c(C_i)] + gain(C_i)
\bigr),
\)
whenever $d \geq c(C_i)$, and otherwise
\(
DP[i,d] \;=\; DP[i-1,b].
\)

The base case is $DP[0,d]=0$ for all $d \geq 0$.  
At the end, let $l$ be the number of initial supporters of candidate~$c_2$.  
We check whether
\(
\max_{d\leq b} DP[r,d] + l \;\;\geq\;\; \lceil (n+1)/2 \rceil.
\)
If yes, then the bribery succeeds within budget $b$; otherwise, it is impossible.

The DP table has size $O(r \cdot b)$ and each entry can be computed in constant time from previous entries. Thus, the algorithm runs in $O(r\cdot b)$ time, which is pseudo-polynomial in the budget $b$. 
\end{proof}

\subsection{Proof of \Cref{sc_equi}}
\begin{proof}
Suppose first that the given \textsc{Set Cover} instance is a \yes-instance, and let $\{S_{i_1},\ldots,S_{i_k}\}$ be a cover of size $k$. Bribe the corresponding vertices $\{r_{i_1},\ldots,r_{i_k}\}\subseteq R_{\mathrm{orig}}$. Then all $n$ vertices in $L_{\mathrm{orig}}$ are influenced, since the chosen subfamily covers $\mathcal{U}$. Moreover, all $m-k+1$ vertices in $L_{\mathrm{new}}$ are influenced, as they are adjacent to every vertex in $R_{\mathrm{orig}}$. Finally, the $k$ bribed vertices in $R_{\mathrm{orig}}$ themselves support candidate~$c_2$. Therefore, the total number of supporters of $c_2$ is $n+(m-k+1)+k = n+m+1$, meeting the majority threshold. Hence the constructed \SBON instance is a \yes-instance.

Conversely, suppose that the constructed \SBON instance is a \yes-instance. Then there exists a bribery set $B$ with $|B|\leq k$ such that at least $n+m+1$ voters support candidate~$c_2$ after bribery and propagation. Partition $B$ as
\[
B=(L_1\cup L_2)\cup(R_1\cup R_2),
\]
where $L_1\subseteq L_{\mathrm{orig}}$, $L_2\subseteq L_{\mathrm{new}}$, 
$R_1\subseteq R_{\mathrm{orig}}$, and $R_2\subseteq R_{\mathrm{new}}$.

\begin{claim}
There exists a solution with $B\subseteq R_{\mathrm{orig}}$.
\end{claim}

\begin{proof}
We first argue that an optimal bribery strategy of cost at most $k$ may, without loss of generality, be assumed to bribe only vertices in $R_{\mathrm{orig}}$. Indeed, bribing a vertex of $L_1$ influences only that vertex, whereas bribing one of its neighbors in $R_{\mathrm{orig}}$ influences not only that vertex of $L_1$ but possibly additional vertices as well. Hence any bribery that targets a vertex of $L_1$ can be replaced by bribing one of its neighbors in $R_{\mathrm{orig}}$ without decreasing the number of supporters. Similarly, bribing a vertex of $L_2$ is redundant, since every vertex of $L_2$ is already influenced whenever any neighbor in $R_{\mathrm{orig}}$ is bribed. Finally, bribing a vertex of $R_2$ yields only that single supporter, since these vertices are isolated, and replacing such a bribery with one targeting a vertex of $R_{\mathrm{orig}}$ never decreases the total number of influenced vertices. 
\end{proof}

Next, consider the requirement of reaching $n+m+1$ supporters. Since all $m-k+1$ vertices of $L_{\mathrm{new}}$ must be influenced, and each such vertex has neighbors only in $R_{\mathrm{orig}}$, this necessitates that $B$ contain at least one vertex of $R_{\mathrm{orig}}$. Moreover, all $n$ vertices of $L_{\mathrm{orig}}$ must be influenced as well, for otherwise the total number of supporters is strictly less than $n+m+1$. Consequently, every vertex of $L_{\mathrm{orig}}$ is adjacent to some bribed vertex in $R_{\mathrm{orig}}$, which means that $B$ dominates all of $L_{\mathrm{orig}}$ and satisfies $|B|\leq k$. By construction of the reduction, each vertex of $R_{\mathrm{orig}}$ corresponds to a set in the family of the original set cover instance, and domination of $L_{\mathrm{orig}}$ by $B$ corresponds precisely to covering the universe $\mathcal{U}$. Therefore, the bribery strategy $B$ induces a set cover of $\mathcal{U}$ of size at most $k$.
\end{proof}

\subsection{Proof of \Cref{treewidth-correctness}}
\begin{proof} We maintain the following invariant: for every node $t$ of the nice tree decomposition with bag $X_t$, for every $S\subseteq X_t$, $I\subseteq X_t$, and $0\le b'\le b$, the entry
\(
DP[t,S,I,b']
\)
equals the maximum number of vertices in $V_t$ that can be influenced by any bribery configuration restricted to the subtree rooted at $t$ that
\begin{enumerate}
  \item bribed exactly the vertices $S$ inside the bag $X_t$,
  \item results in exactly the set $I$ of influenced vertices inside the bag $X_t$, and
  \item spends total bribery cost exactly $b'$ inside $V_t$.
\end{enumerate}
If no such configuration exists, $DP[t,S,I,b']=-\infty$.

We prove the invariant by induction on the height (distance from the root) of node $t$.

\paragraph{\textbf{Base (leaf)}.}
If $t$ is a leaf then $X_t=\varnothing$ and $V_t=\varnothing$. The only feasible configuration uses zero budget and influences zero vertices; hence
\(
DP[t,\varnothing,\varnothing,0]=0,
\)
and all other entries are infeasible ($-\infty$). This matches the invariant.

\textbf{Inductive step.}
Assume the invariant holds for all children of $t$. We show each node-type transition computes the correct maximum and therefore the invariant holds at $t$.
\begin{enumerate}
    \item Introduce node. Suppose $t$ introduces vertex $v$ and its child is $t'$ with $X_t = X_{t'}\cup\{v\}$. Fix $(S,I,b')$ for $t$. Any feasible completion of this partial assignment in $V_t$ induces a feasible completion in $V_{t'}$ with bag assignment $(S',I')$ where $S' = S\setminus\{v\}$ and $I' = I\cap X_{t'}$. There are exactly three mutually exclusive possibilities for $v$ relative to the bribery/influence decisions:
\begin{enumerate}
  \item $v\in S$ (we bribed $v$). Then $v$ must be in $I$, we pay cost $c(v)$, and bribing $v$ may additionally influence neighbours of $v$ that lie in $X_{t'}$. Every feasible completion in $V_t$ corresponds to a feasible completion in $V_{t'}$ with $S',I'$ where neighbours already counted are removed as in the recurrence. By the inductive hypothesis the child DP value yields the maximum number of influenced vertices in $V_{t'}$ consistent with that child assignment; adding $v$ and the newly influenced neighbours yields exactly the total in $V_t$. The recurrence
   \[
    DP[t,S,I,b'] \;=\; 
    DP[t',S',I',b'-c(v)] + 1 
    + \big|\{ u \in N(v)\cap X_{t'} : u \notin I'\}\big|.
    \]
  therefore computes the maximum over all completions in this case (and is $-\infty$ when $b'-c(v)<0$).
  \item $v\notin S$ but $v\in I$ (not bribed but already influenced). Then some bribed neighbour in $V_t$ must influence $v$. Any such completion restricts to a child completion with $(S',I',b')$ and the validity of $v\in I$ is exactly captured by the indicator $\mathbf{1}_{\{\exists u\in N(v)\cap X_{t'}:u\in S'\}}$ (or more generally, check for bribed neighbours in the processed part). By the inductive hypothesis, $DP[t',S',I',b']$ is the maximum achievable in the child; adding the one for $v$ when allowed gives the correct value.
  \item $v\notin S$ and $v\notin I$. Then $v$ contributes nothing and feasible completions correspond one-to-one with child completions having $(S',I',b')$, so
  \[
  DP[t,S,I,b'] = DP[t',S',I',b']
  \]
  is correct.
\end{enumerate}
Since these three cases are exhaustive and mutually exclusive, taking the maximum over them yields the correct optimal value for $DP[t,S,I,b']$.

\item Forget node. Suppose $t$ forgets vertex $v$ so $X_t = X_{t'}\setminus\{v\}$. Fix $(S,I,b')$ at $t$. Any feasible completion in $V_t$ arises from some feasible completion in $V_{t'}$ where $v$ had one of the three local statuses in the child bag: (i) bribed and influenced, (ii) not bribed but influenced, or (iii) not bribed and not influenced. These correspond precisely to the three child states
\[
(S\cup\{v\},I\cup\{v\},b'),\quad (S,I\cup\{v\},b'),\quad (S,I,b'),
\]
respectively. By the inductive hypothesis each child DP value is the maximum over completions in the child consistent with that child state, and since nothing additional is gained at the forget operation (vertices are already counted when influenced), the maximum of these three child values equals the maximum over all completions in $V_t$ consistent with $(S,I,b')$. Hence the recurrence
\[
DP[t,S,I,b'] \;=\;
\max\left\{
\begin{aligned}
   &DP[t',S\cup\{v\},I\cup\{v\},b'], \\
   &DP[t',S,I\cup\{v\},b'], \\
   &DP[t',S,I,b']
\end{aligned}
\right\}.
\]
is correct.

\item Join node. Suppose $t$ has two children $t_1,t_2$ with $X_{t_1}=X_{t_2}=X_t$. Fix $(S,I,b')$ at $t$. Any feasible completion in $V_t$ decomposes into completions in the two child subtrees whose bag-assignments agree with $(S,I)$ and whose budgets sum to $b'$. Each child DP (by the inductive hypothesis) equals the maximum number of influenced vertices in its subtree consistent with the child assignment. However, vertices in the bag $X_t$ (in particular the influenced set $I$) are counted in both child DP values, so we must subtract $|I|$ once to avoid double counting. Maximizing over all budget splits $b_1+b_2=b'$ yields:
\[
DP[t,S,I,b'] = \max_{b_1+b_2=b'} \big( DP[t_1,S,I,b_1] + DP[t_2,S,I,b_2] - |I| \big),
\]
which therefore gives the correct maximum for $V_t$.

\end{enumerate}
\medskip\noindent\textbf{Root.} Let $r$ be the root with bag $X_r$. The whole graph equals $V_r$. By the invariant, for each $(S,I,b')$ the value $DP[r,S,I,b']$ equals the maximum number of influenced vertices achievable in the whole graph under those bag constraints; hence taking
\[
\max_{S,I,\,b'\le b} DP[r,S,I,b']
\]
gives the optimal number of influenced vertices for the input instance. This completes the induction and correctness proof. Now it remains to check whether \(
\max_{S,I,\,b'\le b} DP[r,S,I,b'] \geq \frac{n}{2} + 1
\)
or not.

\paragraph{Reconstruction.}
Since we only have two candidates, every bribery corresponds to 
shifting our preferred candidate one position upwards in the vote. 
Thus, the shift vector is binary: for each voter $v$, 
\[
s_v \;=\;
\begin{cases}
1 & \text{if $v$ was bribed}, \\
0 & \text{otherwise}.
\end{cases}
\]
Using standard Dynamic programming techniques, we can track which vertices are bribed.
\end{proof}

\subsection{Proof of \Cref{twproof}}
\begin{proof}
Let $\tw := |X_t| \le \operatorname{tw}(\GG)+1$ denote the (maximum) bag size. For each node $t$ we store an entry for every triple $(S,I,b')$ where $S\subseteq X_t$, $I\subseteq X_t$, and $b'\in\{0,1,\dots,b\}$. Hence the number of states per node is at most
\(2^{\tw}\cdot 2^{\tw}\cdot (b+1) = 4^{\tw}(b+1).
\)
Since the treewidth of the input graph $\GG$ is at most $\tw$, 
it is possible to construct a data structure in time $\tw^{\OO(1)} \cdot n$ 
that allows performing adjacency queries in time $\OO(\tw)$. 
As we may assume without loss of generality that the number of nodes in the 
tree decomposition is $\OO(\tw \cdot n)$, and there are $n$ choices for the 
introduced vertex $v$, the running time of the algorithm is
\(
\OO\!\left(4^{\tw}\cdot n^{\OO(1)} \cdot b\right).
\)

Since we only have two candidates, every bribery corresponds to a unit shift, 
and hence each bribed voter contributes cost~1. Therefore $b$ can be bounded 
as follows:
\(
b \;\leq\; n-\ell,
\)
where $\ell$ is the number of voters already voting for the preferred 
candidate. Moreover, if the goal is to ensure victory of the preferred 
candidate, it suffices to consider
\[
b \;\leq\; \kappa \;=\; 
\max\Bigl\{0,\; \bigl\lceil\tfrac{n}{2}\bigr\rceil+1 - \ell\Bigr\},
\]
since $\kappa$ is exactly the number of additional votes required to reach a 
strict majority. 

Thus, the overall running time of our algorithm is
\(
\OO\!\bigl(4^{\tw}\cdot n^{\OO(1)} \cdot \kappa\bigr),
\)
where $\kappa \leq \lceil (n + 1)/2 \rceil$.
\end{proof}

\subsection{Proof of \Cref{cvd-fpt}}
\begin{proof}
Let $\GG=(V,E)$ be the underlying graph of the given instance of \SBON. 
Suppose that $X \subseteq V$ is a cluster vertex deletion set of size $k$, i.e.\ $\GG - X$ is a cluster graph. 
It is known that such a set $X$ can be computed in FPT time with respect to~$k$`\cite{boral2016fast}.

Since $k$ is constant, we can exhaustively try all subsets $Y \subseteq X$ that may be part of the bribery scheme. 
For each choice of $Y$, we compute the cost of bribing $Y$ and the gain obtained in terms of additional supporters of the preferred candidate. 
If the cost of bribing $Y$ already exceeds the budget, we discard this choice.

Now consider the residual instance on $\GG - X$, which is a cluster graph. 
The effect of bribing vertices in $Y$ is that some vertices of $\GG - X$ may already support the preferred candidate (either directly if bribed, or indirectly via propagation). 
Thus, the residual problem on $\GG - X$ is precisely an instance of \SBON on a cluster graph with updated initial supporters and reduced budget. 
This can be solved in pseudo-polynomial time $O(r \cdot b)$ using the dynamic programming algorithm from the previous theorem, where $r$ is the number of cliques in $\GG - X$ and $b$ is the remaining budget.

Since the number of subsets $Y \subseteq X$ is $2^k$ and $k$ is a constant, this branching contributes only a constant factor. 
Hence the overall running time is $O(2^k \cdot poly(r \cdot b)$. 
For constant $k$, this is polynomial in the input size. 
More generally, this shows that the problem is pseudo fixed-parameter tractable when parameterized by $k$.
\end{proof}

\subsection{Proof of \Cref{affected-voters-fpt}}
\begin{proof}
Let $n$ denote the number of voters and $\ell$ the number of voters who already rank the preferred candidate $c_2$ on top. 
The goal is to achieve a majority, i.e., to gain at least
\[
p \;=\; \Bigl\lceil \frac{n+1}{2} \Bigr\rceil - \ell
\]
additional supporters.

Consider the underlying network graph $G=(V,E)$, where each voter is a vertex and edges indicate influence. Bribing a voter allows her to support $c_2$ and potentially influence all of her neighbors. This can be naturally modeled as a \((k,t)\)-Dominating Set problem on $G$: we seek a set of at most $k$ vertices (the bribed voters) whose closed neighborhoods collectively dominate at least $t$ vertices, where and $k$ is constrained by the budget (or the maximum number of voters we are allowed to bribe).

It is known that \ktDSP is fixed-parameter tractable when parameterized by $t$~\cite{kneis2007partial}. By applying this result, we can efficiently select a set of voters to bribe so that at least $p$ additional voters are influenced. Therefore, \SBON is fixed-parameter tractable when parameterized by 
\(\lceil (n+1)/2 \rceil - \ell\) in general graphs with two candidates, uniform edge weights, and arbitrary bribery costs.
\end{proof}

\subsection{Proof of \Cref{w2hbudget}}
\begin{proof}
We give a parameterized reduction from \ktDSP, which is 
\WTH when parameterized by $k$.

Let $(\GG=(V,E),k,t= \left\lceil \frac{|V| + 1}{2} \right\rceil )$ be an instance of \ktDSP.
We construct an instance of \SBON as follows.
The social network is exactly $\GG$, with uniform edge weights equal to~$1$.
There are two candidates, denoted $c_1$ and $c_2$, where $c_2$ is the preferred candidate. Each voter $v \in \VV$ has the initial preference order $c_1 \succ c_2$, 
so candidate $1$ is last in every vote.
The bribery cost function is uniform: shifting candidate $1$ forward by
one position in any vote costs~$1$.
The budget is set to $b=k$.
The winning condition requires that candidate $1$ is shifted to the top
position in at least $t=\left\lceil \frac{|V| + 1}{2} \right\rceil$ votes after bribery 
and propagation.

Now suppose we bribe a set $D \subseteq V$ of voters, with $|D|\leq k$.
Each such voter immediately shifts candidate $1$ to the top.
Furthermore, since edge weights are uniform and equal to~$1$, every neighbor 
of a bribed vertex also shifts candidate $1$ to the top.
Thus, the set of voters supporting candidate $1$ after propagation is precisely
the dominated set $N[D]\cup D$ in $\GG$.
Therefore, at least $t$ voters support candidate $1$ if and only if 
$D$ dominates at least $t$ vertices in $\GG$.

Hence $(\GG,k,t)$ is a \yes-instance of \ktDSP if and only if the constructed
instance is a \yes-instance of \SBON.
The reduction is polynomial-time and parameter-preserving, as the budget
$b$ equals $k$ 
Therefore, \SBON is \WTH parameterized by the budget, even with two candidates, identity cost functions and uniform edge weights.
\end{proof}}


\end{document}